\tikzset{every state/.style={minimum size=0pt}}
\tikzset{
    position/.style args={#1:#2 from #3}{
	at=(#3.#1), anchor=#1+180, shift=(#1:#2)
    }
}
\theoremstyle{plain}
\newtheorem{theorem}{Theorem}[section]
\newtheorem{lemma}[theorem]{Lemma}
\newtheorem{proposition}[theorem]{Proposition}
\theoremstyle{definition}
\newtheorem{definition}[theorem]{Definition}
\theoremstyle{remark}
\newtheorem{remark}[theorem]{Remark}
\newtheorem{example}[theorem]{Example}
\newcommand{\eqdef}{\stackrel{\mbox{\begin{scriptsize}def\end{scriptsize}}}{=}}
\newcommand{\equivdef}{\mathrel{\stackrel{\mbox{\begin{scriptsize}def\end{scriptsize}}}{\Longleftrightarrow}}}
\newcommand{\obracew}[2]{{\overset{#2}{\overbrace{#1}}}}
\newcommand{\tuple}[1]{\langle #1 \rangle}
\newcommand{\step}[1]{\:{\xrightarrow{\!#1\!}}\:}
\newcommand{\relstep}[1]{\:{\xrightarrow{\!#1\!}_{\text{rel}}}\:}
\newcommand{\subword}{\preccurlyeq}               
\newcommand{\supword}{\succcurlyeq}               
\newcommand{\alphabet}{\mathop{\mathrm{alph}}\nolimits}
\newcommand{\post}{\mathop{\mathtt{Post}}\nolimits}
\newcommand{\Pre}{\mathop{\mathtt{Pre}}\nolimits}
\newcommand{\pr}{\mathop{\mathtt{pr}}\nolimits}
\newcommand{\rea}{\mathop{\mathtt{rea}}\nolimits}
\newcommand{\size}[1]{\mathopen{\|}#1\mathclose{\|}}
\newcommand{\upc}{\mathop{\uparrow}\nolimits}
\newcommand{\wri}{\mathop{\mathtt{wri}}\nolimits}
\newcommand{\Act}{\mathit{Act}}
\newcommand{\Conf}{\mathit{Conf}}
\renewcommand{\epsilon}{\varepsilon}
\renewcommand{\phi}{\varphi}
\renewcommand{\emptyset}{\varnothing}
\newcommand{\Nat}{\mathbb{N}}
\newcommand{\Rat}{\mathbb{Q}}
\newcommand{\bmkappa}{{\bm{\kappa}}}
\newcommand{\GaD}{\!\!{\implies}\!\!}
\newcommand{\DaG}{\!\!{\impliedby}\!\!}
\newcommand{\ttdol}{\texttt{\$}}
\newcommand{\ttf}{\texttt{f}}
\newcommand{\ttv}{\texttt{v}}
\newcommand{\ttx}{\texttt{x}}
\title{On flat lossy channel machines}
\author{Ph.\ Schnoebelen}
\date{}
\begin{document}

\maketitle

\begin{abstract}
We show that reachability, repeated reachability, nontermination and
unboundedness are $\NP$-complete for Lossy Channel Machines that are
\emph{flat}, i.e., with no nested cycles in the control graph.
The upper complexity bound relies on a fine analysis of iterations of lossy
channel actions and uses compressed word techniques for efficiently
reasoning with paths of exponential lengths.
The lower bounds already apply to acyclic or single-path machines.
\end{abstract}

\section{Introduction}
\label{sec-intro}

\noindent
\emph{Lossy channel machines,}
aka LCMs, are FIFO automata, i.e., finite-state machines operating on
buffers with FIFO read/write discipline, where the buffers are
\emph{unreliable}, or \emph{lossy}, in the sense that letters (or
``messages'') in a buffer can be lost nondeterministically at any
time.

LCMs were first introduced as a model for communication protocols
designed to work properly in unreliable environments. They immediately
attracted interest because, unlike FIFO automata with reliable
buffers, they have decidable safety and termination
problems~\cite{finkel94,abdulla96b,cece95,abdulla-forward-lcs}.  It
was later found that LCMs are a relevant computational model
\textit{per se}, useful for verifying timed
automata~\cite{abdulla-icalp05,lasota2008}, modal
logics~\cite{gabelaia06}, etc., and connected to other problems in
computer science~\cite{KS-msttocs,CS-fossacs10,schmitz-toct2016}.

\medskip

\noindent
\emph{Flat LCMs.}
In this paper we consider the case of \emph{flat LCMs}, i.e., LCMs
where the control graph has no nested cycles. In the area of
infinite-state systems verification, flat systems were first
considered in~\cite{fribourg97b,comon98b} for counter
systems\footnote{
   Flatness remains relevant with \emph{finite-state systems}, see
   e.g.,~\cite{kuhtz2011}. This is especially true when
   one is considering the verification of properties expressed in a
   rich logic as in, e.g.,~\cite{decker2017}. In language theory, flat
   finite-state automata correspond to regular languages of polynomial
   density, sometimes called \emph{sparse languages}, or also
   \emph{bounded languages}.
}. In addition, some earlier ``loop acceleration'' results,
e.g.~\cite{boigelot94}, where one can compute reachability sets along
a cycle, can often be generalised to flat systems.
Positive results on flat counter systems can be found
in~\cite{leroux2005b,bozga2009b,demri2010,bozga2014,leroux2014b,demri2015},
and in~\cite{ganty2015} for counter systems with recursive calls.
Regarding flat FIFO automata, verification was shown decidable by
Bouajjani and Habermehl~\cite{bouajjani99b} who improved on earlier
results by Boigelot~\cite{boigelot99b}, and the main verification
problems were only recently proven to be
$\NP$-complete~\cite{esparza2012,finkel2019}.
These results have applications beyond flat systems in the context of
\emph{bounded verification} techniques, where one analyses a bounded
subset of the runs of a general system~\cite{esparza2012}. \\

Flat LCMs have not been explicitly considered in the literature. They
are implicit in forward analysis methods based on loop acceleration,
starting with~\cite{abdulla-forward-lcs}, but these works do not
address the overall complexity of the verification problem, only the
complexity of elementary operations.

It is not clear whether one should expect flat LCMs to be simpler than
flat FIFO automata (on account of unrestricted LCMs being simpler than
the Turing powerful, unrestricted FIFO automata), or if they could be
more complex since message losses introduce some nondeterminism that
does not occur when one follows a fixed cycle in a FIFO
system. Indeed, message losses can be seen as hidden implicit loops
that disrupt the apparent flatness of the LCM.

\medskip

\noindent
\emph{Our contribution.}
We analyse the behaviour of the backward-reachability algorithm on
cycles of lossy channel actions and establish a bilinear upper bound
on its complexity. As a consequence, reachability along runs of the
form $\rho_1\sigma_1^*\rho_2\sigma_2^*\ldots\rho_m\sigma_m^*$ where
the $\rho_i,\sigma_i$ are sequences of channel actions, can be decided
in time $n^{O(m)}$. While shortest reachability witnesses can be
exponentially long when the number $m$ of cycles is not bounded,
techniques based on SLP-compressed words allow handling and checking
these witnesses in polynomial time, leading to an $\NP$ algorithm for
flat LCMs. This easily translates into $\NP$ algorithms for
nontermination, repeated reachability, and unboundedness, and in fact
all four problems are
$\NP$-complete.
Thus the restriction to flat systems really brings some
simplification when compared to the very high complexity ---sometimes
undecidability as is the case for unboundedness--- of verification for
unrestricted LCMs~\cite{CS-lics08,SS-icalp11,schmitz-toct2016}.

\begin{remark}[Lossy channel machines vs.\ lossy channel systems]
In line with most works on loop acceleration and verification of flat
systems, we consider lossy channel ``machines'' instead of the more
usual lossy channel ``systems'', i.e., systems where several
independent concurrent machines communicate via shared channels. This is
because a combination of individually flat machines does not lead to a
``flat'' system. Additionally, finite-state concurrent systems
typically have $\PSPACE$-hard verification problems already when they
have no channels and run synchronously, or when they only synchronise
via bounded channels that can hold at most one message~\cite{DLS-jcss-param}.
\qed
\end{remark}

\noindent
\emph{Outline.}
After some technical preliminaries (\Cref{sec-prelim}), we present our
main technical contribution (\Cref{sec-backward-reachability}): we
analyse the computation of predecessors (of some given configuration)
through a cycle iterated arbitrarily many times.  In particular we
show that the backward-reachability analysis of a single cycle reaches
its fixpoint after a bilinear number of iterations. This leads to an
effective bound on the length of the shortest runs between two
configurations.
In \Cref{sec-NP-via-SLP} we show how the previous analysis can be
turned into a nondeterministic polynomial-time algorithmic via the use
of SLP-compressed words for efficiently computing intermediary channel
contents along a run.
In \Cref{sec-nontermination-and-unboundedness} we show how our main
results also apply to termination, repeated reachability, and boundedness.
Finally \Cref{app-NP-hardness} presents reductions showing how the
problems we considered are $\NP$-hard, even for acyclic LCMs or
single-path LCMs.

\medskip

\noindent
\emph{Related work.}
After we circulated our draft proof, we became aware
that a related $\NP$-membership result will be found
in~\cite{finkel2019jv}. There the authors adapt the powerful technique
from~\cite{esparza2012} and encode front-lossy channel systems into
multi-head pushdown automata, from which an $\NP$-algorithm for
control-state reachability in flat machines ensue. Our approach is
lower level, providing a tight bilinear bound on the number of times a cycle
must be visited in the backward-reachability algorithm. Once these
bounds are established, our NP algorithm only needs to guess the
number of times each cycle is visited.

\section{Preliminaries}
\label{sec-prelim}

We consider words $u,v,w,x,y,z,\ldots$ over a finite alphabet
$\Sigma=\{a,b,\ldots\}$. We write $|u|$ for the length of a word and
$\epsilon$ for the empty word.  The set of letters that occur in $u$
is written $\alphabet(u)$.
For a $n$-letter word $u=a_1\cdots
a_n$ and some index $\ell\in\{0,1,\ldots,n\}$, we write
$u_{\leq\ell} \eqdef a_1\cdots
a_{\ell}$ and $u_{> \ell} \eqdef a_{\ell+1}\cdots a_n$ for the $\ell$-th
prefix and the $\ell$-th suffix of $u$. We write $u_{(\ell)}\eqdef
u_{> \ell}\cdot u_{\leq\ell}$ for the $\ell$-th \emph{cyclic shift} of
$u$.

Exponents are used to denote the concatenation of multiple copies of a
same word, i.e., $u^3$ denotes $u\:u\:u$. A \emph{fractional exponent}
$p\in\Rat$ can be used for $u^p$ if $p\cdot|u|$ is a natural
number. E.g., when $a,b,c$ are letters, $(abc)^{\frac{11}{3}}$, or
equivalently $(abc)^{3+\frac{2}{3}}$, denotes $abc\:abc\:abc\:ab$.

We write $u\subword v$ to denote that $u$ is a (scattered) subword of
$v$, i.e., there exist $2m+1$ words
$u_1,\ldots,u_m,x_0,x_1,\ldots,x_m$ such that $u=u_1\cdots u_m$ and
$v=x_0u_1x_1\ldots u_mx_m$. It is well-known that $\subword$ is a
well-founded partial ordering. For a word $x\in\Sigma^*$, we write
$\upc x\eqdef\{y\in\Sigma^*~|x\subword y\}$ to denote the
\emph{upward-closure} of $x$, i.e., the set of all words that contain
$x$ as a (scattered) subword.

\medskip

\noindent
\emph{LCMs.}
In this paper we consider channel machines with a single communication
channel\footnote{See \Cref{app-multiple-channels} for a generalisation
of our results to multi-channel machines.}.
A \emph{lossy channel machine} (LCM) is a tuple
$S=\tuple{Q,\Sigma,\Delta}$ where $Q=\{p,q,\ldots\}$ is a finite set
of \emph{control locations}, or just ``locations'',
$\Sigma=\{a,b,\ldots\}$ is the finite \emph{message alphabet}, and
$\Delta\subseteq Q\times (\{!,?\}\times\Sigma^*) \times Q$ is a finite
set of  \emph{transition rules}. A rule $\delta=\tuple{q,(d,w),q'}$ has
a start location $q$, an end location $q'$ and a \emph{channel action}
$(d,w)$.  We write $\Act_\Sigma\eqdef \{!,?\}\times\Sigma^*$ for the
set of channel actions over $\Sigma$, and often omit the $\Sigma$
subscript when it can be inferred from the context. We use
$\theta,\theta',\ldots$ to denote actions and $\sigma,\rho,\ldots$ to
denote sequences of channel actions.

We'll constantly refer to the \emph{written part} and the \emph{read
part} of some channel action (or sequence of such). These are formally
defined via
\begin{xalignat}{3}
\label{eq-def-wri-rea}
\begin{aligned}
\wri(!w) &\eqdef w              \:,
\\
\rea(!w) &\eqdef \epsilon       \:,
\end{aligned}
&&
\begin{aligned}
\wri(?w) &\eqdef \epsilon       \:,
\\
\rea(?w) &\eqdef w              \:,
\end{aligned}
&&
\begin{aligned}
\wri(\theta_1\cdots\theta_m) &\eqdef \wri(\theta_1) \cdots\wri(\theta_m) \:,
\\
\rea(\theta_1\cdots\theta_m) &\eqdef \rea(\theta_1) \cdots\rea(\theta_m) \:.
\end{aligned}
\end{xalignat}

\medskip

\noindent
\emph{Semantics.}
The operational semantics of LCMs is given via transition systems. Fix
some LCM $S=\tuple{Q,\Sigma,\Delta}$.  Actions in $\Act_\Sigma$ induce
a ternary relation $\step{}\subseteq
\Sigma^*\times\Act_\Sigma\times \Sigma^*$ on channel contents:
\begin{xalignat}{2}
\label{eq-def-sem-actions}
  x\step{!\:w}y
& \;\equivdef\; y \subword x \: w \:,
&
  x\step{?\:w}y
& \;\equivdef\; w \: y \subword x \:.
\end{xalignat}
Observe how \Cref{eq-def-sem-actions} includes the subword relation in
the definition of the operational semantics. This models the fact that
messages in the channel can be lost nondeterministically during any
single computation step. A consequence is the following monotonicity
property: if $x'\supword x$ and $y\supword y'$ then $x\step{\theta}y$
implies $x'\step{\theta}y'$.

A \emph{configuration} of $S$
is a pair $c=(q,x)\in Q\times\Sigma^*$ that denotes a current
situation where the control
of $S$ is set at  $q$ while the contents of the channel is $x$. We let
$\Conf_S\eqdef Q\times\Sigma^*$ denote the set of configurations.
The set of rules $\Delta$ induces a labelled transition relation
$\step{}\subseteq\Conf_S\times\Delta\times\Conf_s$
between configurations defined by
\begin{equation}
\label{eq-def-sem-lcms}
(q,x)\step{\delta}(q',y)
\;\equivdef\;
\delta\in\Delta \text{ has the form } \tuple{q,\theta,q'} \text{ and } x\step{\theta}y
\:.
\end{equation}
Several convenient notations are derived from the main transition
relation: we write $c\step{\theta}c'$ when $c\step{\delta}c'$ for a
rule $\delta$ that carries action $\theta$. When
$\sigma=\theta_1\theta_2\cdots\theta_m$ is a sequence of actions, we
write $c \step{\sigma} c'$ when there is a sequence of steps $c_0
\step{\theta_1}c_1 \step{\theta_2} c_2 \cdots \step{\theta_m} c_m$
with $c_0=c$ and $c_m=c'$.  Then $c\step{*}c'$ means that $c
\step{\sigma} c'$ for some sequence $\sigma$. Similar notations, e.g.,
``$x\step{\theta_1\:\theta_2}y$'' or ``$x\step{\theta^*}y$'', are used
for channel contents. In fact, since we shall mostly consider fixed
paths, or paths of a fixed shape, we will usually concentrate on the
channel contents and leave the visited locations implicit.

\medskip

\noindent
\emph{Flat LCMs.}
An elementary cycle of length $m$ in a LCM $S$ is a non-empty set
$C=\{\tuple{p_i,\theta_i,q_i}~|i=1,\ldots,m\}$ of rules from $\Delta$
such that $q_m=p_1$ and $p_i=q_{i-1}$ when $2\leq i\leq m$, and such
that the $p_i$'s are all distinct. A cycle of length 1 is a
\emph{self-loop}. The set $\{p_1,\ldots,p_m\}$ is the set of locations
\emph{visited} by $C$.  Note that two distinct cycles may have the
same visited set if they use different transition rules.

We say that $S$ is \emph{flat} if no  control location
is visited by two different elementary cycles.
An extreme case of flat machines are the
machines having no cycles whatsoever, called \emph{acyclic}
machines.\footnote{In the finite-automata literature, ``acyclic automata''
sometimes allow self-loops.}

When $S$ is flat, there is (at most) one cycle around any location $q$
and we write $\sigma_q$ for the sequence of actions along this cycle,
making sure that $\sigma_q$ starts with the action leaving $q$ (so
that if $q,q'$ are two locations visited by the same cycle,
$\sigma_{q'}$ will be a cyclic shift of $\sigma_q$).  When there is no
cycle visiting $q$ we let $\sigma_q=\epsilon$ by convention.

\medskip

\noindent
\emph{$\NP$-hardness.}
It is known that reachability and other verification problems are
$\NP$-hard for (reliable) FIFO automata:
see~\cite[App.~C]{esparza2012} and~\cite{finkel2019}. We strengthen
these results in \Cref{app-NP-hardness} with the following theorems
that cover reliable and unreliable channels indifferently.

\begin{theorem}[Hardness for acyclic channel machines]
\label{thm-hard-acyclic}
Reachability, nontermination and unboundedness are $\NP$-hard for
\emph{acyclic} channel machines, with reliable or with unreliable
channels. Hardness already holds for a single channel and a binary
alphabet. It also holds for a unary alphabet (i.e., for acyclic
VASSes, reliable or lossy) provided one allows several channels (or
counters).
\end{theorem}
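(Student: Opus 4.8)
The plan is to establish $\NP$-hardness via a reduction from a classic $\NP$-complete problem whose natural solution structure can be encoded as an acyclic run of a channel machine. The most convenient choices here are either \textsc{3-SAT} or \textsc{Subset-Sum}; given the stated corollary about unary alphabets reducing to VASSes, \textsc{Subset-Sum} is the more promising source for the unary/counter case, while \textsc{3-SAT} gives the cleanest binary-alphabet encoding. I would treat the three target problems (reachability, nontermination, unboundedness) together: the core is a reachability gadget, and nontermination/unboundedness follow by attaching a trivial always-enabled self-loop (or an always-enabled write) at the designated target location, so that the machine can diverge (or grow the channel without bound) if and only if the target is reachable.

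For the binary-alphabet reachability reduction, I would build an acyclic machine that ``guesses'' a truth assignment by a linear sequence of diamond-shaped branches --- one diamond per variable, each committing to true or false by writing a distinguished marker onto the channel --- and then verifies the formula. Since the control graph must be acyclic, the verification cannot loop over clauses; instead I would lay the clauses out as a fixed linear cascade of locations $\ell_0 \step{} \ell_1 \step{} \cdots$, where passing from $\ell_{j-1}$ to $\ell_j$ requires reading from the channel a witness that clause $C_j$ is satisfied. The key design problem is that channel contents are consumed in FIFO order and, in the lossy case, can be dropped arbitrarily; I would therefore encode the assignment so that each clause-check reads (and thereby certifies) at least one satisfying literal regardless of losses. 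A robust way to do this is to have the assignment phase write, for every variable, a block encoding its chosen value, and arrange the clause-check transitions so that the target is reachable exactly when every clause has a literal whose block survived and matches --- using the subword semantics of \Cref{eq-def-sem-actions} to our advantage rather than against us.

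The subtle point, and the main obstacle, is precisely \emph{lossiness}: I must make the reduction work uniformly for reliable and unreliable channels, so the construction cannot rely on the channel preserving its contents. The danger is twofold. First, losses must not create spurious accepting runs (soundness): I would ensure this by making the clause-check reads demand positive evidence of satisfaction, so dropping messages can only make acceptance harder, never easier. Second, losses must not destroy all accepting runs (completeness): since $x\step{\theta}y$ holds whenever the reliable step does (losses are optional, by the monotonicity remark following \Cref{eq-def-sem-actions}), any run valid on a reliable channel remains valid on a lossy one, so completeness for the lossy case follows for free from the reliable construction. Thus the whole difficulty concentrates on soundness in the lossy model, i.e., arguing that no message-dropping behaviour lets an unsatisfiable instance reach the target.

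For the unary-alphabet statement I would switch to \textsc{Subset-Sum} and multiple channels (equivalently, an acyclic VASS with several counters): one channel per target-sum ``digit'' or one channel accumulating the chosen subset's total, with each item handled by a single diamond that either adds its weight (a burst of writes) or skips it, followed by a final check that drains exactly the target. Here unary channels behave like counters under reads/writes, and lossiness corresponds to the standard fact that a lossy counter can only decrease nondeterministically; I would again phrase the final equality check as a pair of inequalities so that lossy decrements cannot manufacture a false positive, and invoke the same monotonicity argument for completeness. The hardness for nontermination and unboundedness is then inherited from reachability by the self-loop/perpetual-write gadget described above, completing all cases of the theorem.
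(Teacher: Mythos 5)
Your high-level plan --- guess an assignment, check the clauses along an acyclic cascade, get lossy completeness for free from monotonicity, and reduce nontermination/unboundedness to reachability by a self-loop at the target --- coincides with the paper's (\Cref{app-nphard-cycle-free}), but the proposal stops exactly where the real work begins, and the two mechanisms that make the paper's reduction sound are missing. First, since reads consume the channel in FIFO order, each clause check must \emph{write the valuation back while reading it} (transitions $?0\:!0$, $?1\:!1$, $?\ttdol\:!\ttdol$ as in \Cref{fig-reduction-SAT2LCM}); writing one block per variable once, as you propose, cannot serve $m$ clauses, because the first check consumes the encoding and later checks are left with nothing consistent to read --- this already breaks the \emph{reliable} case. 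Second, your soundness principle for the lossy case (``demand positive evidence, so losses only make acceptance harder'') is simply false for positional encodings: if a symbol is lost, all subsequent reads are mis-aligned, and the machine can read the value of $v_{i+1}$ while believing it is reading $v_i$ --- that is precisely how losses manufacture spurious evidence. The paper's fix is a conservation argument you have no substitute for: every checking transition reads one symbol and writes one, so the channel length equals $n+1$ minus the number of losses; a final phase then performs a \emph{pure read} of $n+1$ symbols (the whole valuation plus the marker $\ttdol$), which can succeed only if no loss ever occurred. This reduces lossy soundness to reliable soundness, where positional alignment is guaranteed.

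The unary-alphabet part of your plan has a more decisive flaw. Reducing from \textsc{Subset-Sum} fails twice: (i) on size --- weights are given in binary, so ``a burst of writes'' realising weight $w_i$ on a unary channel has length $w_i$, exponential in the input, while switching to unary-encoded \textsc{Subset-Sum} makes the source problem polynomial-time solvable, so the reduction would prove nothing; and (ii) on lossy soundness --- with lossy counters an upper-bound test cannot be enforced: a subset with sum strictly greater than $T$ can simply lose the excess and then pass any ``drain $T$ and end at zero'' check, so your ``pair of inequalities'' does not exclude false positives (and a genuine $\leq T$ test is unavailable anyway, since VASS-style counters have no zero test). The paper avoids all of this by keeping the $\SAT$ reduction in the unary case: it uses $2n$ counters, one per literal, each holding at most one token; evidence that a literal is true is the \emph{presence} of its token, which losses can destroy but never create, so no counter arithmetic and no upper-bound test is ever needed.
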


$\NP$-hardness for acyclic machines uses the nondeterminism allowed in
channel machines. It is thus interesting to consider
\emph{single-path} machines where the control graph is a single line
possibly carrying cycles on some locations, as is done
in~\cite{kuhtz2011} or~\cite{demri2015}.  In such a machine,
nondeterminism only occurs in choosing how many times a cycle is
visited (and what messages are lost in unreliable systems). This is equivalent to considering reachability (or
nontermination or unboundedness) \emph{along a given bounded path scheme
of the form $q_1C_1^*q_2C_2^*\cdots q_mC_m^*$}.

\begin{theorem}[Hardness for single-path channel machines]
\label{thm-hard-single-path}
Reachability, nontermination and unboundedness are $\NP$-hard for
\emph{single-path} channel machines, with reliable or with unreliable
channels. Hardness already holds for a single channel. It also holds
for single-path VASSes, reliable or lossy, provided one allows several
counters.
\end{theorem}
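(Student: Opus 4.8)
The plan is to reduce from two standard $\NP$-complete problems---\textsc{3-Sat} for the single-channel claim and \textsc{Subset-Sum} for the multi-counter claim---building in each case a single-path machine whose only nondeterminism is the number of times each cycle $C_i$ is taken (plus message/token losses in the unreliable setting). The key design principle, which lets me cover reliable and lossy channels uniformly, is that every reliable step is already a lossy step: from \Cref{eq-def-sem-actions}, $x\step{!w}y$ holds for $y=xw$, which is a special case of $y\subword xw$, and similarly for reads. Hence any reliable accepting run is also a lossy accepting run, so it suffices to engineer the target so that in the \emph{lossy} semantics as well, a run reaches it \emph{only} on ``yes'' instances; losses will then never help. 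I will achieve this by the classical monotonicity argument: the acceptance condition forces the channel to still \emph{contain} specific witness content (every read in the verification phase must succeed), and by the monotonicity property stated after \Cref{eq-def-sem-actions} a lost letter can only make a subsequent read fail, never succeed.

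For the single-channel reduction from \textsc{3-Sat} (variables $x_1,\dots,x_n$, clauses $c_1,\dots,c_m$), I follow a path scheme $q_1C_1^*q_2C_2^*\cdots q_rC_r^*$ split into a \emph{selection} phase and a \emph{verification} phase. A short initialisation segment writes one marker $\#_i$ per variable together with $m$ clause slots on the channel. For variable $x_i$ I use two consecutive marker-guarded cycles, each iteration of which must read one $\#_i$; since a single $\#_i$ is available, their combined number of iterations is forced into $\{0,1\}$ (robustly, as losing $\#_i$ only lowers this to $0$). Taking the first cycle once encodes $x_i=\text{true}$ and appends a satisfaction token into the slot of every clause containing the literal $x_i$; taking the second once encodes $x_i=\text{false}$ and writes tokens for the clauses containing $\neg x_i$; taking neither leaves a slot starved and will block verification. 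The verification phase then, slot by slot, must read at least one satisfaction token per clause, so the final location $q_r$ is reachable iff every clause is satisfied, i.e.\ iff the formula is satisfiable. Crucially, all these cycles sit at distinct locations along one line, so they are \emph{sequential, never nested}: the machine is single-path (hence flat), and acyclicity for \Cref{thm-hard-acyclic} is obtained by the same idea with the selection cycles replaced by branch-free nondeterministic sub-paths.

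For the multi-counter (VASS) claim I reduce from \textsc{Subset-Sum} with inputs $a_1,\dots,a_n$ and target $t$, using counters $\mathit{sum},b_1,\dots,b_n$ initialised to $0,1,\dots,1$ and a path $C_1^*\cdots C_n^*$ where each iteration of $C_i$ performs $\mathit{sum}\mathrel{+}= a_i$ and $b_i\mathrel{-}= 1$. Counter nonnegativity forces each $C_i$ to run at most once, so reaching $\mathit{sum}=t$ is equivalent to the existence of a subset summing to $t$; a single counter would not suffice because one-dimensional reachability along a path scheme is solvable in polynomial time, which is exactly why the statement asks for several counters. The lossy case needs the loss-robust twist: a spontaneous decrement of $\mathit{sum}$ would let ``at-least-$t$'' masquerade as ``exactly-$t$''. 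I therefore carry a complementary counter $\mathit{cosum}$ with $C_i$ also doing $\mathit{cosum}\mathrel{+}=(A-a_i)$ for a fixed constant $A>\max_i a_i$, together with the $b_i$'s, and make the final check require both $\mathit{sum}=t$ and $\mathit{cosum}$ at its maximal loss-free value; since every loss strictly decreases one of these quantities, reaching both prescribed values forces a loss-free run, collapsing the lossy problem onto the reliable one.

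Finally, the nontermination and unboundedness variants follow by appending, at the (unique) success location reached on ``yes'' instances, a self-loop that is only enabled there: for nontermination a loop with a trivially firable action, and for unboundedness a loop that strictly grows the channel (writes a fresh letter) or strictly increments a dedicated counter. This preserves single-pathness and yields, in each case, an infinite (resp.\ unbounded) run iff the success location is reachable iff the source instance is positive. The hard part will be the single-channel construction: because a FIFO read transition matches only a \emph{fixed} word, the per-clause slots must be laid out and consumed in a consistent cyclic order so that selection cycles occurring at different times write into the correct, later-read positions, and one must verify that message losses strictly damage this layout (a missing token can only starve a slot) so that the ``only on yes instances'' direction goes through in the lossy semantics as well.
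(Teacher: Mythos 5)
Your proposal defers exactly the step that constitutes the actual content of the paper's proof, and the sketch you give in its place does not work. You write that selection cycles ``append a satisfaction token into the slot of every clause containing the literal'' --- but in a FIFO channel a write always lands at the tail, so after the selection phase the tokens sit on the channel ordered \emph{by variable}, while your verification phase needs to consume them \emph{by clause}. Neither semantics lets you reorder: reliable reads must match the head exactly, and even lossy reads only let you skip (lose) letters, so $t_1t_2\nsubword t_2t_1$ --- if the token for clause $c_1$ was written after the token for clause $c_2$, your verification blocks on a yes-instance. You acknowledge this (``the hard part will be the single-channel construction\dots'') but never resolve it, and resolving it is nontrivial: the paper's construction keeps the entire valuation \emph{circulating} on the channel (each cycle reads the encoding $\ttv_1b_1\cdots\ttv_nb_n$ from the front and writes it back, so iteration counts of cycles, not branching, encode the truth values), inserts a checkmark $\ttx$ adjacent to a satisfying literal, and then uses $2n$ extra lines of swap cycles ($?a\ttx\:!\ttx a$) to bubble the checkmark to the head of the circulating word where a single fixed read can test it. This ``bubbling'' machinery, and the argument that losses can only starve (never fake) a checkmark, is what makes the single-channel, single-path reduction go through; your proposal contains no substitute for it.

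Two further points fail as stated. First, your lossy-VASS fix is ill-defined: with $C_i$ adding $a_i$ to $\mathit{sum}$ and $A-a_i$ to $\mathit{cosum}$, the ``maximal loss-free value'' of $\mathit{cosum}$ is $|S|A-t$, which depends on the cardinality of the chosen subset, so it is not a fixed reachability target (the standard repair is a complementary pair of cycles per item, one adding $a_i$ to $\mathit{sum}$, the other adding $a_i$ to $\mathit{cosum}$, with the final check $\mathit{sum}=t\land \mathit{cosum}=\sum_ia_i-t$; the paper instead just reuses its SAT reduction with $2n+m$ counters). Second, your nontermination claim ``infinite run iff the success location is reachable'' is not automatic for lossy machines: any intermediate cycle that reads a letter and writes one back can loop forever once enough letters are lost, creating spurious infinite runs on no-instances. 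The paper has to introduce a two-colour alternation of the alphabet (each phase reads one colour and writes the other) precisely to kill such runs before adding the loop at the final location; since the circulating-valuation cycles that any working single-channel construction seems to need are exactly of this read-and-write-back form, your reduction would inherit this problem and needs the same repair.
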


The above $\NP$-hardness does not apply to bounded path schemes
\emph{with a fixed number of cycles} and indeed we show in
\Cref{sec-backward-reachability} that reachability along path schemes
with $m$ cycles can be verified in polynomial-time $n^{O(m)}$.

\section{Backward reachability in flat LCMs}
\label{sec-backward-reachability}

In this section we consider a generic flat single-channel LCM $S$ with
channel alphabet $\Sigma$ and investigate the complexity of
backward-reachability analysis.

\subsection{Computing predecessors}
\label{ssec-def-pr}

The classical approach to deciding reachability in LCMs is the
backward-reachability algorithm proposed by Abdulla and Jonsson. They
first developed it for lossy channel systems~\cite{abdulla96b}
before generalising it to the larger class of
Well-Structured Systems~\cite{abdulla2000c,finkel98b}.

For backward reachability, we write $\Pre[\sigma](x)$ for
$\{y\in\Sigma^*~|~y\step{\sigma}x\}$, the set of $\sigma$-predecessors
of $x$, and $\Pre[\sigma](\upc x)$ for $\{y\in\Sigma^*~|~\exists
x'\in\upc x:y\step{\sigma}x'\}$, the set of $\sigma$-predecessors of
$x$ ``and larger contents''. A consequence of the monotonicity of
steps is that $\Pre[\sigma](\upc x)$ is upward-closed set and, unless
$\sigma$ is the empty sequence, coincides with $\Pre[\sigma](x)$.

\begin{definition}[{$\pr[\sigma](x)$}]
\label{def-pr}
For a channel contents $x\in\Sigma^*$ and a sequence $\sigma$ of
channel actions, we write $\pr[\sigma](x)=y$ when $\Pre[\sigma](\upc
x)=\upc y$.
\end{definition}
In the case of lossy channels, $\Pre[\sigma](\upc x)$ always has a
single minimal element, hence $\pr[\sigma](x)$ is always defined.  We
now explain how to compute it.  \\

For two words $x,v$ we define $x/v$ as the prefix of $x$ that remains
when we remove from $x$ its longest suffix that is a subword of
$v$. This operation is always defined and can be computed using the
following rules where $a,b$ are letters:
\begin{xalignat}{3}
\label{eq-def-residual}
x / \epsilon &= x			\:,
&
\epsilon / v &= \epsilon		\:,
&
(x\:a) / (v\:b) &= \begin{cases} x / v    &\text{if $a=b$,}\\
			    (x\:a) / v & \text{if $a\neq b$}.
		\end{cases}
\end{xalignat}
This immediately entails $(x/v)/v' = x / (v' \: v)$. We'll also use
the following
properties:
\begin{xalignat}{2}
\label{eq-useful}
&\text{if } x'/v\neq\epsilon \text{ then } x(x'/v) = (x \: x')/v    \:,
&
&\text{if } |x'|>|v| \text{ then } x'/v \neq\epsilon                \:.
\end{xalignat}

We may now compute $\pr[\sigma](x)$ with:
\begin{equation}
\label{eq-pr-def}
\begin{aligned}
\pr[?u](x) &= u\cdot x				    \:,
&
\pr[!v](x) &= x/v				    \:,
\\
\pr[\epsilon](x) &= x						\:,
&
\;\;\;\;\;\;\;\;\; 
\pr[\sigma_1\cdot\sigma_2](x) &= \pr[\sigma_1]\bigl(\pr[\sigma_2](x)\bigr) \:.
\end{aligned}
\end{equation}

W.r.t.\ subword ordering, the $/$ operation is monotonic in its first
argument and contramonotonic in the second : $u\subword u'$ implies
$u/v\subword u'/v$ and $x/u\supword x/u'$.  Concatenation too is
monotonic. This generalises to the following useful lemma:
\begin{lemma}
\label{lem-pre-mono}
Assume $\pr[\sigma](x)=y$ and $\pr[\sigma](x')= y'$
where $\sigma$ is some sequence of actions. Then $x\subword x'$ implies
$y\subword y'$.
\end{lemma}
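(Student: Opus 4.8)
The plan is to proceed by induction on the number of single channel actions in $\sigma$, using the recursive definition of $\pr$ from~\eqref{eq-pr-def} together with the two elementary monotonicity facts recorded just above the lemma: that concatenation is monotonic and that the residual operation $/$ is monotonic in its first argument. The key observation that makes the argument go through cleanly is that the ``second'' operand at each elementary step---the word $v$ in a write $!v$, or the word $u$ prepended in a read $?u$---is fixed by $\sigma$ and is therefore the same for both $x$ and $x'$. Consequently the contramonotonicity of $/$ in its second argument never comes into play; only first-argument monotonicity is used, and it always acts on the channel contents that is threaded through the composition.

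First I would settle the base and unit cases. If $\sigma=\epsilon$ then $y=x\subword x'=y'$ immediately. If $\sigma$ is a single read $?u$, then $y=u\cdot x$ and $y'=u\cdot x'$, so $x\subword x'$ gives $u\cdot x\subword u\cdot x'$ by monotonicity of concatenation. If $\sigma$ is a single write $!v$, then $y=x/v$ and $y'=x'/v$, and $x\subword x'$ yields $x/v\subword x'/v$ by monotonicity of $/$ in its first argument. These three cases establish the claim whenever $\sigma$ has length at most one.

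For the inductive step I would factor $\sigma=\theta\cdot\sigma'$ with $\theta$ a single action and $\sigma'$ strictly shorter, and invoke the composition rule $\pr[\theta\cdot\sigma'](x)=\pr[\theta]\bigl(\pr[\sigma'](x)\bigr)$. Writing $z=\pr[\sigma'](x)$ and $z'=\pr[\sigma'](x')$, the induction hypothesis applied to $\sigma'$ turns $x\subword x'$ into $z\subword z'$; then the single-action case applied to $\theta$ turns $z\subword z'$ into $\pr[\theta](z)\subword\pr[\theta](z')$, which is exactly $y\subword y'$.

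I do not expect a genuine obstacle here: the entire content of the lemma is that $\pr[\sigma]$ is a finite composition of concatenations and residual operations, each monotone in the threaded argument, and monotonicity is stable under composition. The only points that deserve minor care are, first, that $\pr[\sigma](x)$ is everywhere defined---already guaranteed in the lossy setting, as noted after \Cref{def-pr}, so that the objects $y$ and $y'$ in the statement genuinely exist---and, second, that the induction be organised over the single-action decomposition $\theta\cdot\sigma'$ rather than an arbitrary factorisation $\sigma_1\cdot\sigma_2$, which keeps the inductive hypothesis at the right level of generality.
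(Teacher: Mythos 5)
Your proposal is correct and follows essentially the same route as the paper, which proves the lemma ``by induction on $\sigma$, using \cref{eq-def-residual,eq-pr-def}''---your write-up merely spells out the base cases (via monotonicity of concatenation and of $/$ in its first argument, both stated just before the lemma) and the single-action decomposition $\sigma=\theta\cdot\sigma'$ that the paper leaves implicit. Nothing further is needed.
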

\begin{proof}
By induction on $\sigma$, using \cref{eq-def-residual,eq-pr-def}.
\end{proof}

\subsection{Cycles: repeating a given sequence of actions}

We now focus on computing $\pr[\sigma^k](x)$ for $\sigma$ a sequence
of actions and some $k\in\Nat$.

Without any loss of generality, $\sigma$ can be written in the general
form $?a_1\:!b_1\:?a_2\:!b_2\cdots?a_r\:!b_r$ where each $a_i$ and
$b_i$ is a letter or the empty word $\epsilon$.  Then
$\rea(\sigma)=a_1a_2\cdots a_r$ and $\wri(\sigma)=b_1b_2\cdots b_r$.

To fix notation, we define ``\emph{the small-step sequence for
$\pr[\sigma](x)$}'', or just ``\emph{the SSS}'', as the sequence $y_r$, $y'_{r}$, $y_{r-1}$,
$y'_{r-1}$, $\ldots$, $y_1$, $y'_1$, $y_0$ of $2r+1$ words given by
\begin{xalignat}{3}
\label{eq-compute-yi}
y_r &= x		 \:,
&
y'_i &= y_i/b_i	      \:,
&
y_{i-1} &= a_i \: y'_{i}	   \:.
\end{xalignat}
Clearly, the SSS lists all the intermediary steps
in the computation of $\pr[\sigma](x)$ as dictated by
\cref{eq-pr-def}, and thus
it yields $y_0=\pr[\sigma](x)$.
\\

Our first lemma handles the special case where $x$ is made of copies of $\rea(\sigma)$.
\begin{lemma}
\label{lem-up-sigma-um}
Let $u=\rea(\sigma)$.\\
(i) If $x$ is a fractional power $u^p$ of $u$, then $y=\pr[\sigma](x)$
is also a fractional power of $u$, written $y=u^{m}$. \\
(ii) Furthermore, if $m>1$, then $\pr[\sigma](u^{p+n})=u^{m+n}$ for
all $n\in\Nat$.\\
(iii) Finally, for all $n\in\Nat$, if $m>n+1$, then $\pr[\sigma](u^{p-n})=u^{m-n}$.
\end{lemma}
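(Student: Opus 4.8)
The plan is to work entirely with the small-step sequence (SSS) of \cref{eq-compute-yi} and to isolate one clean formula that governs the generic behaviour of $\pr[\sigma]$ on long inputs. Writing $u=\rea(\sigma)$ and $w=\wri(\sigma)=b_1\cdots b_r$, the first step is to prove by downward induction on $i\in\{r,\dots,0\}$ that, as long as the residual $x/(b_i\cdots b_r)$ is nonempty, one has $y_i=(a_{i+1}\cdots a_r)\cdot\bigl(x/(b_i\cdots b_r)\bigr)$. The inductive step pushes the prepended letter $a_i$ past the residual using the identity $z\,(x'/v)=(z\,x')/v$, valid when $x'/v\neq\epsilon$ (\cref{eq-useful}), together with $(x/v)/v'=x/(v'v)$. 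Since the residuals $x/(b_j\cdots b_r)$ only grow as $j$ increases, the single hypothesis $x/w\neq\epsilon$ suffices for the whole induction, and at $i=0$ it yields the \emph{clean formula} $\pr[\sigma](x)=u\cdot(x/w)$.

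For part (i), I would apply this with $x=u^p$. The residual $u^p/w$ deletes a trailing block of $u^p$ of some length $\ell\le|w|$; since $u^p$ is a prefix of $u^{\omega}$, deleting $\ell$ trailing letters leaves the prefix $u^{p-\ell/r}$, so whenever $u^p/w\neq\epsilon$ the clean formula gives $y=u\cdot u^{p-\ell/r}=u^{m}$ with $m=p+1-\ell/r$, a fractional power as required. It then remains to treat the degenerate case where the residual consumes all of $u^p$ (the reads ``wrap around''). Here I would run a direct downward induction on the SSS with the cyclic invariant $y_i=(u_{(i)})^{q_i}$, where $u_{(i)}=u_{>i}\,u_{\le i}$ is the $i$-th cyclic shift: the base case is $y_r=u^p=(u_{(r)})^p$, and the inductive step checks that removing a possible trailing $b_i$ and then prepending $a_i$ sends a fractional power of $u_{(i)}$ to a fractional power of $u_{(i-1)}$, exploiting $a_i\,u_{(i)}=u_{(i-1)}\,a_i$. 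Taking $i=0$ gives $y_0=u^{q_0}$, so $y$ is a fractional power of $u$ in all cases.

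For parts (ii) and (iii) the key observation is that the deletion length $\ell$ is \emph{stable}: the recursive definition \cref{eq-def-residual} scans $w$ from the right and inspects only the last $\ell+1$ letters of its first argument, so $u^p/w$ is determined by a bounded suffix of $u^p$. Because $u^{p+n}$ and $u^{p-n}$ (for integer $n$) share the same fractional alignment as $u^p$, their relevant suffixes coincide with that of $u^p$ as soon as the word is long enough, and then the clean formula yields $\pr[\sigma](u^{p\pm n})=u^{\,p\pm n+1-\ell/r}=u^{m\pm n}$. The ``long enough'' requirement is exactly the stated hypothesis: since $pr-\ell=(m-1)r$ is an integer, $m>1$ is equivalent to $pr\ge\ell+1$, which guarantees that the stable residual applies to every $u^{p+n}$ with $n\ge0$ and stays nonempty; likewise $m>n+1$ is equivalent to $(p-n)r\ge\ell+1$, which is precisely what (iii) needs (and incidentally forces $p-n>0$, so $u^{p-n}$ is a legitimate fractional power).

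The main obstacle is the degenerate case of (i): once the residual erases all of $x$ the clean formula is unavailable and the prepended reads genuinely wrap around the cycle, so one must track the cyclic offset through the SSS and verify single-step preservation of $y_i=(u_{(i)})^{q_i}$ against the fractional alignment of the powers — the case analysis (whether the current trailing letter matches $b_i$, and where the fractional power happens to end) is where the real bookkeeping lies. The subsidiary fact that $x/w$ depends only on a bounded suffix of $x$, read off from \cref{eq-def-residual}, is what makes the stability argument for (ii) and (iii) rigorous rather than merely plausible.
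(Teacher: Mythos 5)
Your strategy is genuinely different from the paper's in how it gets (ii) and (iii): the paper re-runs the small-step induction twice more with relational invariants ($z_i=u_{(i)}y_i$ for (ii), $u_{(i)}^nz_i=y_i$ for (iii)), whereas you derive both parts from one closed-form identity $\pr[\sigma](x)=u\cdot(x/w)$ (valid when $x/w\neq\epsilon$, with $w=\wri(\sigma)$) plus a locality property of the residual: the scan in \cref{eq-def-residual} inspects only the last $\ell+1$ letters of its first argument, where $\ell$ is the number of deleted letters, so the deletion count is stable under replacing $u^p$ by any $u^{p\pm n}$ of length at least $\ell+1$. This is an attractive reorganisation; it yields the explicit exponent $m=p+1-\ell/|u|$ and exhibits the hypotheses of (ii)/(iii) as exactly the length thresholds at which the stable residual applies. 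Two slips to fix: your indices are off by one (the invariant should read $y_i=(a_{i+1}\cdots a_r)\cdot\bigl(x/(b_{i+1}\cdots b_r)\bigr)$, so that $i=r$ gives $y_r=x$ and $i=0$ gives the clean formula), and you systematically write $r$ where $|u|$ is meant --- these differ whenever some $a_i=\epsilon$, and all exponent arithmetic must be over $|u|$.

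There is, however, a genuine gap at the pivot of (ii) and (iii). The identity $p|u|-\ell=(m-1)|u|$, from which you deduce ``$m>1\iff p|u|\geq\ell+1$'' and ``$m>n+1\iff (p-n)|u|\geq\ell+1$'', is itself a consequence of the clean formula, hence established only in the non-degenerate case $u^p/w\neq\epsilon$. But the hypotheses of (ii)/(iii) refer to the $m$ produced by part (i), which in the wrap-around case is \emph{not} given by $p+1-\ell/|u|$: for $\sigma=\:!a\:?a\:!a$ (so $a_1=\epsilon$, $b_1=a$, $a_2=a$, $b_2=a$, $u=a$, $w=aa$) and $p=1$, one has $u^p/w=\epsilon$ and $\ell=1$, yet $\pr[\sigma](a)=\epsilon=u^0$, i.e.\ $m=0\neq p+1-\ell/|u|$. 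So as written, nothing rules out that a degenerate instance could satisfy $m>1$ while your length conditions fail, and the argument would not apply to it. The missing lemma is: \emph{in the degenerate case $m\leq 1$}, so that $m>1$ (a fortiori $m>n+1$) forces the clean case. Fortunately it follows from the structure you already set up: once the residual of $x$ becomes empty at some stage $i^*$ of the SSS of \cref{eq-compute-yi}, all later words are built from the prepended reads minus further deletions, so $|y_0|\leq|a_1\cdots a_r|=|u|$ and $m\leq 1$. With that observation added (and the indexing fixed), your proof is complete.
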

\begin{proof}
The lemma holds spuriously if $u=\epsilon$, so we assume $|u|>0$.  Let
us write $\sigma$ in the general form
$?a_1\:!b_1\:?a_2\:!b_2\cdots?a_r\:!b_r$, so that $u=a_1a_2\cdots
a_r$.  To simplify notation we will write $u_{(i)}$ for the shift
$a_{i+1}\cdots a_r\cdot a_1 \ldots a_i$ that really should be written
$u_{(|a_1\cdots a_i|)}$ (remember that $a_j=\epsilon$ is possible).

We now claim that, in the SSS $(y_i,y'_i)_i$ for $\sigma$ and
$x$, each $y_i$ and $y'_i$ is a fractional power of $u_{(i)}$, written
$y_i=u_{(i)}^{p_i}$ and $y'_i=u_{(i)}^{p'_i}$.

The proof is by induction on $r-i$.  For $y_i$, there are two cases:
(1) $y_r=x$ is a power of $u$ by assumption, hence of $u_{(r)}$, with
$p_r=p$; (2) $y_{i-1}$ is $a_i \: y'_i$, i.e., $a_i\: u_{(i)}^{p'_i}$
by ind.\ hyp., hence a power of $u_{({i-1})}$ with
$p_{i-1}=p'_i+\frac{|a_i|}{|u|}$. For $y'_i$ the proof is simpler: by
ind.\ hyp.\ it is $u_{(i)}^{p_i}/b_i$ and, as a prefix of a power of
$u_{(i)}$, is itself a power of $u_{(i)}$, albeit with a perhaps
smaller exponent, i.e., $p_i-\frac{|b_i|}{|u|}\leq p'_i\leq p_i$.  \\

\noindent
(i) Since $y$ coincide with $y_0$, we obtain $y=u^m$ as required by
letting $m=p_0$.  \\

\noindent
(ii) \Cref{eq-pi-inequalities} gathers the (in)equalities we just
established:
\begin{xalignat}{4}
\label{eq-pi-inequalities}
p_r &= p\:,
&
p_{i-1} &= p'_i+\frac{|a_i|}{|u|}\:,
&
\max \Bigl(0,p_i-\frac{|b_i|}{|u|}\Bigr) &\leq p'_i\leq p_i\:,
&
p_0 &= m        \:.
\end{xalignat}
Thus the assumption $m>1$ entails $p'_i>0$, i.e.\ $y'_i\neq\epsilon$,
for all $i=r,r-1,\ldots,2,1$.  Let us now consider the SSS
$(z_i,z'_i)_i$ for $\pr[\sigma](u^{p+1})$. We claim that for all $i$,
$z_i=u_{(i)}y_i$ and $z'_i=u_{(i)}y'_i$, as is easily proven by
induction on $r-i$. The crucial case is $z'_i$, defined as $z_i/b_i$
and equal to $(u_{(i)}y_i)/b_i$ by ind.\ hyp.  Since
$y_i/b_i=y'_i\neq\epsilon$ as just observed, we deduce
$(u_{(i)}y_i)/b_i=u_{(i)}(y_i/b_i)$ from \cref{eq-useful}.  This is
$u_{(i)}y'_i$ as required.  Finally we end up with
$\pr[\sigma](u^{p+1})=z_0=u_{(0)}y_0=u\: u^m=u^{m+1}$, and this
generalises to $\pr[\sigma](u^{p+n})=u^{m+n}$.  \\

\noindent
(iii) With \cref{eq-pi-inequalities}, the assumption $m>n+1$ now
 entails $p_i,p'_i\geq n+\frac{1}{|u|}$ for all $i$. We claim that the SSS
$(z_i,z'_i)_i$ for $\pr[\sigma](u^{p-n})$ satisfies $u_{(i)}^nz_i=y_i$
and $u_{(i)}^nz'_i = y'_i$ for all $i$, as can be proved by induction on
$r-i$. The base case $u^n z_r=u^n u^{p-n}=u^p=y_r$ is clear.  Let us now
consider $u_{(i)}^n z'_i$. It is $u_{(i)}^n(z_i/b_i)$, that is
$u^n_{(i)}(u_{(i)}^{p_i-n}/b_i)$ since $u_{(i)}^n z_i=y_i$ by ind.\ hyp.\
and $y_i=u_{(i)}^{p_i}$ by (i).
 Now $|u_{(i)}^{p_i-n}| =
|u|(p_i-n) \geq 1 \geq |b_i|$, so \cref{eq-useful} applies and we deduce
$u_{(i)}^n(z_i/b_i) = (u_{(i)}^n z_i)/b_i$ $=y_i/b_i$ (by ind.\ hyp.)
$=y'_i$. We have proved $u_{(i)}^n z'_i=y'_i$ as required.
Finally, proving $u_{(i)}^n z_i=y_i$ is handled in a similar way.

\end{proof}
Note that $m>1$ is required for part (ii) of the Lemma.	 For example,
with $\sigma=\:?a\:!b\:?c\:!c\:!a$ one has $u=\rea(\sigma)=a\:c$ and
$\pr[\sigma](u^{\frac{1}{2}})=u^1$. However one can check that
$\pr[\sigma](u^{\frac{3}{2}}) = a\:c\:a = u^{\frac{3}{2}}$.
\\

Equipped with \Cref{lem-up-sigma-um}, we turn to the general case for $\pr[\sigma^k](x)$.
\begin{theorem}
\label{thm-yk-pk-lk}
Let $\sigma\in\Act_\Sigma^*$ be a sequence of actions and write $u$ for
$\rea(\sigma)$. Let $x\in\Sigma^*$ be some channel contents and
write $y_k$ for $\pr[\sigma^k](x)$.  \\
(i) For every $k\in \Nat$, $y_k$ has the form $u^{p_k}\cdot
x_{<\ell_k}$ for some fractional power $p_k$ and some length $\ell_k\in
\{0,1,\ldots,|x|\}$.\\
(ii) Furthermore, computing $p_k$ and $\ell_k$ can be done in time
 $\poly(|\sigma|+|x|+\log k)$.
\end{theorem}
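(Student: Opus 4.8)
The plan is to derive both parts from a single ``one-cycle'' analysis of how $\pr[\sigma]$ transforms a word already in the claimed normal form, and then to iterate it. Write $v=\wri(\sigma)$ and recall from the SSS \cref{eq-compute-yi} that within one application of $\pr[\sigma]$ the reads only prepend letters of $\rea(\sigma)=u$ on the left, while the writes only erase a suffix on the right. Since the unique occurrence of $x$ sits at the right end of $y_0=x$ and is never regenerated, the letters of $x$ surviving in $y_k$ always form a \emph{prefix} of $x$; this already explains the shape $u^{p_k}\cdot x_{<\ell_k}$ and shows that $\ell_k$ is non-increasing. The content of part~(i) is thus the claim that the prepended material collapses into a clean fractional power of $u$, which is where \Cref{lem-up-sigma-um} does the work.

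Concretely, I would prove by induction on $k$ the one-cycle statement $\pr[\sigma](u^p\cdot x_{<\ell})=u^{p'}\cdot x_{<\ell'}$ with $\ell'\le\ell$, splitting on whether $x_{<\ell}/v=\epsilon$. In the first regime, where $x_{<\ell}/v\neq\epsilon$, the suffix $x_{<\ell}$ absorbs every write so that no read is ever erased; repeatedly applying \cref{eq-useful} and $(z/v)/v'=z/(v'v)$ along the SSS gives $\pr[\sigma](u^p\cdot x_{<\ell})=u\cdot\bigl((u^p\,x_{<\ell})/v\bigr)=u^{p+1}\cdot(x_{<\ell}/v)$, so the exponent grows by exactly one and the tail shrinks to the prefix $x_{<\ell}/v$. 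In the second regime, where $x_{<\ell}$ is a subword of $v$, the writes consume the whole tail and bite into the $u^p$ block; the computation from that point on is exactly the pure-power situation, and \Cref{lem-up-sigma-um}(i) guarantees the result is again a power of $u$, now with empty tail ($\ell'=0$). Merging this residual computation with the fractional-power bookkeeping of \Cref{lem-up-sigma-um} is the main technical obstacle for part~(i).

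For part~(ii) the key point is that neither $y_k$ nor $p_k$ can be written out explicitly, since $|y_k|$ may be linear in $k$ and hence exponential in the size $\log k$ of its encoding; one therefore works only with the pair $(p_k,\ell_k)$ together with a ``phase'' $\phi_k$ recording which cyclic shift of $u$ the block $u^{p_k}$ ends on. A single cycle inspects only the last $O(|u|+|v|)$ letters of $y_k$, because the writes reach at most $|v|$ symbols into the word, so the map $(\ell_k,\phi_k)\mapsto(\ell_{k+1},\phi_{k+1})$ and the increment $p_{k+1}-p_k$ are computable in time $\poly(|\sigma|+|x|)$ from $(\ell_k,\phi_k)$ alone. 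The first regime above occurs at most $|x|$ times, since the tail shrinks strictly whenever it changes; once the tail reaches a fixpoint it is either a non-empty word with $x_{<\ell}/v=x_{<\ell}$, forcing $p_{k+1}=p_k+1$ forever, or it is empty and we are in the pure-power regime, where \Cref{lem-up-sigma-um}(ii)--(iii) show that the gain $p_{k+1}-p_k$ depends only on $\phi_k$ and so is constant along each phase class.

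The reduced state $(\ell_k,\phi_k)$ therefore ranges over a set of size $O(|x|\cdot|u|)$ and evolves deterministically, so it becomes periodic after a \emph{bilinear} number $O(|x|\cdot|\sigma|)$ of cycles; establishing this bilinear transient bound is the crux of the complexity argument. Beyond the transient the sequence $(\ell_k,p_k)$ is affine with a periodic correction, i.e.\ $\ell_k$ is constant and $p_{k+T}=p_k+D$ for fixed $T\le|u|$ and $D$. The algorithm then simulates cycles one by one --- each in polynomial time via the bounded window --- until it either reaches $k$ (when $k$ lies below the transient bound) or detects the period, in which case it evaluates the closed form $p_k=p_{k_0+((k-k_0)\bmod T)}+\lfloor(k-k_0)/T\rfloor\,D$ using integer arithmetic on numbers of bit-length $O(\log k)$. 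This yields the bound $\poly(|\sigma|+|x|+\log k)$.
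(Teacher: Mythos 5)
Your part~(i) is essentially the paper's own argument: the tail $x_{<\ell}$ shrinks under residuals by $v=\wri(\sigma)$, and once it dies the iterates are pure fractional powers of $u$. One imprecision there: \Cref{lem-up-sigma-um}(i) does not apply to the transition step $\pr[\sigma](u^p\cdot x_{<\ell})$ with a non-empty tail satisfying $x_{<\ell}/v=\epsilon$ (the input is not a pure power); what is needed is the separate observation that the result is then a \emph{prefix} of $u^{p+1}$, hence automatically a fractional power of $u$. You flag this as ``the main technical obstacle'' but leave it unresolved; it is resolvable, so this is only a local gap. The serious problem is in part~(ii).

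Your complexity argument rests on the claim that the gain $p_{k+1}-p_k$ and the successor reduced state are functions of $(\ell_k,\phi_k)$ alone, so that the reduced state evolves deterministically, pigeonhole gives a bilinear transient, and the first recurrence of a reduced state yields a valid affine closed form. This claim is false, and the paper's own example right after \Cref{lem-up-sigma-um} refutes it: for $\sigma=\:?a\:!b\:?c\:!c\:!a$ and $u=\rea(\sigma)=ac$ one has $\pr[\sigma](u^{\frac{1}{2}})=u^{1}$ but $\pr[\sigma](u^{\frac{3}{2}})=u^{\frac{3}{2}}$ --- same empty tail, same phase $\frac{1}{2}$, gains $+\frac{1}{2}$ versus $0$. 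This magnitude dependence is exactly why \Cref{lem-up-sigma-um}(ii) and (iii) carry the side conditions $m>1$ and $m>n+1$, which your argument silently drops. Concretely, your algorithm computes wrong values on this example: starting from $u^{\frac{1}{2}}$ the exponent sequence is $\frac{1}{2},1,\frac{3}{2},\frac{3}{2},\frac{3}{2},\ldots$, and the first recurrence of the reduced state $(\ell,\phi)=(0,\frac{1}{2})$ occurs at indices $0$ and $2$ with drift $D=1$, so your closed form predicts $p_k\to\infty$ while the true sequence stabilises at $\frac{3}{2}$. For the same reason your pigeonhole bound on the transient has no justification (the state you count does not determine the evolution). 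The repair requires a genuinely new ingredient: either refine the state so that it does determine one cycle --- e.g.\ record, besides the phase, the exponent capped at roughly $|v|/|u|$, since the writes can only inspect the last $|v|$ letters of the word --- or do what the paper does, namely use monotonicity (\Cref{lem-pre-mono}) to split the pure-power regime into constant/increasing/decreasing cases and invoke \Cref{lem-up-sigma-um}(ii)/(iii) only where their side conditions can be guaranteed, extrapolating from exponent values found there.
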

\begin{proof}
(i) Write $v$ for $\wri(\sigma)$ and consider the sequence
$(x_k)_{k\in\Nat}$ given by $x_0\eqdef x$ and $x_{k+1}\eqdef
x_k/v$. Note that $|x_{k+1}|\leq|x_k|$ for all $k$ and write
$\bmkappa$ for the largest index with $x_\bmkappa\neq\epsilon$. We let
$\bmkappa=-1$ if already we started with $x=\epsilon$, and
$\bmkappa=\omega$ if all $x_k$'s are non-empty, which happens iff
$\alphabet(x)\not\subseteq\alphabet(v)$.

If $k\leq \bmkappa$, $\pr[\sigma^k](x)=u^k\cdot x_k$ and $x_k$ is a
prefix of $x$, so taking $p_k=k$ and $\ell_k=|x_k|$ works.

If $k=\bmkappa+1$, $y_k$ is $\pr[\sigma](u^\bmkappa\cdot
x_{<\ell_\bmkappa})$. Since $x_{<\ell_\bmkappa}/v=\epsilon$, the
result is a prefix of $u^{\bmkappa+1}$, so has the form
$u^{p_{\bmkappa+1}}$ for some $p_{\bmkappa+1}$. One also lets
$l_{\bmkappa+1}=0$.

Finally, if $k>\bmkappa+1$, we have $y_k =
\pr[\sigma^{k-\bmkappa-1}](y_{\bmkappa+1}) =
\pr[\sigma^{k-\bmkappa-1}](u^{p_{\bmkappa+1}})$ and we just have to
invoke \Cref{lem-up-sigma-um} (and set $l_k=0$).  \\

\noindent
(ii) Computing $\bmkappa$ takes time $O(|x|+|\sigma|)$.

If $k\leq \bmkappa$, comparing $k$ with $\bmkappa$ and computing $p_k$
and $\ell_k$ takes additional time $O(|x|+|\sigma|+\log k)$.

If $k=\bmkappa+1$, we need to compute
$\pr[\sigma](u^\bmkappa\:x_{<\bmkappa})$ in order to extract
$p_{\bmkappa+1}$.  This uses \cref{eq-pr-def} for
$O(|\sigma|)$ small steps. Note that we do not build $u^\bmkappa$
explicitly: once $x$ has been consumed, we work on some $u_{(i)}^{p}$
and just update $p$ and $i$ when applying some $\pr[?a]$, or only
update $p$
when applying some $\pr[!b]$, for which we only need to know where are
the occurrences of $b$ in $u$. For each small step, the updates can be
computed in time $O(|u|+|x|)$, hence $p_{\bmkappa+1}$ is computable in
quadratic time.

If $k>\bmkappa+1$, we set $q_0=p_{\bmkappa+1}$, $k'=k-\bmkappa-1$ and
aim for $\pr\bigl[\sigma^{k'}\bigr](x')$, starting from $x'=u^{q_0}$.
We need to compute $u^{q_{k'}}$ in the sequence
$u^{q_0},u^{q_1},\ldots,$ defined by $u^{q_{i+1}} \eqdef
\pr[\sigma](u^{q_i})$.  Let us first compute $q_1$ and consider the
three possibilities:\\
(1) If $q_0 = q_1$, $y_p$ is a fixpoint for $\pr[\sigma]$ and we know
$p_k=p$.  \\
(2) If $q_0 < q_1$, the exponents increase under $\pr[\sigma]$ and
after computing at most $|u|+1$ consecutive values, we'll find two
indexes $1\leq i<j\leq|u|+1$ such that $q_i$ and $q_j$ have the same
fractional parts, i.e., differ by some natural number.  We can then
use \Cref{lem-up-sigma-um}.(ii) and compute
$q_{j+\left\lfloor\frac{k'-j}{j-i}\right\rfloor} = q_j + \left\lfloor\frac{k'-j}{j-i}\right\rfloor$.
From there, we're just at most $|u|$ steps from $q_{k'}$, i.e., $p_k$.
\\
(3) Finally, if $q_0>q_1$ a similar technique, now relying on
\Cref{lem-up-sigma-um}.(iii), will let us compute $q_{k'}$ in
polynomial time.
\end{proof}

The next step is to compute $\Pre[\sigma^*](\upc x)$, that is, $\upc
x\cup\Pre[\sigma](\upc x)\cup \Pre[\sigma^2](\upc x)\cup \cdots$.
Like $\Pre[\sigma](\upc x)$, this set is upward-closed. However it may
have several minimal elements and one needs to collect all of them in
order to represent the set faithfully.

\begin{definition}[Iteration number]
\label{def-L-sigma-x}
The \emph{iteration number} $L(\sigma,x)$ associated with a sequence
of actions $\sigma$ and a channel contents $x$ is the smallest integer
such that there exists $\ell\leq L(\sigma,x)$ with
$\pr[\sigma^\ell][x]\subword \pr[\sigma^{L(\sigma,x)+1}](x)$.	Note
that, by Higman's Lemma, such an integer always exists.
\end{definition}

The point of \Cref{def-L-sigma-x} is that it captures the number of
iterations that are sufficient to compute $\Pre[\sigma^*](\upc x)$.
\begin{lemma}
\label{lem-L-isfixpoint}
$\Pre[\sigma^*](\upc x) = \bigcup_{i=0}^{L(\sigma,x)}\upc
\pr[\sigma^i](x).$
\end{lemma}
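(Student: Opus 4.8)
The plan is to first unfold the definitions so that both sides become unions of upward-closures, and then to prove the two inclusions. By the description of $\Pre[\sigma^*](\upc x)$ given just before \Cref{def-L-sigma-x}, we have $\Pre[\sigma^*](\upc x)=\bigcup_{i\geq 0}\Pre[\sigma^i](\upc x)$, and by \Cref{def-pr} each term satisfies $\Pre[\sigma^i](\upc x)=\upc\pr[\sigma^i](x)$ (including $i=0$, since $\pr[\epsilon](x)=x$). Hence the statement reduces to establishing
\[
\bigcup_{i\geq 0}\upc\pr[\sigma^i](x)=\bigcup_{i=0}^{L(\sigma,x)}\upc\pr[\sigma^i](x).
\]
The inclusion $\supseteq$ is immediate, so all the work lies in $\subseteq$: it suffices to show that for every $k>L(\sigma,x)$ the set $\upc\pr[\sigma^k](x)$ is already contained in the finite union on the right.

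Write $L=L(\sigma,x)$ and let $\ell\leq L$ be the witness supplied by \Cref{def-L-sigma-x}, so that $\pr[\sigma^\ell](x)\subword\pr[\sigma^{L+1}](x)$. The central claim I would prove, by induction on $k$, is that for every $k\geq L+1$ there is an index $i$ with $0\leq i\leq L$ and $\pr[\sigma^i](x)\subword\pr[\sigma^k](x)$. Since $u\subword u'$ entails $\upc u'\subseteq\upc u$, this claim immediately gives $\upc\pr[\sigma^k](x)\subseteq\upc\pr[\sigma^i](x)$, a member of the finite union, thereby settling $\subseteq$. The base case $k=L+1$ is exactly the witness inequality, with $i=\ell$.

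For the inductive step, assume $\pr[\sigma^i](x)\subword\pr[\sigma^k](x)$ with $0\leq i\leq L$. Applying $\pr[\sigma]$ to both sides and invoking monotonicity (\Cref{lem-pre-mono}) yields $\pr[\sigma^{i+1}](x)\subword\pr[\sigma^{k+1}](x)$. If $i+1\leq L$, this directly witnesses the claim for $k+1$. The only delicate point --- and the step I expect to be the real content of the argument --- is the wrap-around case $i+1=L+1$: here I would chain the freshly obtained $\pr[\sigma^{L+1}](x)\subword\pr[\sigma^{k+1}](x)$ with the witness inequality $\pr[\sigma^\ell](x)\subword\pr[\sigma^{L+1}](x)$ via transitivity of $\subword$, recovering an index $\ell\leq L$ witnessing the claim for $k+1$. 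This ``folding back'' is the heart of the proof: it shows that once the $(L{+}1)$-th iterate is dominated by an earlier one, every subsequent iterate is dominated by one of the first $L$, so the union stabilises precisely at $L(\sigma,x)$.
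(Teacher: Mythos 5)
Your proof is correct and takes essentially the same route as the paper: both arguments reduce the claim to showing that every iterate $\pr[\sigma^k](x)$ with $k>L(\sigma,x)$ is subword-dominated by some $\pr[\sigma^i](x)$ with $i\leq L(\sigma,x)$, using the witness inequality from \Cref{def-L-sigma-x} propagated forward via monotonicity (\Cref{lem-pre-mono}). The only difference is presentational: your induction with its explicit wrap-around case (chaining through the witness inequality by transitivity when the index would exceed $L$) spells out the ``folding back'' that the paper's proof leaves implicit when it lists the inclusions $\upc y_{\ell+j}\supseteq\upc y_{L+1+j}$.
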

\begin{proof}
Write $y_k$ for $\pr[\sigma^k](x)$ and $L$ for $L(\sigma,x)$. By
definition there is some $\ell\leq L$ with $y_\ell\subword y_{L+1}$.
By \Cref{lem-pre-mono}, this continues into $y_{\ell+1}\subword
y_{L+2}$, $y_{\ell+2}\subword y_{L+3}$, etc., implying $\upc
y_\ell\supseteq \upc y_{L+1}$, $\upc y_{\ell+1}\supseteq \upc
y_{L+2}$, $\upc y_{\ell+2}\supseteq \upc y_{L+3}$, \ldots\ Finally
$\Pre[\sigma^*](\upc x)$, which is $\bigcup_{i\in\Nat}\upc y_i$
coincides with the finite union $\bigcup_{i=0}^{L(\sigma,x)}\upc y_i$.
\end{proof}

\begin{theorem}[Bounding iteration numbers]
\label{thm-L-sigma-x}
$L(\sigma,x)\leq |x|(|\rea(\sigma)|+1)$
for any action sequence $\sigma$ and channel contents $x$.
\end{theorem}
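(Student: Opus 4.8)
The plan is to track the sequence $y_k \eqdef \pr[\sigma^k](x)$ using the structure already extracted in \Cref{thm-yk-pk-lk}. Writing $u=\rea(\sigma)$ and $v=\wri(\sigma)$, I would recall the index $\bmkappa$ and the two regimes it induces: for $k\le\bmkappa$ one has $y_k=u^k x_k$, where $x_k$ is a nonincreasing chain of prefixes of $x$, whereas for $k\ge\bmkappa+1$ each $y_k$ is a fractional power of $u$. Since $L(\sigma,x)$ is defined as the \emph{smallest} index closing the backward iteration, it suffices to exhibit, in each case, a single pair $\ell\le L$ with $\pr[\sigma^\ell](x)\subword\pr[\sigma^{L+1}](x)$ together with a bound $L\le|x|(|u|+1)$; the true iteration number can only be smaller.

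The key structural fact I would use is that $y_{k+1}=\pr[\sigma](y_k)$ and that $\pr[\sigma]$ is monotone for $\subword$ by \Cref{lem-pre-mono}. For powers of a fixed word one has $u^p\subword u^{p'}$ iff $p\le p'$, so the tail $(y_k)_{k\ge\bmkappa+1}$ is totally ordered; combined with monotonicity of $\pr[\sigma]$, its exponents $q_0\eqdef p_{\bmkappa+1},q_1,q_2,\ldots$ form a \emph{monotone} sequence. If this sequence is nondecreasing then $y_{\bmkappa+1}\subword y_{\bmkappa+2}$ already, so $L\le\bmkappa+1$. The case $\bmkappa=\omega$ (i.e.\ $\alphabet(x)\not\subseteq\alphabet(v)$) is handled in the same spirit: there $y_k=u^k x_k$ throughout, the prefixes $x_k$ shrink by at least one letter as long as their last letter lies in $\alphabet(v)$ and hence stabilise within $|x|-1$ steps, and once $x_k$ is stationary $y_k\subword y_{k+1}$.

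The remaining, and main, case is a \emph{strictly decreasing} exponent tail, and two counting ingredients do the work. First I would bound $\bmkappa\le|x|-1$: when $\bmkappa$ is finite we have $\alphabet(x)\subseteq\alphabet(v)$, so the last letter of each nonempty prefix $x_k$ lies in $\alphabet(v)$ and the residual $x_{k+1}=x_k/v$ drops at least one letter, giving $|x_k|\le|x|-k$. Second, since $y_{\bmkappa+1}$ is a prefix of $u^{\bmkappa+1}$ we get $q_0\le\bmkappa+1$, and every $q_i$ lies in $\frac{1}{|u|}\Nat$, so each strict decrease is at least $\frac{1}{|u|}$. Hence the decreasing tail reaches its fixpoint after at most $|u|\,q_0\le|u|(\bmkappa+1)$ steps, at which point $y_L=y_{L+1}$. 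Summing, $L\le(\bmkappa+1)+|u|(\bmkappa+1)=(\bmkappa+1)(|u|+1)\le|x|(|u|+1)$, as required; the degenerate $u=\epsilon$ reduces to a decreasing chain of prefixes of $x$ and is subsumed by the same bound.

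I expect the decreasing-tail case to be the only real obstacle. The easy regimes follow immediately from monotonicity, but there the chain moves the ``wrong way'' and one must argue that it cannot do so for too long. The two quantitative facts that close the argument are that the exponents are integer multiples of $\frac{1}{|u|}$ (so strict decreases are bounded below) and that the initial tail exponent is at most $\bmkappa+1\le|x|$; together they convert a potentially long backward iteration into the stated bilinear bound.
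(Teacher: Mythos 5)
Your proof is correct and follows essentially the same route as the paper's: both rest on the decomposition $y_k=u^{p_k}\cdot x_{<\ell_k}$ from \Cref{thm-yk-pk-lk}, use \Cref{lem-pre-mono} to make the exponent tail monotone, handle the $\bmkappa=\omega$ (equivalently $\ell_\infty>0$) case by stabilisation of the prefixes within $|x|-1$ steps, and count strict decreases in the tail via the fact that the exponents are multiples of $\frac{1}{|u|}$ starting from a value at most $\bmkappa+1\leq|x|$. The differences are only organisational (you split on the direction of the tail rather than on whether $\ell_k$ vanishes), and your observation $q_0\leq\bmkappa+1$ is in fact slightly more careful than the paper's assertion $p_{k_0}=k_0$, which should read $p_{k_0}\leq k_0$.
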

\begin{proof}
We write $u$ for $\rea(\sigma)$.  Using \Cref{thm-yk-pk-lk}, we write
$y_k=\pr[\sigma^k](x)=u^{p_k}\cdot x_{<\ell_k}$ and observe that
$p_i\leq p_j$ and $\ell_i\leq\ell_j$ imply $y_i\subword y_j$.  Recall
from the proof of \Cref{thm-yk-pk-lk} that $|x|=\ell_0\geq \ell_1\geq
\cdots\geq\ell_i\geq \cdots$ is a decreasing sequence and that $p_k=k$
when $\ell_k>0$.

There are two cases:\\
(1) If $(\ell_k)_k$ stabilises with some limit value $\ell_\infty$ that is
strictly positive, then $\ell_{|x|-1}=\ell_{|x|}$ and we deduce
$y_{|x|-1}\subword y_{|x|}$, entailing $L(\sigma,x)< |x|$.

\noindent
(2) If $\ell_\infty=0$ then, writing $k_0$ for the first index with
$\ell_{k_0}=0$, we know that $k_0\leq |x|$ and $p_{k_0}=k_0$.  If
$p_{k_0+1}\geq p_{k_0}$ then $y_{k_0}\subword y_{k_0+1}$.  Otherwise
$p_{k_0}>p_{k_0+1}$ and as a consequence of \Cref{lem-pre-mono} the
suffix sequence $p_{k_0}> p_{k_0+1}\geq p_{k_o+2}\geq p_{k_0+3} \geq
\cdots$ is decreasing.	Since the $p_k$ fractions are multiples of
$\frac{1}{|u|}$, the sequence $(p_k)_{k\geq k_0}$ can only take
$1+|u|k_0$ different values and eventually yield $p_k=p_{k+1}$ for
some $k \leq k_0+k_0|u|\leq |x|(|u|+1)$, entailing $L(\sigma,x)\leq
|x|(|u|+1)$ as claimed.
\end{proof}

The bound given by \Cref{thm-L-sigma-x} is tight as
the next simple example shows.
\begin{example}[Bounds for $L(\sigma,x)$ are tight.]
\label{ex-quadratic-L}
For $\sigma=\:!ab^5\:?b^4$ and $x=a^4$, the $(y_k)_k$ sequence with
$y_k \eqdef \pr[\sigma^k](x)$ is:
\begin{center}
\begin{tikzpicture}[algntxt/.append style={font=\vphantom{Ag}},
		    auto,node distance=2.7em] 
{
\node [algntxt,label=below:{\footnotesize $y_0$}] (y0) {$a^4$,};
\node [algntxt,right of=y0,label=below:{\footnotesize $y_1$}] (y1) {$b^4a^3$,};
\node [algntxt,right of=y1,label=below:{\footnotesize $y_2$}]  (y2) {$b^8a^2$,};
\node [algntxt,right of=y2,label=below:{\footnotesize $y_3$}]  (y3) {$b^{12}a$,};
\node [algntxt,right of=y3,label=below:{\footnotesize $y_4$}]  (y4) {$ b^{16}$,};
\node [algntxt,right of=y4,label=below:{\footnotesize $y_5$}]  (y5) {$b^{15}$,};
\node [algntxt,right of=y5,label=below:{\footnotesize $y_6$}]  (y6) {$b^{14}$,};
\node [algntxt,right of=y6,label=below:{\footnotesize $y_7$}]  (y7) {$b^{13}$,};

\node [algntxt,right=4em of y7,label=below:{\footnotesize $y_{19}$}]	 (y19) {$b$,};
\node [algntxt,right of=y19,label=below:{\footnotesize $y_{20}$}]  (y20) {$\epsilon$,};
\node [algntxt,right of=y20,label=below:{\footnotesize $y_{21}$}]  (y21) {$\epsilon$,};
\node [algntxt,right of=y21]  (ydots) {$\ldots$};
\node [algntxt] at ($(y7)!0.5!(y19)$) (mdots) {$\ldots$};

\node[yshift=1.4em] at ($(y0)!0.5!(y4)$) (br1) {$\obracew{\hspace{12.3em}}{|x|=l_0>l_1>\cdots>l_4=0}$};
\node[yshift=1.6em] at ($(y4)!0.5!(ydots)$) (br2) {$\obracew{\hspace{24em}}{k_0=4=p_{k_0} \land p_4\geq p_5\geq p_6 \cdots}$};
}
\end{tikzpicture}
 \end{center}
Since $y_{20}\subword y_{21}$ is the earliest increasing pair,
\Cref{def-L-sigma-x} gives $L(!ab^5\:?b^4,a^4)=20$.

This generalises to $L(!ab^{n+1}\:?b^n,a^{m})=m(n+1)$ for any
$n,m\in\Nat$, which is exactly the $|x|(|\rea(\sigma)|+1)$ bound given
by \Cref{thm-L-sigma-x}.
\qed
\end{example}

\subsection{Bounding runs}
\label{sec-bounding-runs}

Assume that a flat LCM $S$ is such that $(q',y)\in\post^*(q,x)$. Since $S$ is flat, the run
$(q,x)\step{*}(q',y)$ has the following shape:
\begin{equation}
\label{eq-run-shape}
\begin{gathered}
(q,x)=(q_0,z_0)
\step{\sigma_0^{n_0}}
(q_0,z'_0)
\step{\theta_1}
(q_1,z_1)
\step{\sigma_1^{n_1}}
(q_1,z'_1)
\step{\theta_2}
(q_2,z_2)
\step{\sigma_2^{n_2}}
\;\cdots
\\
\;\;\;\;\;\;\;\;\;\;\;\;\;\;\;\;\;\;\;\;\;\;\;\;\;\;\;\;\;\;\;\;
\cdots\;\;
(q_{m-1},z'_{m-1})
\step{\theta_m}
(q_m,z_m)
\step{\sigma_m^{n_m}}
(q_m,z'_m) = (q',y)\:.
\end{gathered}
\end{equation}
In \cref{eq-run-shape}, the control locations $(q=)q_0,q_1,\ldots,q_m(=q')$
are all distinct, $\sigma_i$ is the sequence of actions performed
along the (unique) cycle on $q_i$, and $n_i$ is the number of times
this cycle has been traversed along the run. We use $\sigma_i=\epsilon$
when there is no cycle on $q_i$, and we use $n_i=0$ when the cycle is
not traversed at all. For $i=1,\ldots,m$, $\theta_i$ is the sequence of
actions that labels the transition from $q_{i-1}$ to $q_i$. \\

We say that the run in \cref{eq-run-shape} is \emph{minimal} if for
all $i=1,\ldots,m$, $z_i$ is a minimal element in
$\Pre[\sigma_i^*](\upc z'_i)$ and $n_i$ is the smallest such
$z_i=\pr[\sigma_i^{n_i}](z'_i)$, and if
$z_i=\pr[\sigma_i^{n_i}](z'_i)$ for $i=0,\ldots,m$. By allowing
$z_0\subword x$, it is always possible to associate a minimal run
with some reachability statement ``$(q,x)\step{*}(q',y)$'' and
use the tuple
\begin{equation}
\label{eq-witness}
\langle q_0, z_0, n_0, z'_0, q_1, z_1, n_1, z'_1, \ldots, q_m, z_m,
n_m, y \rangle
\end{equation}
as a witness of reachability.

We now try to bound the size of such a witness.  One has
\begin{xalignat}{3}
|z'_m|&=|y|				\:,
&
|z_i|&\leq |z'_i| + n_i |\rea(\sigma_i)|	\:,
&
|z'_{i-1}|&\leq |z_i| + |\rea(\theta_i)|	\:,
\end{xalignat}
for all $i$.  We further know from \Cref{thm-L-sigma-x}, that $n_i\leq
|z'_i|(1+|\rea(\sigma_i)|)$ for $i=0,\ldots,m$.

Thus, writing $n$ for the size $|S|+|x|+|y|$ of the instance (so that
$m\leq n$, and $|\rea(\sigma_i)|,|\rea(\theta_i)|\leq n$ for all
$i$), we have quadratic bounds $O(n^2)$ for $n_m$ and $|z_m|$, cubic
bounds $O(n^3)$ for $n_{m-1}$ and $|z_{m-1}|$, \ldots, etc., so that
the witness has size $O(n^m)$, hence $2^{O(n)}$.  \\

Unfortunately, as \Cref{ex-exponential-flatLCM} shows, these bounds
cannot be much improved upon.

\begin{example}
\label{ex-exponential-flatLCM}
Consider the flat LCM $S$ depicted in \cref{fig-exp-LCM}.
In $S$,
$(q_0,\epsilon)\step{*}(q'_0,\epsilon)$ is witnessed by the following run schema
\begin{equation*}
\begin{split}
(q_0,\epsilon)
&\step{} (q_1,a b)
\step{*}	 (q_1,b a^2)
\step{}	 (q_2,a^2 b)
\step{*}	 (q_2,b a^4)
\step{}	 (q_3,a^4 b)
\step{*}	 \cdots
\step{*}	 (q_n,b a^{2^n})
\\
&\step{}	  (q'_n,a^{2^n} b)
\step{*} (q'_n,b a^{2^{n-1}})
\step{}	 (q'_{n-1},a^{2^{n-1}} b)
\step{*}	 \cdots
\step{}	 (q'_1,a^2 b)
\step{*}	 (q'_1,b a)
\step{}	 (q'_0,\epsilon)
\:.
\end{split}
\end{equation*}
\begin{figure}[htbp]
\centering
\scalebox{0.75}{
  \begin{tikzpicture}[->,>=stealth',shorten >=1pt,node distance=6em,thick,auto,bend angle=30]
\tikzstyle{every state}=[minimum size=2.3em]

\node at (-2,0)	  [state] (cm1)     {\footnotesize $q_0$};
\node at (0,0)	  [state] (c0)      {\footnotesize $q_1$};
\node at (2,0)	  [state] (c1)      {\footnotesize $q_2$};
\node at (4,0)	  [state] (c2)      {\footnotesize $q_3$};
\node at (5.2,0)  []	  (cdots)   {\Large $\cdots$};
\node at (7,0)	  [state] (c3)      {\footnotesize $q_n$};
\node at (9,0)	  [state] (c4)      {\footnotesize $q'_n$};
\node at (11,0)	  [state] (c5)      {};
\node at (11,0)	  []      (c5label) {\footnotesize $q'_{n-1}$};
\node at (12.2,0) []	  (cdots2)  {\Large $\cdots$};
\node at (14,0)	  [state] (c6)      {\footnotesize $q'_1$};
\node at (16,0)	  [state] (c7)      {\footnotesize $q'_0$};

\path (c0) edge [loop above] node {$?a  \: !aa$} (c0);
\path (c1) edge [loop above] node {$?a  \: !aa$} (c1);
\path (c2) edge [loop above] node {$?a  \: !aa$} (c2);
\path (c3) edge [loop above] node {$?a  \: !aa$} (c3);
\path (c4) edge [loop above] node {$?aa \: !a$} (c4);
\path (c5) edge [loop above] node {$?aa \: !a$} (c5);
\path (c6) edge [loop above] node {$?aa \: !a$} (c6);

\path (cm1) edge  node  {$!ab$}    (c0);
\path (c0)  edge  node  {$?b\:!b$} (c1);
\path (c1)  edge  node  {$?b\:!b$} (c2);
\path (6,0) edge [dashed] node {} (c3);
\path (c3)  edge  node  {$?b\:!b$} (c4);
\path (c4)  edge  node  {$?b\:!b$} (c5);
\path (13,0) edge [dashed]  node {} (c6);
\path (c6)  edge  node  {$?ba$}    (c7);

  \end{tikzpicture}
}
\caption{A flat LCM where $(q_0,\epsilon)\step{*}(q'_0,\epsilon)$ requires exponential-sized configurations.}
\label{fig-exp-LCM}
\end{figure}
 In fact, there is only one run witnessing $(q_0,\epsilon)
\step{*} (q'_0,\epsilon)$ and this run necessarily visits
$(q_n,b a^{2^n})$, a configuration of exponential size, iterating
$2^{n-1}$ times the cycle on $q_n$.  Observe that, starting from
$(q_0,\epsilon)$, any message loss will prevent ever reaching $q'_0$.
\qed
\end{example}

\section{SLP-compressed words and an $\NP$ algorithm for reachability}
\label{sec-NP-via-SLP}

In this section we explain how the exponentially long minimal runs
analysed in \Cref{sec-bounding-runs} can be handled efficiently using
SLP-compressed words. This provides witnesses of polynomial size
that can be validated in polynomial time, thus showing that reachability
in flat LCMs is in $\NP$.

\subsection{SLP-compressed words}
\label{ssec-SLP}

\emph{Compressed words} are data structures used to represent long
words via succinct encodings. If a long word is rather repetitive, it
can have a succinct encoding of logarithmic size.  Since several
operations on long words or decision tests about them can be performed
efficiently on the succinct representation, compressed words have been
used to provide efficient solutions to algorithmic problems involving
exponential-size (but rather repetitive) words, see~\cite{lohrey2012}
for a survey.

The most studied encoding is the SLP, for Straight-Line Program, which
is in effect an acyclic context-free grammar that generates a single
word, called its \emph{expansion}.

From now on, we always use small letters $x,y,u,v$ for usual words,
and capital letters $X,Y,U,V$ for SLPs expanding to the corresponding
words.	Since SLPs are interpreted as plain words, we will
use them freely in places where words can be used. It will always be
clear when we consider the SLP as a data structure and then we use it
to denote its expansion. The main situation where we want to
distinguish between the two usages is when reasoning about size and
algorithmic complexity: for this we
write $|X|$ for the length $|x|$ of the expansion, while we write
$\size{X}$ for the size of the SLP as a data structure.	 For example,
if $X$ expands to $x$ then for any fractional power of the form $x^p$, there is an SLP $X^p$ with
$|X^p|=|x^p|=p|x|$ and $\size{X^p}=O(\size{X}+\log p)$.

In the rest of this section we will use well-known, or easy to prove,
algorithmic results on SLP. In particular, all the following problems
can be solved in polynomial time:
\begin{description}
\item[length:] Given a SLP $X$, compute $|X|$.

\item[factor:] Given a SLP $X$ and two positions $0\leq i\leq
  j\leq|X|$, construct a SLP of size $O(\size{X})$ for the factor $X[i:j]$.

\item[concatenation:] Given two SLPs $X$ and $Y$, construct a SLP for
  $X\cdot Y$.

\item[matching:] Given two SLPs $X$ and $Y$, decide if $X$ is a factor
  (or a prefix, or a suffix) of $Y$.
\end{description}

\medskip

To this list we add results tailored to our needs:
\begin{description}
\item[(scattered) subword with a power word:] Given a SLP $X$, a plain
  word $v$ and some power $k\in\Nat$, decide if $X\subword v^k$.  This
  special case of the fully compressed subsequence test can be done in
  time $\poly(\size{X}+|v|+\log k)$, see \Cref{prop-vn-subword-X} in
  the Appendix.

\item[iterated LCM predecessor:] Given a SLP $X$, a plain word $v$,
and some power $k\in\Nat$, compute a SLP for $X/v^k$, i.e., for
        $\pr[(!v)^k](X)$. This can be done
in time $\poly(\size{X}+|v|+\log k)$, see \Cref{prop-X-remove-vn} in
the Appendix.
\end{description}

With the above results, we are ready to lift the computation of
$\pr[\sigma^k](x)$ from plain words to SLPs:
\begin{proposition}
\label{prop-pre-sigma-SLP}
Given an SLP $X$, a sequence of actions $\sigma$, and some $k\in\Nat$,
it possible to compute an SLP $Y$ for $\pr[\sigma^k](X)$ in time
$\poly(\size{X}+|\sigma|+\log k)$.
\end{proposition}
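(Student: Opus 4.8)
The plan is to mirror the plain-word computation of $\pr[\sigma^k](x)$ from \Cref{thm-yk-pk-lk}, replacing every elementary operation on words by its compressed counterpart drawn from the SLP toolbox listed above. Recall that $\pr[\sigma^k](x)$ is analysed by first writing $\sigma$ in the general form $?a_1\:!b_1\cdots?a_r\:!b_r$ with $u=\rea(\sigma)$ and $v=\wri(\sigma)$, and then tracking the sequence $x_0=x$, $x_{k+1}=x_k/v$. The whole behaviour splits at the threshold $\bmkappa$ (the largest index with $x_\bmkappa\neq\epsilon$) into an \emph{initial phase} $k\leq\bmkappa$, where $\pr[\sigma^k](x)=u^k\cdot x_k$, and a \emph{periodic phase} $k>\bmkappa$, where by \Cref{lem-up-sigma-um} the result is an (eventually periodic) fractional power of a cyclic shift of $u$. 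My algorithm will decide in which phase $k$ falls, compute $\bmkappa$, and then produce the SLP for $Y$ accordingly.

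First I would compute $\bmkappa$. Since $x_{k+1}=x_k/v=\pr[(!v)^{k+1}](x)$, the threshold $\bmkappa$ is the largest $k$ with $X/v^{k}\neq\epsilon$; using the \textbf{iterated LCM predecessor} operation this can be found by binary search on $k$, each test computing $\size{X/v^{k}}$ and its length in time $\poly(\size{X}+|v|+\log k)$, and since $\bmkappa\leq|X|$ (as $|x_{k+1}|<|x_k|$ whenever nonempty and $\alphabet(x)\subseteq\alphabet(v)$) this search terminates in $O(\log|X|)$ rounds. If $k\leq\min(\bmkappa,\ldots)$ strictly, I build $Y$ as the concatenation $U^k\cdot (X/v^k)$, where $U^k$ is the power-SLP for $u^k$ of size $O(|u|+\log k)$ and $X/v^k$ comes directly from the \textbf{iterated LCM predecessor} primitive; \textbf{concatenation} then yields an SLP of the required size.

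For the periodic phase $k>\bmkappa$ I would first obtain the SLP for $y_{\bmkappa+1}=\pr[\sigma](u^{\bmkappa}\cdot x_{<\ell_\bmkappa})$, which by \Cref{thm-yk-pk-lk}(i) is a fractional power $u^{p_{\bmkappa+1}}$, by running the $O(|\sigma|)$ small steps of \cref{eq-pr-def} symbolically on power-SLPs exactly as in the plain-word complexity analysis: once $x$ has been consumed we only track a shift index $i$ and an exponent $q$ for $u_{(i)}^{q}$, updating these under each $\pr[?a]$ and $\pr[!b]$ without materialising the exponentially long word. From $q_0=p_{\bmkappa+1}$ I then compute the short orbit $q_1,q_2,\ldots$ under $\pr[\sigma]$ (at most $|u|+1$ steps) to detect whether the exponent is fixed, increasing, or decreasing, apply the closed-form jump from \Cref{lem-up-sigma-um}(ii) or (iii) to reach $q_{k'}$ with $k'=k-\bmkappa-1$, and finally emit the power-SLP $U_{(i)}^{q_{k'}}$, whose size is $O(\size{U}+\log k)$. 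Because every intermediate exponent is a multiple of $1/|u|$ bounded by $O(\poly(|x|,\log k))$, all arithmetic on the $q_i$ stays within the claimed time bound.

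The main obstacle is the bookkeeping in the periodic phase: I must be careful that the symbolic small-step simulation producing $p_{\bmkappa+1}$ is genuinely polynomial in $\size{X}$ rather than in $|X|$. The plain-word argument in \Cref{thm-yk-pk-lk}(ii) bounds this by quadratic time in $|u|+|x|$, which is fine, but when the input is given by an SLP the consumption of $x$ must itself be done compressed — here the point is that consuming $x$ amounts to computing $X/v$ iterated, i.e.\ exactly the \textbf{iterated LCM predecessor} primitive, so the residual prefix $x_{<\ell_\bmkappa}$ is available as an SLP and need never be expanded. Once the computation has passed into a pure power $u^{p_{\bmkappa+1}}$, everything reduces to arithmetic on the rational exponent, which is manifestly cheap, so the only real care needed is to route each word-level step through an SLP primitive of the stated polynomial cost and to sum these costs, giving the overall $\poly(\size{X}+|\sigma|+\log k)$ bound.
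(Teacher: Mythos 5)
Your proposal is correct and follows essentially the same route as the paper's proof: dichotomic search for $\bmkappa$ via the $X/v^n=\epsilon$ test, direct construction of $u^k\cdot(X/v^k)$ when $k\leq\bmkappa$, and otherwise a symbolic small-step computation of $u^{p_{\bmkappa+1}}$ followed by exponent arithmetic using \Cref{lem-up-sigma-um}(ii)/(iii) as in \Cref{thm-yk-pk-lk}. (One phrasing slip: after the closed-form jump the exponents can be of magnitude roughly $k$, not $\poly(|x|,\log k)$; what matters, and what makes your conclusion stand, is that their binary representations have size $O(\log k+\log|x|+\log|u|)$.)
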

\begin{proof}[Proof (sketch)]
We follow the construction described in the proof of
\Cref{thm-yk-pk-lk}, now using SLPs.  So again let us write $u$ and
$v$ for $\rea(\sigma)$ and $\wri(\sigma)$.

The first step is to compute $\bmkappa$.  This is done by dichotomic
search, since we can decide in polynomial time whether a candidate $n$
leads to $X/v^n=\epsilon$. We then build $X_{<\ell_\bmkappa}$ as $X/v^{\bmkappa}$.

If $k\leq \bmkappa$, we build a SLP $Y$ for $u^k\cdot
(X/v^k)$ and we are done.

If $k\geq\bmkappa+1$, we compute a SLP for $y_{\bmkappa+1}=u^{p_{\bmkappa+1}}$ by
applying $\pr[\sigma]$ on a SLP for $y_\bmkappa=u^\bmkappa\cdot x_{<\ell
\bmkappa}$: this involves computing a SSS involving at most $2m$
operations like prefixing by $a_i$ or computing $Y/b_j$. This is done
in polynomial time and the exponent in $u^{p_{\bmkappa+1}}$ can be
computed by dividing the length of a SLP with the length of $u$.
From there we continue as in the proof of \Cref{thm-yk-pk-lk}. This involves
performing a polynomial number of simple $\pr$ operations and some
simple reasoning on the exponents.
\end{proof}

\subsection{Reachability for flat LCMs is in \NP}
\label{ssec-NP-algo}

We now explain how \cref{eq-witness} can be replaced by an SLP-based
witness of the form
\begin{equation}
\tag{\ref{eq-witness}'}
\label{eq-witness-slp}
\langle q_0, Z_0, n_0, Z'_0, q_1, Z_1, n_1, Z'_1, \ldots, q_m, Z_m,
n_m, Y \rangle \:.
\end{equation}

\begin{lemma}
\label{lem-SLP-witness-size}
If $\langle q_0, z_0, n_0, z'_0, q_1, n_1, z_1, z'_1, \ldots, q_m,
z_m, n_m, y \rangle$ is a minimal witness for
$(q,x)\step{*}(q',y)$ in $S$, then there exist SLPs
$Z_0,Z'_0,Z_1,\ldots,Z_m,Y$ representing $z_0,z'_0,z_1,\ldots,z_m,y$
that have size polynomial in $|S|+|y|$.
\end{lemma}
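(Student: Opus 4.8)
The plan is to build the required SLPs by traversing the run \cref{eq-run-shape} \emph{backwards}, starting from $y=z'_m$. I set $Y$ to be the trivial SLP for $y$ and then, for $i=m,m-1,\ldots,0$, compute $Z_i=\pr[\sigma_i^{n_i}](Z'_i)$ using \Cref{prop-pre-sigma-SLP}, and $Z'_{i-1}=\pr[\theta_i](Z_i)$ using \Cref{prop-pre-sigma-SLP} with $k=1$ and $\sigma=\theta_i$. By minimality of the witness we have $z'_m=y$, $z_i=\pr[\sigma_i^{n_i}](z'_i)$, and $z'_{i-1}=\pr[\theta_i](z_i)$ (the latter because, given $z_i$, the minimal $z'_{i-1}$ with $z'_{i-1}\step{\theta_i}z_i$ is exactly $\pr[\theta_i](z_i)$ by \cref{eq-pr-def}), so the words produced are precisely the components of \cref{eq-witness} and it only remains to control the SLP \emph{sizes}.

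The core of the argument is to keep every intermediate word in a \emph{normal form}: a concatenation of $O(m)$ \emph{segments}, each of which is either a fractional power $U_j^{p}$ of the short word $U_j=\rea(\sigma_j)$, or an explicit plain word of length at most $|S|$, or a prefix $Y[0:\ell]$ of the original SLP $Y$. I would maintain as an invariant that each $z_i,z'_i$ admits such a decomposition with $O(m)$ segments. Two structural facts drive this. First, \Cref{thm-yk-pk-lk} says that $\pr[\sigma_i^{n_i}]$ sends its argument to $U_i^{p}$ followed by a \emph{prefix of that very argument}; and a prefix of a normal form is again a normal form with at most one extra segment, since a truncated power $U_j^{p}$ splits into $U_j^{p'}$ followed by a length-$<|S|$ prefix of $u_j$. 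Second, by \cref{eq-pr-def} each $\pr[\theta_i]$ is a bounded composition of elementary steps $\pr[?a]$ (prepend one letter) and $\pr[!b]$ (delete a trailing $b$); these only touch a bounded-length suffix and prepend at most $\rea(\theta_i)$ (length $\le|S|$), so each transition step adds only $O(1)$ segments. Hence the segment count stays $O(m)$ throughout.

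To bound the segment parameters I would first bound the lengths. Combining $|z_i|\le|z'_i|+n_i|\rea(\sigma_i)|$ and $|z'_{i-1}|\le|z_i|+|\rea(\theta_i)|$ with the iteration bound $n_i\le|z'_i|(1+|\rea(\sigma_i)|)$ from \Cref{thm-L-sigma-x} shows that each backward step enlarges the length by at most a factor $O(|S|^2)$; since the $q_i$ are distinct we have $m<|S|$, so every $|z_i|,|z'_i|$ is at most $|y|\cdot 2^{O(|S|\log|S|)}$. Consequently all exponents $p$ and all counts $n_i$ are at most $2^{\poly(|S|+|y|)}$, so each power-segment satisfies $\size{U_j^{p}}=O(|S|+\log p)=\poly(|S|+|y|)$, each explicit segment has size $O(|S|)$, and each prefix-of-$y$ segment $Y[0:\ell]$ has size $O(|y|)$. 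With $O(m)$ segments per word, every $Z_i,Z'_i$ has size $\poly(|S|+|y|)$, as claimed; note in particular that $|x|$ never enters the bound, since everything is determined by the backward pass from $y$.

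The step I expect to be the real obstacle is controlling the blowup. Naively iterating \Cref{prop-pre-sigma-SLP} composes $m$ operations, each of which may enlarge the SLP by a constant factor through the factor/prefix construction (which is only $O(\size{X})$, not additive), yielding a useless exponential bound $c^{m}\size{Y}$. The normal-form discipline is exactly what circumvents this: instead of \emph{nesting} prefix operations, I track every segment \emph{symbolically} --- recording which atom among $U_1,\ldots,U_m,Y$ it comes from, together with an exponent or a cut position $\ell\le|y|$ --- and build each segment's SLP \emph{once}, from the fixed atoms, only at the very end. Because a prefix of a prefix of $y$ is simply $Y[0:\ell']$ for the appropriate $\ell'$, these rebuilds never compound. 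This converts the multiplicative blowup into an additive count of $O(m)$ independently constructed segments, which is what delivers the polynomial size bound.
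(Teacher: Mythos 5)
Your proposal is correct, and its skeleton is the same as the paper's: work backwards from $Y=Z'_m$, use \Cref{thm-yk-pk-lk} to write each $Z_i$ as $U_i^{p_i}\cdot (Z'_i)_{<\ell_i}$ with $p_i$ at most exponential (so $\log p_i$ polynomial, by the bounds of \Cref{thm-L-sigma-x} propagated right-to-left), obtain $Z'_{i-1}$ from $Z_i$ via \cref{eq-pr-def}, and use $m\le |S|$ to conclude. Where you genuinely diverge is in the size accounting, and your instinct about the obstacle is well placed: the paper's proof simply asserts $\size{(Z'_i)_{<\ell_i}}=O(\size{Z'_i})$ and that ``each SLP has size linearly bounded in the size of the following one,'' then concludes a quadratic bound per SLP; read literally, with a multiplicative constant $c>1$ in the black-box factor construction, that recurrence only yields $c^m\,\size{Y}$ --- precisely the exponential blowup you flag. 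Your fix --- maintaining a symbolic decomposition into $O(m)$ segments, each a fractional power $U_j^{p}$, a short explicit word, or a prefix $Y[0:\ell]$ of the \emph{original} $Y$, and building each segment's SLP once from these fixed atoms at the very end --- removes the compounding, because a prefix of a prefix of $y$ is again some $Y[0:\ell']$ and a truncated power is again a power. The paper's terser argument can alternatively be repaired by noting that the standard prefix construction is \emph{additive} in the SLP's depth (old rules plus one chain of truncated nonterminals) rather than multiplicative in its size, with depth growing only by $O(1)$ per operation; but your segment bookkeeping makes the polynomial bound transparent and also makes explicit why $|x|$ never enters the bound. One small simplification: since the paper allows fractional exponents, truncating $U_j^{p}$ directly yields $U_j^{p'}$, so the extra explicit segment you introduce per truncation is not even needed.
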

\begin{proof}
By induction on $m-i$. We start with $Y$ for $y$ which does not need
any compression (and let $Z'_m=Y$ for the inductive reasoning).

Then any $Z_i$ has the shape $U_i^{p_i}\cdot (Z'_i)_{<\ell_i}$ for
some $p_i$ and $\ell_i$.  Now $\size{(Z'_i)_{<\ell_i}}$ is in
$O(\size{Z'_i})$ and since $p_i$ is in $2^{O(|S|)}$ ---as shown in
\Cref{sec-bounding-runs}---, the size of the SLP for $u_i^{p_i}$ is is
$O(|u_i|+|S|)$, i.e., $O(|S|)$.

Now any $Z'_{i-1}$ is $\pr[\theta_i](Z_i)$ and is easily obtained
from $Z_i$ and $\theta_i$ according to \cref{eq-pr-def}.
One can ensure that $\size{Z'_i}$ is in $O(\size{Z_i}+|S|)$.

Finally, and since each SLP has size linearly bounded in the size of
the following one (the bounds propagate from right to left), we have a
quadratic bound on the individual sizes for the $Z_i$ and $Z'_i$,
hence a cubic bound on the SLP witness overall (recall that the $n_i$,
written in binary, have size $O(|S|)$).
\end{proof}

\begin{theorem}
\label{thm-reach-in-NP}
Deciding whether $(q,x)\step{*}(q',y)$ in a flat LCM $S$ is
$\NP$-complete.
\end{theorem}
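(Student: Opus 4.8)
The plan is to prove $\NP$-completeness by establishing membership in $\NP$ and then invoking the hardness results already proved for restricted machines. For the lower bound, $\NP$-hardness follows immediately from \Cref{thm-hard-acyclic} (or \Cref{thm-hard-single-path}), since acyclic and single-path machines are themselves flat LCMs, so reachability for flat LCMs is at least as hard. The substance of the theorem is therefore the $\NP$ upper bound.

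For membership in $\NP$, the plan is to design a polynomial-time verifier operating on the SLP-based witness of the form \eqref{eq-witness-slp}. First I would observe that, by the analysis in \Cref{sec-bounding-runs}, any reachability statement $(q,x)\step{*}(q',y)$ that holds is witnessed by a \emph{minimal} run of the shape \eqref{eq-run-shape}, whose control sequence $q_0,\ldots,q_m$ has length $m\leq|S|$ and whose iteration counts satisfy $n_i\leq|z'_i|(1+|\rea(\sigma_i)|)$ by \Cref{thm-L-sigma-x}. The nondeterministic algorithm then guesses: the sequence of visited locations $q_0,\ldots,q_m$ (all distinct, so $m\leq|S|$), the iteration numbers $n_0,\ldots,n_m$ written in binary, and the starting SLP $Z_0$. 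Since $S$ is flat, once the locations are fixed the cycles $\sigma_i$ and connecting transitions $\theta_i$ are determined. By \Cref{lem-SLP-witness-size}, there is such a witness in which every $Z_i,Z'_i,Y$ is an SLP of size polynomial in $|S|+|y|$, and each $n_i$, written in binary, has size $O(|S|)$; hence the entire guessed certificate has polynomial size.

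For verification, the plan is to check the witness from right to left, exactly mirroring the structure of a minimal run. Starting from $Z'_m=Y$, for each $i$ from $m$ down to $1$ I would compute an SLP $Z_i$ for $\pr[\sigma_i^{n_i}](Z'_i)$ using \Cref{prop-pre-sigma-SLP}, which runs in time $\poly(\size{Z'_i}+|\sigma_i|+\log n_i)$, and then compute $Z'_{i-1}=\pr[\theta_i](Z_i)$ directly from \cref{eq-pr-def}. Each such step is polynomial-time and, by \Cref{lem-SLP-witness-size}, enlarges the SLP size only linearly, so no size blow-up occurs along the chain. The final acceptance test is whether the guessed initial SLP $Z_0$ satisfies $Z_0\subword x$, equivalently whether $x\in\upc Z_0=\Pre[\sigma_0^{n_0}\theta_1\cdots\sigma_m^{n_m}](\upc y)$; this is a compressed (here $x$ is a plain word) subword test, decidable in polynomial time by the matching primitives of \Cref{ssec-SLP}. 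Correctness hinges on \Cref{def-pr} and \Cref{lem-L-isfixpoint}: the verifier's backward computation produces, at each location, the unique minimal predecessor generating the upward-closed predecessor set, so $Z_0\subword x$ holds if and only if some run of the guessed shape realizes $(q,x)\step{*}(q',y)$.

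The main obstacle I anticipate is not any single step but ensuring that the polynomial size bounds genuinely compose: one must verify that the SLP sizes stay polynomial through the entire right-to-left sweep despite $m$ being as large as $|S|$, and that the $\pr[\sigma_i^{n_i}]$ computation via \Cref{prop-pre-sigma-SLP} does not blow up the representation when chained $m$ times. This is precisely the content guaranteed by \Cref{lem-SLP-witness-size}, whose inductive bound (each SLP is linearly bounded in the size of its right neighbor, giving a quadratic bound per component and a cubic bound overall) is the crux; I would lean on it directly rather than re-deriving the propagation. The only other delicate point is confirming that a \emph{minimal} witness always exists and is recognized by the verifier, which is exactly what the minimality definition preceding \eqref{eq-witness} together with the single-minimal-element property from \Cref{def-pr} provides.
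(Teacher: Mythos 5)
Your proposal is correct and follows essentially the same route as the paper: $\NP$-hardness is inherited from the acyclic/single-path results of \Cref{app-NP-hardness}, and membership is by guess-and-check of the SLP witness \eqref{eq-witness-slp}, with polynomial witness size from \Cref{lem-SLP-witness-size} and polynomial-time validation from \Cref{prop-pre-sigma-SLP}. Your refinement of guessing only the locations and exponents $n_i$ and recomputing the $Z_i,Z'_i$ by a right-to-left sweep ending in the test $Z_0\subword x$ is precisely the simplification the paper records in its footnote to this proof.
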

\begin{proof}
$\NP$-hardness is proven in \Cref{app-NP-hardness} and we just
provide a $\NP$ decision algorithm.

As expected, the algorithm just guesses a SLP-based witness and checks
that it is indeed a valid witness. For a positive instance of the
problem, a witness exists and has polynomial size as shown in
\Cref{lem-SLP-witness-size}. Now checking that it is valid, i.e., that
each $Z_i$ is indeed $\pr[\sigma^{n_i}](Z'_i)$ etc., can be done in
polynomial time as shown with \Cref{prop-pre-sigma-SLP}.\footnote{In
fact, it is sufficient to guess the exponents $n_1,\ldots,n_m$ for the
$\sigma_i$'s since the $Z_i,Z'_i$'s can be computed from them.}
\end{proof}

\section{$\NP$ algorithms for liveness properties}
\label{sec-nontermination-and-unboundedness}

We show in this section how, for flat LCMs, liveness properties like
nontermination, unboundedness, and existence of a B\"{u}uchi run,
effectively reduce to reachability. This only requires characterising
and computing the set of configurations from which infinite runs are
possible but \Cref{sec-backward-reachability} provides all the
necessary tools.  \\

With any sequence of channel actions $\sigma$ we associate
$I_\sigma\eqdef\bigcap_{k=0,1,2,\ldots}\Pre[\sigma^k](\Sigma^*)$.

\begin{lemma}
\label{lem-Isigma-principal}
$I_\sigma\subseteq\Sigma^*$ is an upward-closed set of channel
contents. It has a single minimal element or is empty.
\end{lemma}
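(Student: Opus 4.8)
The plan is to show two things about
$I_\sigma\eqdef\bigcap_{k}\Pre[\sigma^k](\Sigma^*)$: that it is upward-closed, and that if nonempty it is principal, i.e.\ generated by a single minimal element. For upward-closedness, each $\Pre[\sigma^k](\Sigma^*)$ is upward-closed: indeed $\Sigma^*=\upc\epsilon$ is trivially upward-closed, and by the monotonicity property recorded right after \cref{eq-def-sem-actions}, $\Pre[\theta](\cdot)$ preserves upward-closure, so by induction on $k$ each $\Pre[\sigma^k](\Sigma^*)$ is upward-closed. An arbitrary intersection of upward-closed sets is again upward-closed, which settles the first claim.

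For the principality claim, the key is to control the minimal elements of the approximants. First I would note that, since $\Sigma^*=\upc\epsilon$, \Cref{def-pr} gives $\Pre[\sigma^k](\Sigma^*)=\Pre[\sigma^k](\upc\epsilon)=\upc\,\pr[\sigma^k](\epsilon)$, so each $\Pre[\sigma^k](\Sigma^*)$ is in fact \emph{principal}, with single minimal element $y_k\eqdef\pr[\sigma^k](\epsilon)$. Thus $I_\sigma=\bigcap_k\upc y_k$. The plan is then to understand the sequence $(y_k)_k$: by \Cref{lem-pre-mono}, $y_k\subword y_{k'}$ implies $\upc y_k\supseteq\upc y_{k'}$, i.e.\ the upward-closures shrink exactly when the generators grow in the subword order. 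I would invoke Higman's Lemma to extract from $(y_k)_k$ an infinite $\subword$-nondecreasing subsequence $y_{k_0}\subword y_{k_1}\subword\cdots$; along such a subsequence the closures are nested, $\upc y_{k_0}\supseteq\upc y_{k_1}\supseteq\cdots$, and their generators only increase.

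The decisive point is to argue that the intersection stabilises. Applying \Cref{def-L-sigma-x,lem-L-isfixpoint} to $x=\epsilon$ tells us, via the iteration number $L(\sigma,\epsilon)$, that there is some $\ell\le L\eqdef L(\sigma,\epsilon)$ with $y_\ell\subword y_{L+1}$, and this propagates by \Cref{lem-pre-mono} to give $y_{\ell+j}\subword y_{L+1+j}$ for all $j\ge 0$. Hence for every $k>L$ there is some $k'\le L$ with $y_{k'}\subword y_k$, so $\upc y_k\supseteq\upc y_{k'}$ is already accounted for among the first $L+1$ terms. Consequently the infinite intersection collapses to a finite one, $I_\sigma=\bigcap_{i=0}^{L}\upc y_i$. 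A finite intersection of upward-closed subsets of $\Sigma^*$ is itself upward-closed and has finitely many minimal elements; it remains to see that when nonempty it has a \emph{single} minimal one.

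For the single-minimal-element conclusion I would exploit the specific structure of the $y_k$. By \Cref{thm-yk-pk-lk} applied to $x=\epsilon$, every $y_k$ is of the pure power form $u^{p_k}$ with $u=\rea(\sigma)$ (the prefix part $x_{<\ell_k}$ is empty since $x=\epsilon$). A finite intersection $\bigcap_i\upc(u^{p_i})$ of upward-closures of powers of a single fixed word $u$ is just $\upc(u^{\max_i p_i'})$ for the appropriate common value, because $u^p\subword u^{p'}$ holds precisely when $p\le p'$; thus the whole intersection equals $\upc(u^{p^\ast})$ with $p^\ast=\max_{0\le i\le L}p_i$ whenever this is well-defined, giving the unique minimal element $u^{p^\ast}$, and is empty only in the degenerate situations (e.g.\ $u=\epsilon$ with a nonempty obligation that can never be met). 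The main obstacle I anticipate is making the last step fully rigorous at the boundary cases: one must handle $u=\epsilon$ separately (where $\Pre[\sigma^k](\Sigma^*)$ may shrink to $\emptyset$, yielding the empty alternative) and must be careful that ``$\max p_i$'' is taken over the finitely many relevant indices supplied by the stabilisation argument rather than over all $k$. Once the power structure of the $y_k$ is in hand, however, the principality is essentially the totality of the $\subword$ order restricted to $\{u^p\}$, which is linear, so a finite family has a maximum.
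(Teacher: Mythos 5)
Your first two steps (upward-closedness, and the identification $I_\sigma=\bigcap_k\upc y_k$ with $y_k=\pr[\sigma^k](\epsilon)$) are correct and match the paper, but your stabilisation step contains a genuine error that sinks the proof. You apply \Cref{lem-L-isfixpoint} --- a statement about the \emph{union} $\Pre[\sigma^*](\upc x)=\bigcup_{i\le L}\upc y_i$ --- to conclude that the \emph{intersection} collapses to $I_\sigma=\bigcap_{i=0}^{L}\upc y_i$. The logic runs the wrong way. From $y_{k'}\subword y_k$ one gets $\upc y_k\subseteq\upc y_{k'}$ (not $\supseteq$ as you wrote); that inclusion lets one discard $\upc y_k$ from a \emph{union}, but in an \emph{intersection} the later, smaller sets are exactly the ones that matter and cannot be discarded. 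Concretely, take $\sigma=\:?a$, so that $y_k=a^k$. Then $y_0=\epsilon\subword y_1$ gives $L(\sigma,\epsilon)=0$, and your formula yields $I_\sigma=\upc\epsilon=\Sigma^*$, whereas in fact $I_\sigma=\bigcap_k\upc a^k=\emptyset$: no finite channel content allows $?a$ to be fired arbitrarily often. The empty alternative, which the lemma explicitly allows, is precisely the case your collapse argument erases.

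The idea you missed is simpler and is the heart of the paper's proof: since $y_0=\epsilon\subword y_1$, monotonicity (\Cref{lem-pre-mono}) propagates along the whole sequence, so $(y_k)_k$ is a $\subword$-chain $y_0\subword y_1\subword y_2\subword\cdots$ (no Higman extraction of a subsequence is needed). Hence the sets $\upc y_k$ are nested decreasing, and the dichotomy is immediate: either $y_K=y_{K+1}$ for some $K$ --- then $y_K$ is a fixpoint of $\pr[\sigma]$, all later terms equal $y_K$, and $I_\sigma=\upc y_K$ is principal --- or the chain is strictly increasing, in which case $|y_k|\geq k$ for all $k$ and $I_\sigma=\emptyset$ since no finite word lies above every $y_k$. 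Your observation that each $y_k$ is a fractional power of $u=\rea(\sigma)$ is correct and could also be made into a proof (boundedness versus unboundedness of the exponents $p_k$), but the finite-collapse route you chose cannot be repaired, because the collapse is simply false whenever $I_\sigma$ is empty.
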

\begin{proof}
Write $(y_k)_{k\in\Nat}$ for the sequence $y_0\eqdef\epsilon$ and
$y_{k+1}=\pr[\sigma](y_k)$.  Then $\Pre[\sigma^k](\Sigma^*)=\upc y_k$
for all $k\in\Nat$ (\Cref{def-pr}) and
$I_\sigma = \bigcap_{k\in\Nat}\Pre[\sigma^k](\Sigma^*) =\bigcap_k \upc y_k$.
From $y_0\subword y_1$ and monotonicity of $\pr$ (\Cref{lem-pre-mono})
we obtain $y_0\subword y_1\subword y_2\subword \cdots$ and $\upc
y_0\supseteq \upc y_1\supseteq \upc y_2\supseteq \cdots$. Thus we have
\[
I_\sigma
= \bigcap_{k\in\Nat} \upc y_k
= \begin{cases} \upc y_K  & \text{if  $y_{K}=y_{K+1}$ for some $K$,}
                        \\
                \emptyset & \text{if the $(y_k)_{k\in\Nat}$ sequence is strictly increasing.}
  \end{cases}
\]
\end{proof}

We write $\pr[\sigma^\omega](\epsilon)=y$ if $I_\sigma=\upc y$, and
$\pr[\sigma^\omega](\epsilon)=\bot$ if $I_\sigma$ is empty.

\begin{lemma}
\label{lem-sigma-omega-ptime}
$\pr[\sigma^\omega](\epsilon)$ can be computed in time $O(|\sigma|^3)$.
\end{lemma}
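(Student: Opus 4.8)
The plan is to compute $\pr[\sigma^\omega](\epsilon)$ by running the fixpoint iteration from the proof of \Cref{lem-Isigma-principal}, namely the increasing sequence $y_0=\epsilon$, $y_{k+1}=\pr[\sigma](y_k)$, and detecting either stabilisation (giving $I_\sigma=\upc y_K$) or divergence (giving $I_\sigma=\emptyset$, i.e.\ the output $\bot$). The key observation is that we never have to compute more than a bounded number of iterations explicitly. First I would note that, by \Cref{thm-yk-pk-lk} applied with initial contents $x=\epsilon$, every $y_k$ has the form $u^{p_k}$ with $u=\rea(\sigma)$ (since $x_{<\ell_k}$ is empty when $x=\epsilon$), so each iterate is just a fractional power of $u$ and is completely described by the single rational exponent $p_k$. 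Thus the whole computation reduces to tracking the scalar sequence $(p_k)_k$.

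Next I would analyse this scalar sequence. Since the $y_k$ are increasing (shown in \Cref{lem-Isigma-principal}) and each $p_k$ is a multiple of $\frac{1}{|u|}$, the exponents satisfy $p_0\le p_1\le \cdots$ with all values in $\frac{1}{|u|}\Nat$. As in case (2) of the proof of \Cref{thm-yk-pk-lk}, after computing at most $|u|+1$ consecutive iterates one either finds $p_k=p_{k+1}$ (stabilisation, so $\pr[\sigma^\omega](\epsilon)=u^{p_k}$) or finds two indices $i<j$ with $p_i,p_j$ having the same fractional part but $p_i<p_j$; by \Cref{lem-up-sigma-um}(ii) the exponents then strictly increase by a fixed natural-number amount forever, so the sequence never stabilises and $\pr[\sigma^\omega](\epsilon)=\bot$. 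Computing each $\pr[\sigma]$ step costs $O(|\sigma|^2)$ as established in the complexity analysis of \Cref{thm-yk-pk-lk}, and only $O(|u|)=O(|\sigma|)$ steps are needed, giving the claimed $O(|\sigma|^3)$ bound.

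I do not expect a serious obstacle here; the real work was done in \Cref{lem-up-sigma-um} and \Cref{thm-yk-pk-lk}, and this lemma is essentially a corollary that packages the periodicity argument for the specific starting point $x=\epsilon$. The one point requiring a little care is the decision criterion: I must argue that detecting stabilisation-versus-divergence really can be settled within the first $|u|+1$ iterates rather than needing to look further. This follows because once the fractional part of some $p_k$ repeats with a strict increase, \Cref{lem-up-sigma-um}(ii) guarantees the increase recurs indefinitely, so divergence is certain; and if no fractional part repeats among the first $|u|+1$ iterates while the values strictly increase, the pigeonhole principle is violated (there are only $|u|$ possible fractional parts), so a repeat must occur by index $|u|+1$. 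Hence a window of $O(|u|)$ iterations suffices to classify the sequence, and the overall running time is dominated by these $O(|u|)$ predecessor computations.
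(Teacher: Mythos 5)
Your overall plan coincides with the paper's: iterate $y_{k+1}=\pr[\sigma](y_k)$ from $y_0=\epsilon$, note that every iterate is a fractional power $u^{p_k}$ of $u=\rea(\sigma)$, and classify stabilisation versus divergence within $O(|u|)$ iterations, each computable in polynomial time. The complexity accounting is fine. The gap is that your divergence criterion is unsound: you invoke \Cref{lem-up-sigma-um}(ii) without its precondition $m>1$, and the paper's own example following \Cref{lem-up-sigma-um} breaks your algorithm. Take $\sigma=\:?a\:!b\:?c\:!c\:!a$, so $u=ac$ and $|u|=2$. The iteration gives $y_0=\epsilon=u^0$, $y_1=ac=u^1$, $y_2=aca=u^{3/2}$, and then $y_3=\pr[\sigma](aca)=aca=y_2$; hence $\pr[\sigma^\omega](\epsilon)=aca\neq\bot$ (indeed $aca\step{\sigma}aca$). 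Yet $p_0=0$ and $p_1=1$ have the same fractional part with $p_0<p_1$, so your rule fires on the pair $(i,j)=(0,1)$ and outputs $\bot$, which is wrong. Your justifying claim --- \emph{once the fractional part of some $p_k$ repeats with a strict increase, (ii) guarantees the increase recurs indefinitely} --- is exactly what fails here: part (ii) requires the image exponent to exceed $1$, whereas $\pr[\sigma](u^{p_0})=u^1$ with $1\not>1$, and the increase indeed does not recur, since $\pr[\sigma](u^{p_0+1})=u^{3/2}\neq u^{p_1+1}$.

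The missing ingredient is the threshold that the paper's proof places before any invocation of (ii): keep iterating, checking only for stabilisation, until $|y_k|>|\rea(\sigma)|$, i.e.\ until $p_k>1$; absent stabilisation this happens within $O(|u|)$ steps, since each strict increase adds at least $1/|u|$. Your pigeonhole window is sound once it is run from that point on: for indices $k\leq i<j\leq k+|u|$ with $p_i<p_j$ of equal fractional part, every image exponent $p_{t+1}$ with $t\geq i$ is $>1$, so (ii) can be applied repeatedly to get $p_{j+t}=p_{i+t}+(p_j-p_i)$ for all $t\in\Nat$, whence genuine divergence. Note that both ingredients are needed: your window without the threshold misfires as shown above, and the threshold without a subsequent window (a literal reading of the paper's sketch, which stops as soon as $|y_k|>|\rea(\sigma)|$ while strictly increasing) would misclassify the very same example, since $|y_2|=3>2$ yet the sequence stabilises one step later. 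With threshold plus window the number of iterations remains $O(|u|)=O(|\sigma|)$ and each iteration costs $O(|\sigma|^2)$, so the claimed $O(|\sigma|^3)$ bound is preserved.
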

\begin{proof}[Proof (sketch)]
We start computing the elements $y_0,y_1,y_2,\ldots$ of the $(y_k)_k$
sequence.  If two consecutive values $y_{K}$ and $y_{K+1}$ coincide,
we have found $\pr[\sigma^\omega](\epsilon)$.  Otherwise we continue
while the sequence is strictly increasing until eventually
$|y_k|>|\rea(\sigma)|$ for some $k$ (indeed, some $k\leq 1+|\sigma|$).
In this case we can invoke \Cref{lem-up-sigma-um}.(ii) and conclude
that the $(y_k)_k$ sequence will remain strictly increasing, hence
$\pr[\sigma^\omega](\epsilon)=\bot$.

For complexity, we note that each $y_{k+1}$ is obtained in time
$O(|\sigma| + |y_k|)$ and has length in $O(|\sigma|^2)$ since
$|y_{k+1}|\leq |y_k| + |\rea(\sigma)|$ for all $k$.
\end{proof}

The set $I_\sigma$, represented via $\pr[\sigma^\omega](\epsilon)$,
is interesting because it characterises the
configurations from which a $\sigma$-labelled cycle can be traversed
infinitely many times, i.e., it characterises nontermination.

Indeed, the following lemma reduces nontermination to reachability:
\begin{lemma}[Existence of infinite runs]
\label{lem-infinite-runs}
(i) There exists an infinite sequence $x=x_0\step{\sigma}x_1
\step{\sigma}x_2 \cdots$ starting from $x$ if, and only if,
$\pr[\sigma^\omega](\epsilon)\subword x$.
\\
(ii) There exists an infinite run in $S$ that starts from $(q,x)$ and
visits a given $q'\in Q$ infinitely many times if, and only if, $q'$
is on an elementary cycle of $S$ and
$(q,x)\step{*}(q',\pr[\sigma_{q'}^\omega](\epsilon))$.
\end{lemma}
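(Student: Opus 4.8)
The plan is to establish (i) directly from the definition of $I_\sigma$ and from the fixpoint property of $\pr[\sigma^\omega](\epsilon)$ recorded in \Cref{lem-Isigma-principal}, and then to derive (ii) from (i) by analysing the shape that flatness forces on an infinite run.

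For the forward implication of (i), I would observe that an infinite sequence $x=x_0\step{\sigma}x_1\step{\sigma}x_2\cdots$ gives $x\step{\sigma^k}x_k$ for every $k$, so $x\in\Pre[\sigma^k](\Sigma^*)$ for all $k$ and hence $x\in I_\sigma$. As $I_\sigma$ is then nonempty, \Cref{lem-Isigma-principal} gives $I_\sigma=\upc\pr[\sigma^\omega](\epsilon)$, i.e.\ $\pr[\sigma^\omega](\epsilon)\subword x$. For the converse, write $y$ for $\pr[\sigma^\omega](\epsilon)$, which is a genuine word exactly because the hypothesis $y\subword x$ presupposes $I_\sigma\neq\emptyset$. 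By \Cref{lem-Isigma-principal}, $y$ is a fixpoint of $\pr[\sigma]$, so $\Pre[\sigma](\upc y)=\upc y\ni y$; thus $y\step{\sigma}y'$ for some $y'\supword y$, and lowering the target by monotonicity yields $y\step{\sigma}y$. The constant sequence $y\step{\sigma}y\step{\sigma}\cdots$ is therefore an infinite $\sigma$-run, and since $y\subword x$, raising the source by monotonicity gives $x\step{\sigma}y\step{\sigma}y\cdots$, the desired infinite run from $x$.

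For (ii), the ($\Leftarrow$) direction is a direct replay of (i): if $q'$ sits on an elementary cycle with action word $\sigma_{q'}$ and $(q,x)\step{*}(q',y)$ with $y=\pr[\sigma_{q'}^\omega](\epsilon)$, then (i) (using $y\subword y$) produces an infinite $\sigma_{q'}$-run on the channel from $y$, and since each occurrence of $\sigma_{q'}$ is one traversal of the cycle back to $q'$, this extends the finite run into an infinite run of $S$ visiting $q'$ infinitely often. For ($\Rightarrow$), I would take an infinite run from $(q,x)$ visiting $q'$ infinitely often; the locations it visits infinitely often all lie in a single strongly connected component, namely that of $q'$. Here I invoke the structural fact that in a flat machine the SCC of a location lying on a cycle is exactly the unique elementary cycle through it, with only that cycle's edges inside. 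Consequently $q'$ is on an elementary cycle and, past a finite prefix, the run merely repeats $\sigma_{q'}$. Writing $z^*$ for the channel content at the first $q'$-visit of this pure-cycling tail gives $(q,x)\step{*}(q',z^*)$ together with an infinite $\sigma_{q'}$-run from $z^*$; by (i), $y\subword z^*$ for $y=\pr[\sigma_{q'}^\omega](\epsilon)$. Since $y\step{\sigma_{q'}}y$, raising the source from $y$ to $z^*$ gives $(q',z^*)\step{\sigma_{q'}}(q',y)$, and appending this step yields $(q,x)\step{*}(q',y)$, as required.

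The step I expect to cost the most care is the structural fact about flat graphs used in ($\Rightarrow$): that the tail of the run is literally a repetition of the single cycle $\sigma_{q'}$. The cleanest route is to show that a closed walk from $q'$ to $q'$ with no intermediate return to $q'$ must be the unique elementary cycle through $q'$, since any additional revisited location would exhibit a second elementary cycle sharing a location with the first, contradicting flatness. Once that is in place, everything else reduces to bookkeeping with \Cref{lem-Isigma-principal} and the monotonicity of $\step{}$.
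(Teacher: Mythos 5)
Your proof takes essentially the same route as the paper's: part (i) rests on the fixpoint property $y=\pr[\sigma](y)$ of $y=\pr[\sigma^\omega](\epsilon)$ plus monotonicity of steps, with the forward direction obtained from $x\in\Pre[\sigma^k](\Sigma^*)$ for all $k$; part (ii) is reduced to (i) via flatness. Your elaboration of (ii) --- which the paper dismisses with ``immediate consequence of (i)'' --- is sound: the key structural fact, that a closed walk from $q'$ to $q'$ with no intermediate visit to $q'$ must be the unique elementary cycle through $q'$ (else a repeated intermediate location would lie on two distinct elementary cycles, contradicting flatness), is exactly what is needed, and the monotonicity step $(q',z^*)\step{\sigma_{q'}}(q',y)$ is legitimate because $\sigma_{q'}$ is there a nonempty sequence of actions.

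There is, however, one step that fails: the degenerate case $\sigma=\epsilon$, which the statement of (i) allows. Your closing move, ``raising the source by monotonicity gives $x\step{\sigma}y\step{\sigma}y\cdots$'', is wrong for the empty sequence: $x\step{\epsilon}z$ holds only when $z=x$, since a sequence of zero steps offers no opportunity to lose messages, so monotonicity in the source simply fails for $\sigma=\epsilon$. Concretely, for $\sigma=\epsilon$ one has $y=\pr[\epsilon^\omega](\epsilon)=\epsilon$, and with $x=a$ your argument asserts $a\step{\epsilon}\epsilon$, which is false. The lemma still holds in this case --- $\epsilon\subword x$ always, and the constant sequence $x\step{\epsilon}x\step{\epsilon}\cdots$ is the required infinite run --- but it needs this separate one-line argument, which the paper's proof supplies explicitly (``in the degenerate case where $\sigma=\epsilon$''). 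A purely cosmetic point: the fixpoint property $y=\pr[\sigma](y)$ is established in the proofs of \Cref{lem-Isigma-principal} and \Cref{lem-sigma-omega-ptime}, not in their statements, so it should be cited as such.
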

\begin{proof}
(i) Write $y$ for $\pr[\sigma^\omega](\epsilon)$. The proof of
\Cref{lem-sigma-omega-ptime} shows that, unless $y=\bot$,
$y=\pr[\sigma](y)$ and thus $y \step{\sigma} y$.  \\
($\DaG$): Since $x\supword y$, we have $x \step{\sigma} y
\step{\sigma} y \step{\sigma} \cdots$ if $\sigma \neq \epsilon$, and
$x \step{\sigma} x \step{\sigma} x \step{\sigma} \cdots$ in the
degenerate case where $\sigma = \epsilon$.
\\
($\GaD$): We assume $\sigma \neq \epsilon$ since otherwise $x \supword
\epsilon = \pr[\sigma^\omega](\epsilon)$ holds trivially.  The
infinite sequence $x_0 \step{\sigma} x_1 \step{\sigma} x_2
\step{\sigma} \cdots$ satisfies $x_0 \supword \pr[\sigma^k](x_k)
\supword \pr[\sigma^k](\epsilon)$ for all $k\in\Nat$. Thus
$\pr[\sigma^\omega](\epsilon)\neq\bot$
and $x=x_0
\supword  \pr[\sigma^\omega](\epsilon)$.
\\

\noindent
(ii) is an immediate consequence of (i).
\end{proof}

By combining the above lemmas with \Cref{thm-reach-in-NP} and the
$\NP$-hardness results proven in \Cref{app-NP-hardness}, one now obtains:
\begin{theorem}
\label{thm-nonterm-NPcomp}
Nontermination and existence of a B\"uchi run are $\NP$-complete for
flat LCMs.
\end{theorem}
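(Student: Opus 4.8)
The plan is to prove \Cref{thm-nonterm-NPcomp} by reducing both liveness properties to a polynomial number of reachability queries, each of which is decidable in $\NP$ by \Cref{thm-reach-in-NP}, and then invoke the matching $\NP$-hardness results. First I would dispose of membership. For nontermination, the key observation is \Cref{lem-infinite-runs}.(ii): an infinite run from $(q,x)$ exists if and only if there is \emph{some} location $q'$ lying on an elementary cycle such that $(q,x)\step{*}(q',\pr[\sigma_{q'}^\omega](\epsilon))$. So the $\NP$ algorithm guesses such a $q'$ (there are at most $|S|$ choices, so one could even enumerate them), computes the target content $\pr[\sigma_{q'}^\omega](\epsilon)$ in time $O(|\sigma_{q'}|^3)$ by \Cref{lem-sigma-omega-ptime}, and then checks $(q,x)\step{*}(q',\pr[\sigma_{q'}^\omega](\epsilon))$ using the $\NP$ procedure of \Cref{thm-reach-in-NP}. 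One must first verify $\pr[\sigma_{q'}^\omega](\epsilon)\neq\bot$; if it equals $\bot$ that candidate $q'$ is rejected. Since a polynomial-size target is produced in polynomial time and reachability is in $\NP$, the whole test is in $\NP$.

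For the existence of a B\"uchi run, I would observe that the only semantic content beyond plain nontermination is a fixed accepting location (or set of accepting locations) that must be visited infinitely often. This is exactly the situation already handled by \Cref{lem-infinite-runs}.(ii), which characterises infinite runs visiting a \emph{given} $q'$ infinitely often. Thus, if the B\"uchi condition names an accepting set $F\subseteq Q$, the algorithm guesses an accepting $q'\in F$ that lies on an elementary cycle and runs the identical reachability check $(q,x)\step{*}(q',\pr[\sigma_{q'}^\omega](\epsilon))$; the guessed $q'\in F$ witnesses that the accepting location is hit infinitely often along the suffix cycle. Nontermination is then just the special case $F=Q$. So both problems share a single $\NP$ schema and membership follows uniformly.

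For $\NP$-hardness I would appeal directly to \Cref{thm-hard-acyclic} and \Cref{thm-hard-single-path}, which assert that nontermination is already $\NP$-hard for acyclic and for single-path channel machines, with reliable or unreliable channels; since acyclic and single-path LCMs are flat, this immediately gives $\NP$-hardness for nontermination of flat LCMs. Hardness for the existence of a B\"uchi run follows because nontermination is the instance of the B\"uchi problem in which every location is accepting, so the same reduction applies verbatim.

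The main obstacle I anticipate is not in the reduction itself but in making the size and timing claims airtight: one must ensure that the target content $\pr[\sigma_{q'}^\omega](\epsilon)$ is genuinely of polynomial size (this is guaranteed by the $O(|\sigma|^2)$ length bound in \Cref{lem-sigma-omega-ptime}) so that it can serve as the right-hand endpoint of a reachability instance without blowing up the input to \Cref{thm-reach-in-NP}, and one must correctly handle the degenerate and $\bot$ cases of $\pr[\sigma_{q'}^\omega](\epsilon)$ flagged in \Cref{lem-Isigma-principal}. Once these bookkeeping points are checked, the argument is a clean composition of the three cited results and requires no new technical machinery.
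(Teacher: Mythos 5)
Your proposal is correct and takes essentially the same route as the paper: the paper obtains membership by exactly this combination of \Cref{lem-infinite-runs} (reducing nontermination and B\"uchi acceptance to reachability of $(q',\pr[\sigma_{q'}^\omega](\epsilon))$ for a guessed $q'$ on a cycle), \Cref{lem-sigma-omega-ptime} (polynomial-time, polynomial-size computation of that target, including the $\bot$ case), and \Cref{thm-reach-in-NP}, with hardness imported from \Cref{app-NP-hardness}. The bookkeeping concerns you raise are precisely what those lemmas settle, so nothing further is needed.
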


\begin{remark}[Repeated coverability is $\NP$-complete]
Let us define more generally $I_\sigma(x)$ as
$\bigcap_{k=0,1,2,\ldots}\Pre[\sigma^k](\upc x)$, so that $I_\sigma$
really is shorthand for $I_\sigma(\epsilon)$.  For a location $q$ on a
$\sigma_q$-labelled cycle, $I_{\sigma_q}(x)$
characterises a form of \emph{repeated coverability}
since $y\in I_{\sigma_q}(x)$ iff there is an infinite run
from $(q,y)$ such that the channel contains a superword of $x$ every
time $q$ is (re)visited. Using some temporal logic, this could be
written under the form
\[
y\in I_{\sigma_q}(x) \iff (q,y) \models_\exists
\mathsf{G F}q\land\mathsf{G}(q\implies \textit{chan}\geq x)\:.
\]
The proof of \Cref{lem-sigma-omega-ptime} can be extended to the
computation of $I_\sigma(x)$. One obtains $I_{\sigma_q}(x)=\upc
y_0\cap \upc y_1\cap \cdots\cap\upc y_K$ for some $K$ in
$O(|\sigma|\cdot|x|)$.  We deduce that the  repeated coverability
problem is in $\NP$ for flat LCMs, and is indeed $\NP$-complete.

Note however that now the $(y_k)_k$ sequence does not necessarily
satisfies $y_0\subword y_1$, so that $I_\sigma(x)$ will have in
general several minimal elements, and possibly exponentially many.  In
fact already $\upc y_0\cap \upc y_1$ may  have exponentially many
minimal elements (see~\cite[\textsection~6.3]{ghkks-ideals}). Thus the
$\NP$-algorithm for repeated coverability  represents $I_\sigma(x)$ as
a conjunction of subword constraints, not via a set of minimal
elements, but this is sufficient for its purposes.
\qed
\end{remark}

Unboundedness reduces to reachability in a very similar way.
We say that a sequence of actions $\sigma$ is \emph{increasing} if
$u^{\ell_v}\subword v^{\ell_v-1}$ (and $\ell_v>0$) for $u\eqdef\rea(\sigma)$,
$v\eqdef\wri(\sigma)$ and $\ell_v\eqdef|v|$.
Now  $\pr[\sigma^\omega](\epsilon)$ and increasingness of $\sigma$
characterise
unbounded reachability sets.
\begin{lemma}[Proof in \Cref{app-unbounded-cns}]
\label{lem-unbounded-cns}
Let $x\in\Sigma^*$ be some channel contents and $\sigma$ a sequence of
channel actions. T.f.a.e.:\\
\noindent
\textit{(i)}
For all $k\in\Nat$ there exists $x_k$ with $x\step{\sigma^*}x_k$ and
$|x_k|\geq k$.
\\
\noindent
\textit{(ii)}
There exists an infinite unbounded sequence $x\step{\sigma^*}x_1
\step{\sigma^*}x_2 \step{\sigma^*}\cdots$ with
$|x_1|<|x_2|<\cdots$.
\\
\noindent
\textit{(iii)} $\sigma$ is increasing and $x \supword
\pr[\sigma^\omega](\epsilon)$.
\end{lemma}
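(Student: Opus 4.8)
The plan is to prove the three-way equivalence by establishing a cycle of implications: $(ii)\Rightarrow(i)$ is immediate, $(i)\Rightarrow(iii)$, and $(iii)\Rightarrow(ii)$. The implication $(ii)\Rightarrow(i)$ needs no work, since an unbounded increasing sequence certainly produces arbitrarily long reachable contents. The substantive content lies in connecting the dynamical condition (arbitrarily long reachable words, or an actual unbounded run) with the structural characterisation in $(iii)$, namely that $\sigma$ is \emph{increasing} and $x\supword\pr[\sigma^\omega](\epsilon)$.

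First I would unpack the increasingness condition. Writing $u=\rea(\sigma)$, $v=\wri(\sigma)$ and $\ell_v=|v|$, the condition $u^{\ell_v}\subword v^{\ell_v-1}$ with $\ell_v>0$ should be shown equivalent to the purely dynamical statement that iterating $\sigma$ on a sufficiently large channel strictly lengthens it — i.e., there is some fractional power $u^p$ with $|\pr[\sigma](u^p)|>|u^p|$, or more operationally that the exponent sequence from \Cref{lem-up-sigma-um} strictly increases in the regime $m>1$. The intuition is that each traversal reads $u$ and writes $v$, so net growth is possible exactly when what is written can ``feed'' more than one future read, which is precisely what $u^{\ell_v}\subword v^{\ell_v-1}$ captures at the level of how many copies of the read-word a block of writes can sustain. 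I expect this to hinge on the behaviour already analysed in \Cref{lem-up-sigma-um}(ii): once $\pr[\sigma^\omega](\epsilon)\neq\bot$ is ruled out, the $(y_k)_k$ sequence grows without bound, and increasingness is exactly the quantitative witness that this growth is genuinely length-increasing rather than merely subword-increasing.

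For $(i)\Rightarrow(iii)$, I would argue contrapositively in two parts. If $x\nsupword\pr[\sigma^\omega](\epsilon)$, then in particular (by \Cref{lem-infinite-runs}(i)) no infinite $\sigma$-run starts from $x$; since each finite iterate can only lose or transform a bounded amount and the reachable contents along any finite prefix are bounded by the run length, I would show that without an infinite run the set $\{x_k : x\step{\sigma^*}x_k\}$ is finite and hence of bounded length, contradicting $(i)$. If instead $x\supword\pr[\sigma^\omega](\epsilon)$ but $\sigma$ is \emph{not} increasing, I would use the characterisation from the previous paragraph to conclude that iterating $\sigma$ cannot strictly increase length beyond a fixed bound — the fixpoint $y=\pr[\sigma^\omega](\epsilon)$ satisfies $y\step{\sigma}y$ with no length growth — so again the reachable lengths stay bounded, contradicting $(i)$. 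For $(iii)\Rightarrow(ii)$, starting from $x\supword y=\pr[\sigma^\omega](\epsilon)$ I would build the run through the fixpoint $y$ and then exploit increasingness: from $y$ one can reach ever-longer contents $y^{+}$ by adding extra copies of $u$ (invoking \Cref{lem-up-sigma-um}(ii) in the forward direction), threading these into a single run $x\step{\sigma^*}x_1\step{\sigma^*}x_2\cdots$ with strictly increasing lengths.

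The main obstacle will be the equivalence between the combinatorial increasingness inequality $u^{\ell_v}\subword v^{\ell_v-1}$ and the dynamical ``length strictly grows under iteration'' condition, together with turning that growth into an actual \emph{single} infinite run rather than merely a family of finite runs of growing length. The subtlety is that message loss means a long reachable word need not be a stepping-stone toward an even longer one, so I must be careful that the growth achieved via \Cref{lem-up-sigma-um}(ii) can be realised along one coherent run; the fixpoint $y$ with $y\step{\sigma}y$ is the anchor that lets me insert growth between returns to $y$-superwords, and verifying the exact exponent bookkeeping (that reading $u$-blocks and writing $v$-blocks nets a strict length increase precisely under the stated inequality) is where the real work sits. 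I would relegate the detailed calculation justifying this equivalence, and the precise choice of fractional exponents, to \Cref{app-unbounded-cns} as the statement promises.
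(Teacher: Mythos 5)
There is a genuine gap, and it sits exactly where you yourself locate ``the real work'': the connection between the combinatorial inequality $u^{|v|}\subword v^{|v|-1}$ and actual unbounded growth is never proved, and the tools you propose for it are the wrong ones. The paper does not obtain this connection from the backward analysis at all: \Cref{lem-up-sigma-um}(ii) is a statement about predecessor exponents ($\pr[\sigma](u^{p+n})=u^{m+n}$ when $m>1$), i.e.\ about how much channel content is \emph{needed} before iterating $\sigma$; it says nothing about what can be \emph{produced} forward, so invoking it ``in the forward direction'' is not meaningful. Instead, the paper's proof (\Cref{app-unbounded-cns}) relies on a forward-reachability characterisation (\Cref{lem-step-vs-ominus}): $x'\step{\sigma^n}y$ for some $x'\subword x$ iff $x\step{\sigma^n}$ and $y\subword (x\cdot v^n)\ominus u^n$, which rewrites as $u^n y\subword x\,v^n$. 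From an unbounded family of runs this yields $u^{n_i-m}\subword v^{n_i-m-1}$ for some large $n_i$ (where $m=|x|$), and then a genuinely nontrivial combinatorial normalisation step --- Lemma~6.2 of~\cite{abdulla-forward-lcs}: if $w_1^k\subword w_2^{k-1}$ for \emph{some} $k\geq 1$ then already $w_1^{|w_2|}\subword w_2^{|w_2|-1}$ --- converts the existentially quantified exponent into the fixed exponent $|v|$ appearing in the definition of increasingness. Your proposal contains no substitute for either ingredient; without the normalisation lemma, even granting your dynamical reformulation, you only reach ``$u^n\subword v^{n-1}$ for some $n$'', not statement \textit{(iii)}.

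Two further steps of your contrapositive argument for $(i)\Rightarrow(iii)$ do not go through as written. First, the claim that if $\sigma$ is not increasing then ``iterating $\sigma$ cannot strictly increase length beyond a fixed bound'' because ``the fixpoint $y=\pr[\sigma^\omega](\epsilon)$ satisfies $y\step{\sigma}y$ with no length growth'' is circular: the existence of one non-growing loop at $y$ does not bound the lengths along \emph{other} runs; bounding all runs is precisely the contrapositive of the hard implication you are trying to prove, and it again requires the $\ominus$-based inequality $u^n y\subword x\,v^n$ satisfied by every reachable $y$. Second, in $(iii)\Rightarrow(ii)$ your ``threading'' of growth segments through the fixpoint needs to justify that each long content reached (e.g.\ $v^k$) still dominates $\pr[\sigma^\omega](\epsilon)$ so the run can be continued and grown further; the paper sidesteps this entirely by proving $(iii)\Rightarrow(i)$ directly --- increasingness gives $u^{k|v|}\subword v^{k(|v|-1)}$, hence $v^k\subword (x\cdot v^{k|v|})\ominus u^{k|v|}$, hence $x\step{\sigma^{k|v|}}v^k$ --- and then getting the single infinite run of $(ii)$ from $(i)$ by K\H{o}nig's Lemma, rather than constructing it by hand.
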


\begin{lemma}[Existence of unbounded runs] In a flat LCM, t.f.a.e.\\
(i) The reachability set $\post^*(q,x)$ is infinite.
\\
(ii) There is an unbounded run starting from $(q,x)$.
\\
(iii)
$(q,x)\step{*}(q',\pr[\sigma_{q'}^\omega](\epsilon))$ for some control
location $q'$ with an increasing $\sigma_{q'}$.
\end{lemma}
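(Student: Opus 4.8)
The plan is to prove the three-way equivalence by reducing to the single-cycle analysis already developed, in particular Lemma~\ref{lem-unbounded-cns} and Lemma~\ref{lem-infinite-runs}, exploiting flatness to localise unboundedness to a single cycle. The key observation is that in a flat LCM the reachability set, if infinite, must owe its infiniteness to a single increasing cycle: since the control graph has no nested cycles, any run factors as in \cref{eq-run-shape} through finitely many distinct locations $q_0,\ldots,q_m$, each carrying at most one cycle $\sigma_{q_i}$, so arbitrarily large reachable channel contents can only be produced by iterating one of these cycles arbitrarily often while its content grows without bound.

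First I would prove (iii)$\Rightarrow$(ii). Assume $(q,x)\step{*}(q',\pr[\sigma_{q'}^\omega](\epsilon))$ with $\sigma_{q'}$ increasing. By Lemma~\ref{lem-unbounded-cns}, applied with the channel contents $\pr[\sigma_{q'}^\omega](\epsilon)$ in place of $x$ (it satisfies clause~(iii) of that lemma since $\pr[\sigma_{q'}^\omega](\epsilon)\supword\pr[\sigma_{q'}^\omega](\epsilon)$ trivially and $\sigma_{q'}$ is increasing), there is an unbounded sequence $\pr[\sigma_{q'}^\omega](\epsilon)\step{\sigma_{q'}^*}x_1\step{\sigma_{q'}^*}x_2\cdots$ with strictly increasing lengths. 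Prefixing this with the run $(q,x)\step{*}(q',\pr[\sigma_{q'}^\omega](\epsilon))$ and keeping the control at $q'$ along the cycle yields an unbounded run from $(q,x)$, establishing (ii).

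Next, (ii)$\Rightarrow$(i) is immediate: an unbounded run visits configurations of unbounded channel length, all lying in $\post^*(q,x)$, so the reachability set is infinite. The main work is (i)$\Rightarrow$(iii). Suppose $\post^*(q,x)$ is infinite. Using flatness, decompose reachable configurations along runs of the shape \cref{eq-run-shape}: every reachable $(q',y)$ is obtained by iterating the $m\leq|Q|$ cycles $\sigma_{q_i}$ some numbers of times $n_i$. Since there are finitely many locations and finitely many acyclic segments, unbounded $|y|$ forces some single cycle on a location $q'$ to be traversed with unbounded content growth; formally, for each $k$ there is a reachable configuration of length $\geq k$, and by a pigeonhole argument over the finitely many locations, some fixed $q'$ must carry a cycle $\sigma_{q'}$ that is iterated to produce arbitrarily large contents. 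This means clause~(i) of Lemma~\ref{lem-unbounded-cns} holds for $\sigma_{q'}$ starting from some reachable content $x'$ at $q'$, so by that lemma $\sigma_{q'}$ is increasing and $x'\supword\pr[\sigma_{q'}^\omega](\epsilon)$. By Lemma~\ref{lem-infinite-runs}, reachability of a superword of $\pr[\sigma_{q'}^\omega](\epsilon)$ at $q'$ is witnessed precisely by $(q,x)\step{*}(q',\pr[\sigma_{q'}^\omega](\epsilon))$, giving~(iii).

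The hard part will be making the pigeonhole step in (i)$\Rightarrow$(iii) rigorous: I must argue that infiniteness of the global reachability set is \emph{concentrated} in a single increasing cycle rather than diffused across the bounded acyclic scaffolding. The cleanest way is to note that the acyclic part contributes only a bounded additive term to channel length (each $\theta_i$ writes a fixed word), while each cycle $\sigma_{q_i}$ contributes at most $n_i\,|\wri(\sigma_{q_i})|$; hence if all traversed cycles had bounded content growth the total length would stay bounded, contradicting infiniteness, so at least one cycle must be genuinely content-increasing, which is exactly the notion of \emph{increasing} $\sigma$ via Lemma~\ref{lem-unbounded-cns}.
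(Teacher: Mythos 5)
Your implications (iii)$\Rightarrow$(ii) and (ii)$\Rightarrow$(i) are fine --- the first is essentially the paper's own argument --- but the closing step of your (i)$\Rightarrow$(iii) contains a genuine gap. Every clause of \Cref{lem-unbounded-cns} speaks about iterating $\sigma$ \emph{from one fixed channel content}, so to invoke it you must exhibit a single location and a single reachable content $x'$ there from which iterating the cycle yields contents of unbounded length. Your counting argument only produces a cycle index $i$ whose growth $|z'_i|-|z_i|$ is unbounded \emph{across different runs with different entry contents}, and your claim that this ``is exactly the notion of increasing'' is false. Counterexample: $\sigma=\:?a\:!bb$ is not increasing (here $u=a$, $v=bb$, $\ell_v=2$, and $a^2\nsubword bb$), yet from entry content $a^k$ it reaches $b^{2k}$ after $k$ iterations, a growth of $k$; so in a flat machine where an increasing cycle $!a$ feeds contents $a^k$ into a $?a\:!bb$ cycle, both cycles have unbounded growth over the family of runs, and your argument may select the second, non-increasing one. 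The repair: after a pigeonhole on the finitely many location sequences of \cref{eq-run-shape}, let $i^*$ be the \emph{first} cycle index whose exit contents $z'_{i^*}$ are unbounded over the family of runs. By minimality of $i^*$ the entry contents $z_{i^*}$ have bounded length, hence take finitely many values, and a second pigeonhole fixes one value $x'$ for which the exits remain unbounded. Only now does clause (i) of \Cref{lem-unbounded-cns} apply, yielding that $\sigma_{q_{i^*}}$ is increasing and $x'\supword\pr[\sigma_{q_{i^*}}^\omega](\epsilon)$, after which your concluding step goes through.

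This gap is precisely what the paper's different route avoids. The paper proves (i)$\Leftrightarrow$(ii) by K\H{o}nig's Lemma and then shows (ii)$\Rightarrow$(iii) by working inside a \emph{single} unbounded run: some location $q'$ is visited infinitely often with unbounded contents, flatness forces consecutive visits to $q'$ to be connected by iterations of $\sigma_{q'}$, and so the first visit $(q',x')$ automatically supplies the fixed starting content, matching clause (ii) of \Cref{lem-unbounded-cns} with no pigeonhole needed. Your direct (i)$\Rightarrow$(iii), once repaired as above, would be a legitimate alternative that replaces the K\H{o}nig compactness step by combinatorics on run shapes, but as written the inference ``some cycle has unbounded growth, hence some cycle is increasing'' does not hold.
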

\begin{proof}[Proof (sketch)]~\\
$(\textit{ii}\implies\textit{iii})$:
In an unbounded run, there must be a control location $q'$ that is visited infinitely
many times with  associated channel contents that are unbounded. Since
 from $q'$ one can only return to $q'$ by running through the cycle
 around $q'$, hence performing  $\sigma_{q'}$ some number of times,
 the first visit of $q'$ is some $(q',x')$ satisfying
 case \textit{(ii)}  of \Cref{lem-unbounded-cns}. We deduce that
 $\sigma'_q$ is increasing and that
 $x'\supword\pr[\sigma_{q'}^\omega](\epsilon)$ as in case
 \textit{(iii)}  of the Lemma.
\\
\noindent
$(\textit{iii}\implies\textit{ii})$: by \Cref{lem-unbounded-cns} there
exists an unbounded run starting from
$(q',\pr[\sigma_{q'}^\omega](\epsilon))$. Hence there is one starting
from $(q,x)$.  \\
\noindent
$(\textit{i}\iff\textit{ii})$: is an application of K\H{o}nig's Lemma,
not specific to LCMs, see e.g.~\cite[\textsection 6]{phs-rp2010}.
\end{proof}
We can thus reduce unboundedness to reachability of an increasing
cycle.  With the $\NP$-hardness results proven in
\Cref{app-NP-hardness}, one now obtains:
\begin{theorem}
\label{thm-unbound-NPcomp}
Unboundedness for flat LCMs is $\NP$-complete.
\end{theorem}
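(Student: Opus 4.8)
The plan is to prove $\NP$-membership by reducing unboundedness to a single reachability query and then invoking \Cref{thm-reach-in-NP}; $\NP$-hardness is already supplied by the reductions in \Cref{app-NP-hardness}, which establish hardness of unboundedness even for acyclic and single-path machines (\Cref{thm-hard-acyclic,thm-hard-single-path}), so only the upper bound remains.

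For the upper bound I would rely on the equivalence between items (i) and (iii) of the preceding lemma on existence of unbounded runs: the reachability set $\post^*(q,x)$ is infinite if and only if there is a control location $q'$ carrying an increasing cycle $\sigma_{q'}$ with $(q,x)\step{*}(q',\pr[\sigma_{q'}^\omega](\epsilon))$. This immediately yields a nondeterministic procedure: guess the location $q'$ (only $|Q|$ candidates), then deterministically verify the two remaining conditions.

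First I would check that $\sigma_{q'}$ is increasing. By definition this amounts to testing $u^{\ell_v}\subword v^{\ell_v-1}$ (with $\ell_v>0$) for $u=\rea(\sigma_{q'})$, $v=\wri(\sigma_{q'})$ and $\ell_v=|v|$. Building a trivial SLP for the power word $u^{\ell_v}$, this is precisely a (scattered) subword test of an SLP against a power word, decidable in time $\poly(|\sigma_{q'}|)$ by the compressed-subword result of \Cref{ssec-SLP}. Next I would compute the target configuration $\pr[\sigma_{q'}^\omega](\epsilon)$: by \Cref{lem-sigma-omega-ptime} this takes time $O(|\sigma_{q'}|^3)$ and, when defined, produces a plain word of length $O(|\sigma_{q'}|^2)$, hence of size polynomial in $|S|$. (If $\pr[\sigma_{q'}^\omega](\epsilon)=\bot$ the cycle cannot be iterated forever and the current guess of $q'$ is rejected.) Finally I would decide the reachability query $(q,x)\step{*}(q',\pr[\sigma_{q'}^\omega](\epsilon))$ using the $\NP$ algorithm of \Cref{thm-reach-in-NP}.

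The composition is an $\NP$ algorithm: the outer guess of $q'$ is polynomial, the increasingness test and the computation of the target configuration are deterministic polynomial-time, and the reachability call is itself in $\NP$, so the whole machine is nondeterministic polynomial-time. The one point that deserves care is that the target configuration fed to the reachability oracle has size polynomial in the input, so that \Cref{thm-reach-in-NP} is being applied to a legitimately-sized instance; this is exactly what the $O(|\sigma_{q'}|^2)$ length bound from \Cref{lem-sigma-omega-ptime} guarantees. I do not expect a genuine obstacle here, since all the heavy lifting—the bilinear iteration bound of \Cref{thm-L-sigma-x}, the SLP-based polynomial-size reachability witness of \Cref{sec-NP-via-SLP}, and the efficient computation of $\pr[\sigma^\omega](\epsilon)$—has already been carried out earlier, reducing this theorem to a routine assembly of those ingredients.
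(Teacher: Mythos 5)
Your proposal is correct and follows exactly the paper's route: reduce unboundedness to reachability of an increasing cycle via the lemma on existence of unbounded runs, compute $\pr[\sigma_{q'}^\omega](\epsilon)$ in polynomial time by \Cref{lem-sigma-omega-ptime}, invoke the $\NP$ reachability algorithm of \Cref{thm-reach-in-NP}, and take hardness from \Cref{app-NP-hardness}. The only (harmless) embellishment is using the compressed-subword test for increasingness, which is unnecessary since $u^{\ell_v}$ has length at most $|\sigma_{q'}|^2$ and can be handled as a plain word.
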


\section{Conclusion}
We analysed the behaviour of the backward-reachability algorithm for
lossy channel machines when a cycle of channel actions can be
performed arbitrarily many times. This provides complexity bounds on
the size of runs that follow a bounded path scheme of the form
$\sigma_1^*\rho_1\sigma_2^*\rho_2\ldots\sigma_m^*\rho_m$, with
applications in the verification of flat systems, or in bounded
verification for general systems. The main result is an $\NP$ upper
bound for reachability and, by reduction, several other verification
problems like unboundedness or existence of a B\"uchi run.

Natural directions for future work include extending our approach to
deal with richer verification problems, like temporal logic model
checking.  It would also be interesting to consider more expressive
models, like the partially lossy channel systems
from~\cite{kocher2019} or the higher-order lossy channel systems and
priority channel systems from~\cite{HSS-lmcs}.

\section*{Acknowledgements}
We thank A.\ Finkel who raised the issue of flatness in lossy
channel systems. We also thank J.\ Leoux and S.\ Halfon for
useful comments that helped improve this paper.

\bibliography{shorted}

\appendix

\section{Some SLP algorithms}
\label{sec-new-SLP-algos}

We describe here some SLP algorithms that are not readily available in
the literature (as far as we know).  Formally, by ``an SLP $X$'' we
mean a grammar $(\Sigma,N,X,P)$ where $X\in N$ is the axiom (a non
terminal), where $\Sigma$ is the set of terminal letters, and where
the production rules in $P$ are either $A_i\to a$ or $A_i\to A_j A_k$
for some $a\in \Sigma$ and some nonterminals $A_i,A_j,A_k$ with
$i<j,k$. There is exactly one production rule for each $A_i\in N$, so
that each $A_i$ defines a unique word $L(A_i)\in\Sigma^*$.

\subsection{Deciding $X\subword v^n$}
\label{app-SLP-subword}

Deciding $X\subword Y$ between SLPs is a difficult
problem, \PP-hard as show in~\cite{lohrey2012}.
When $x$ (or $y$) is a plain word, the problem has polynomial-time solutions~\cite{MarSch-IPL2004,cegielski2006,yamamoto2011}.

Here we consider the special case where $Y$ is some $v^n$.

\begin{proposition}
\label{prop-vn-subword-X}
Deciding whether $X\subword v^n$, where $X$ is an SLP, $v$ is a plain
word, and $n$ is a fractional exponent, can be done in time
$O(\size{X}\cdot|v|+|v|^2+\log n)$.
\end{proposition}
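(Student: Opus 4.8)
The plan is to reduce $X\subword v^n$ to the \emph{greedy} subsequence matching of the expansion $x$ of $X$ against the periodic infinite word $v^\omega=vvv\cdots$. Recall that $x$ is a subword of a word $w$ iff the greedy left-to-right embedding succeeds; applied to $v^\omega$, this embedding consumes a uniquely determined shortest prefix $v^\omega[0:P]$, and then $X\subword v^n$ holds precisely when $P\le n\,|v|$ (note $n\,|v|\in\Nat$ since $n$ is a fractional exponent). A first necessary check is $\alphabet(x)\subseteq\alphabet(v)$, computable from $X$ in time $O(\size{X}+|v|)$: if it fails then $x$ contains a letter absent from $v^\omega$ and the answer is ``no''; otherwise the greedy match terminates and $P$ is finite. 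It thus remains to compute $P$, or rather its decomposition $P=c\cdot|v|+t$ into a number $c$ of \emph{complete copies} of $v$ consumed and a \emph{phase} $t\in\{0,\dots,|v|-1\}$ inside the current copy.

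The key is that, whenever the greedy match crosses a copy boundary, its future behaviour is entirely governed by the current phase. First I would preprocess $v$ into a ``next occurrence'' table recording, for every phase $s\in\{0,\dots,|v|-1\}$ and every letter $a\in\alphabet(v)$, the least position $\ge s$ carrying $a$, wrapping into the next copy when necessary; this takes time $O(|v|^2)$ (or $O(|v|\cdot|\alphabet(v)|)$) and yields, for every single letter $a$ and start phase $s$, the pair $(\text{copies},\text{phase})$ produced by matching $a$ from phase $s$. These single-letter transitions are the base case of a bottom-up computation over the SLP.

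Next I would propagate these transitions through the grammar. For each nonterminal $A$ (with expansion $w_A$) and each start phase $s\in\{0,\dots,|v|-1\}$ I compute the pair $\bigl(c_A(s),\,t_A(s)\bigr)$, where $t_A(s)$ is the phase reached after greedily matching $w_A$ from phase $s$ and $c_A(s)$ is the number of complete copies of $v$ crossed. Terminals are handled by the base case. For a rule $A\to BC$ one composes: matching $w_B$ from $s$ lands in phase $t_B(s)$ after $c_B(s)$ copies, and matching $w_C$ continues from that phase, so
\begin{equation*}
t_A(s)=t_C\bigl(t_B(s)\bigr),\qquad c_A(s)=c_B(s)+c_C\bigl(t_B(s)\bigr).
\end{equation*}
Each nonterminal thus costs $O(|v|)$ look-ups and additions, for a total of $O(\size{X}\cdot|v|)$ arithmetic operations. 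Reading off $c=c_X(0)$ and $t=t_X(0)$ at the axiom from phase $0$ gives $P=c\,|v|+t$, and comparing $P$ with $n\,|v|$ decides the instance; since $n$ is written in binary this final comparison costs $O(\log n+\size{X})$. Correctness follows by an easy induction on the SLP showing that $\bigl(c_A(s),t_A(s)\bigr)$ encodes exactly the greedy position reached after matching $w_A$ from phase $s$ in $v^\omega$.

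The main obstacle is the bookkeeping of the copy counts $c_A(s)$, which can be exponential in value (as large as $|w_A|\le 2^{\size{X}}$) even though they have bit-length $O(\size{X})$; one must check that the additions in the recurrence do not inflate the running time beyond the claimed bound (measuring arithmetic at unit cost per register, so that each rule contributes $O(|v|)$ operations). Secondary care is needed for the wrap-around indexing at copy boundaries --- whether reaching the end of a copy is recorded as a completed copy with phase $0$ or as phase $|v|$ --- and for making the ``$\le$'' test agree with the intended semantics of the fractional power $v^n$; fixing one convention consistently in the base case makes the composition step uniform.
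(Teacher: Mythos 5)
Your proposal is correct and takes essentially the same route as the paper: your per-nonterminal, per-phase table $\bigl(c_A(s),t_A(s)\bigr)$ is exactly the paper's quantity $p(A,v_{(s)})$ (the least fractional power of the cyclic shift $v_{(s)}$ that embeds the expansion of $A$) split into its integer and fractional parts, and your composition rule for $A\to BC$, the $O(|v|^2)$ next-occurrence preprocessing, and the final comparison against $n\,|v|$ constitute the same dynamic programming the paper describes. Your explicit caveat about copy counts of bit-length $O(\size{X})$ and unit-cost arithmetic is warranted --- the paper silently assumes the same model when it claims each table entry is computed in time $O(1)$ --- so it is a point of extra care, not a divergence.
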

\begin{proof}
For $v\neq\epsilon$ and some word $x$ such that
$\alphabet(x)\subseteq\alphabet(v)$, let us define $p(x,v)$ as the smallest
fractional power such that $x\subword v^p$.  Now $p(x,v)$ satisfies
the following equalities:
\begin{equation}
\label{eq-pvx}
\begin{aligned}
p(\epsilon,v)&=0
\\
p(a,v)&=
        \frac{i}{|v|}, \text{ if the first occurrence of $a$ in $v$ is
        at position $i$,}
\\
p(x\:y,v)&=p(x,v)+p(y,v_{(j)}), \text{ if $p(x,v)$ is some
$q+\frac{j}{|v|}$ with $q\in\Nat$.}
\end{aligned}
\end{equation}
Using \cref{eq-pvx} leads to a dynamic programming algorithm computing
$p(X,v)$ for an SLP $X$. After checking that
$\alphabet(X)\subseteq\alphabet(v)$, one computes the values of all
$p(A,v_{(i)})$ for $i=1,\ldots,|v|$ and $A$ a nonterminal in SLP
$X$. Each of these $O(\size{X}\cdot|u|)$ values is computed in time
$O(1)$ if one precomputes the first occurrences of letters in the
cyclic shifts of $v$, say in time $O(|v|^2)$.  Finally, one only has
to compare $p(X,v)$ with $n$.
\end{proof}

\subsection{Computing $X/v^k$}
\label{ssec-residual-via-SLP}

\begin{proposition}
\label{prop-X-remove-vn}
Building a SLP for $X/v^k$, where $X$ is an SLP, $v$ is a plain word,
and $k\in\Nat$, can be done in time $\poly(\size{X}+|v|+\log n)$.
\end{proposition}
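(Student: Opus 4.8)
The plan is to reduce the construction to locating a single cut position in $x$ and then extracting a prefix. Recall from \cref{eq-pr-def} and the identity $(x/v)/v'=x/(v'v)$ that $X/v^k=\pr[(!v)^k](X)$ is exactly $x/v^k$, the prefix of $x$ obtained by deleting its longest suffix that embeds as a scattered subword of $v^k$. Since the set of indices $\ell$ with $x_{>\ell}\subword v^k$ is upward closed (a suffix of a subword of $v^k$ is again a subword of $v^k$), it suffices to compute the least such index $\ell^*$: then $X/v^k=x_{\leq\ell^*}$, and the required SLP is produced from $X$ and $\ell^*$ by the \textbf{factor} operation in time $O(\size{X})$. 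Thus the whole problem is to compute $\ell^*$, equivalently the length $s^*=|x|-\ell^*$ of the longest suffix of $x$ lying in $v^k$.

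First I would reverse everything. Writing $\bar w$ for the mirror image of a word, reversal turns each binary rule $A\to BC$ of $X$ into $A\to CB$ and leaves terminal rules unchanged, so in time $O(\size{X}+|v|)$ one obtains an SLP $\bar X$ for $\bar x$ together with the plain word $\bar v$. Since $s\subword v^k\iff\bar s\subword\bar v^k$ and suffixes of $x$ correspond to prefixes of $\bar x$, the number $s^*$ is precisely the length of the longest prefix of $\bar x$ that is a scattered subword of $\bar v^k$. This places us back in the setting of \Cref{prop-vn-subword-X}: with $p(t,\bar v)$ the least fractional power such that $t\subword\bar v^{p(t,\bar v)}$, a prefix $t$ of $\bar x$ satisfies $t\subword\bar v^k$ iff $p(t,\bar v)\leq k$, and $p(t,\bar v)$ is nondecreasing as $t$ grows.

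Next I would run the dynamic program of \cref{eq-pvx} on $\bar X$. After precomputing the first occurrences of letters in all cyclic shifts of $\bar v$ in time $O(|v|^2)$, I compute for every nonterminal $A$ of $\bar X$ and every phase $j\in\{0,\ldots,|v|-1\}$ the power $\mathrm{cost}(A,j)$ consumed by greedily embedding $L(A)$ into $\bar v^{\infty}$ started at phase $j$, together with the resulting end phase; the additive rule $p(t\,t',\bar v)=p(t,\bar v)+p(t',\bar v_{(j)})$, where $j$ is the end phase reached after $t$, makes each of the $O(\size{X}\cdot|v|)$ entries computable in $O(1)$ from those of its children, and a letter not occurring in $\bar v$ is assigned cost $\infty$. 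To find $s^*$ I descend $\bar X$ from the axiom, carrying a remaining budget (initially $k$) and a current phase (initially $0$): at a rule $A\to BC$ I compare $\mathrm{cost}(B,\cdot)$ against the budget; if $B$ fits entirely I consume it, decrease the budget and advance the phase, then descend into $C$, and otherwise I descend into $B$; at a terminal I simply test whether that single letter still fits. This locates the exact boundary in $O(\size{X})$ steps, yielding $s^*$, hence $\ell^*$, after which \textbf{factor} produces the SLP for $x_{\leq\ell^*}$.

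The main obstacle is the bookkeeping inside the descent rather than its overall shape. The fractional powers $\mathrm{cost}(A,j)$ have denominator $|v|$ but numerators as large as $|x|$, hence of magnitude $2^{O(\size{X})}$; representing them in binary keeps each arithmetic operation polynomial and is precisely where compression is exploited, while the budget $k$ contributes only $O(\log k)$ bits. The delicate cases are the boundary ones: a letter that exactly completes a copy of $\bar v$ (so the end phase wraps around), the distinction between consuming a block fully within budget and stopping strictly inside it, and letters absent from $v$ that force an immediate cut, all of which must be handled consistently with the comparison $p(t,\bar v)\leq k$. Tracking the end phase alongside the cost, exactly as in the additive recursion of \cref{eq-pvx}, resolves these uniformly. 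Summing the contributions, namely $O(|v|^2)$ for the occurrence tables, $O(\size{X}\cdot|v|)$ table entries each computed with polynomial-size integer arithmetic, and $O(\size{X})$ for the descent and the final \textbf{factor}, gives the claimed $\poly(\size{X}+|v|+\log k)$ bound.
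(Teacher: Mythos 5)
Your argument is correct, but it is genuinely different from the paper's, which is deliberately more minimalist. The paper observes that the suffixes of $x$ embedding into $v^k$ are exactly those below some threshold length $s^*$, and that for any candidate $\ell$ one can test where the threshold lies by extracting (via the \textbf{factor} operation) an SLP for the suffix of length $|X|-\ell$ and feeding it to \Cref{prop-vn-subword-X}; a dichotomic search over $\ell$ then finds $s^*$ with $O(\size{X})$ such tests, each polynomial, and that is the whole proof. You instead locate the cut position directly: you reverse the SLP so that suffixes of $x$ become prefixes of $\bar x$, build the table of greedy embedding costs $p(A,\bar v_{(j)})$ of \cref{eq-pvx} once for all nonterminals and phases, and then find the boundary by a single root-to-leaf descent carrying a budget and a phase. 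This is essentially the ``better, dynamic programming algorithm'' that the paper's own footnote alludes to but declines to spell out: it avoids re-running the subword test $O(\size{X})$ times, at the price of the reversal step and the boundary bookkeeping you flag (phase wrap-around, blocks consumed exactly at budget, letters outside $\alphabet(v)$), all of which is routine but must be done carefully. Both routes give the claimed $\poly(\size{X}+|v|+\log k)$ bound; the paper's buys brevity and black-box reuse of \Cref{prop-vn-subword-X}, yours buys a sharper running time and an explicit, self-contained algorithm. One cosmetic point: the \textbf{factor} operation is only guaranteed to run in polynomial time and produce an SLP of size $O(\size{X})$, so you should not claim it takes time $O(\size{X})$, though this does not affect the overall polynomial bound.
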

\begin{proof}
For given $\ell$, deciding whether $X/v^k$ has length at least $\ell$
is easy: One just applies the definition, builds an SLP $X'$ for the
suffix of length $|X|-\ell$ of $X$, and checks that it is a subword of
$v^k$ with \cref{prop-vn-subword-X}.

Thus one can computes $|X/v^k|$ by finding the length of the result
via dichotomic search, repeating the previous process $\log |X|$,
i.e., $O(\size{X})$, times.\footnote{A better, dynamic
programming, algorithm exists but here we aim for the simplest feasability proof.}
\end{proof}

\section{Forward reachability techniques}
\label{app-forward}

We collect in this section some proofs relying on forward-reachability
analysis.

Let us reuse notations from~\cite{abdulla-forward-lcs} and define a
partial function $x\ominus u$ between channel contents as follows:
\begin{equation}
x \ominus u \eqdef
\begin{cases}
        x           &\text{ if $u=\epsilon$,} \\
   \text{undefined} &\text{ if $x=\epsilon$ and $u\neq\epsilon$,} \\
   x'\ominus u' &\text{ if $x=a x'$ and $u=a u'$ for some $a\in\Sigma$,} \\
   x'\ominus u &\text{ if $x=a x'$ and $u=bu'$ for some $a\neq b\in\Sigma$.}
\end{cases}
\end{equation}
Observe that $x\ominus u$ is defined if, and only if, $u\subword x$.
Note also that, when $x\ominus u$ is defined, we can use monotonicity
and commutation with concatenation:
\begin{equation}
\label{eq-mono-and-comm}
\text{if } u\subword x \text{ then for all }
x':
\begin{cases}
x\subword x' \text{ implies } x\ominus u\subword x'\ominus u    \:,
\\
(x\ominus u)\cdot x' = (xx')\ominus u   \:.
\end{cases}
\end{equation}

\medskip

\noindent
Now $x\ominus u$ captures the forward effects of $?u$ actions in LCMs:
\begin{lemma}
$x\step{?u}y$ iff $y\subword x\ominus u$.
\end{lemma}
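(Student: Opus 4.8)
The goal is to prove the equivalence $x \step{?u} y \iff y \subword x \ominus u$, which connects the operational semantics of a read action to the forward operator $\ominus$. The plan is to unfold both sides using their definitions and reduce the claim to the algebraic identity $w \, y \subword x \iff y \subword x \ominus w$, then invoke this with $w = u$.

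\medskip

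\noindent
First I would recall from \cref{eq-def-sem-actions} that $x \step{?u} y$ is defined to mean $u \, y \subword x$. So the statement to prove is exactly $u \, y \subword x \iff y \subword x \ominus u$. The natural route is induction on the pair $(x,u)$ following the four-case recursive definition of $\ominus$, and I expect to prove the slightly more general fact that for any words $w$ and $y$, one has $w \, y \subword x \iff y \subword x \ominus w$ whenever $w \subword x$ (recalling from the excerpt that $x \ominus w$ is defined precisely when $w \subword x$). The base cases are immediate: when $w = \epsilon$ we have $x \ominus \epsilon = x$ and the biconditional collapses to $y \subword x \iff y \subword x$; when $x = \epsilon$ and $w \neq \epsilon$, the subword $w \, y \subword \epsilon$ fails and $x \ominus w$ is undefined, so both sides are vacuously consistent.

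\medskip

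\noindent
For the inductive step I would write $x = a x'$ and split on whether $u$ begins with $a$. If $u = a u'$, then $x \ominus u = x' \ominus u'$, and I would argue $a u' y \subword a x' \iff u' y \subword x'$: the forward direction peels off the leading $a$ (any embedding of $a u' y$ into $a x'$ can be assumed to match the first $a$ with the leading $a$ of $x$, since if it matched a later occurrence we could shift it earlier), and the reverse direction simply prepends. The right-hand side $u' y \subword x'$ equals $y \subword x' \ominus u'$ by the induction hypothesis, closing the case. If $u = b u'$ with $b \neq a$, then $x \ominus u = x' \ominus u$, and here the leading $a$ of $x$ cannot be used to match the leading $b$ of $u$, so $b u' y \subword a x' \iff b u' y \subword x'$, which by induction is $y \subword x' \ominus u = x \ominus u$.

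\medskip

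\noindent
The main obstacle is the ``shifting'' argument in the matching case $u = a u'$: I must justify that an arbitrary scattered-subword embedding of $a u' y$ into $a x'$ can be normalized so that its first letter lands on the leading position of $x$. This is the standard greedy/leftmost-matching property of scattered subwords, and it is where the real content of the lemma sits; everything else is bookkeeping along the recursion. Once this is established, specializing $w = u$ gives the stated lemma immediately. The monotonicity and commutation facts in \cref{eq-mono-and-comm} are not strictly needed for this particular equivalence but confirm that $\ominus$ behaves coherently with the subword ordering, which is reassuring for the leftmost-matching step.
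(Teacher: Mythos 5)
Your proof is correct, but note that the paper offers no proof at all for this lemma: it is stated as an immediate consequence of the definitions (the semantics $x\step{?u}y \equivdef u\,y\subword x$ from \cref{eq-def-sem-actions} together with the recursive definition of $\ominus$), so there is nothing to compare line by line. Your induction following the four cases of the definition of $\ominus$ is exactly the argument the paper leaves implicit: the two base cases are trivial, the case split on whether the head of $x$ matches the head of $u$ mirrors the recursion, and the only real content is the treatment of an arbitrary scattered embedding in the matching case. There, in fact, you do not even need the full ``shift to leftmost'' normalization: if the leading letter of $a\,u'y$ is matched at a later occurrence of $a$ in $a\,x'$, then $u'y$ already embeds into a suffix of $x'$, hence into $x'$, which is all the forward direction requires. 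One bookkeeping remark: since $x\ominus u$ is undefined whenever $u\nsubword x$, the clean invariant to carry through the induction is ``$u\,y\subword x$ iff ($x\ominus u$ is defined and $y\subword x\ominus u$)''; your phrase ``vacuously consistent'' amounts to this, but it is worth stating explicitly because definedness can also fail inside the recursive cases (e.g.\ $x'\ominus u'$ may be undefined), not only in the base case $x=\epsilon$, and the inductive hypothesis must cover those occurrences too.
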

We can also use $\ominus$ to characterise the outcome of arbitrary
sequences of actions.
\begin{lemma}
\label{lem-step-vs-ominus}
Let  $\sigma\in\Act_\Sigma^*$ be an arbitrary sequence
of actions.
\[
x'\step{\sigma}y
\text{ for some } x'\subword x
\text{ iff }
x\step{\sigma} \text{ and } y\subword
\bigl(x\cdot\wri(\sigma)\bigr)\ominus\rea(\sigma) \:.
\]
\end{lemma}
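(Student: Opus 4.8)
The plan is to prove the equivalence by treating each direction separately, with the backward direction being essentially a forward-simulation argument and the forward direction an induction on $\sigma$ that carefully tracks how message losses and reads interact.

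First I would establish a key auxiliary observation characterising the \emph{maximal} reachable contents: I claim that if $x \step{\sigma}$ is possible at all (i.e.\ the sequence of reads can be performed from $x$), then among all $y$ with $x \step{\sigma} y$ there is a largest one w.r.t.\ $\subword$, namely $\bigl(x\cdot\wri(\sigma)\bigr)\ominus\rea(\sigma)$. This mirrors the preceding lemma $x\step{?u}y \iff y\subword x\ominus u$, which already handles the pure-read case, and the trivial write case $x\step{!w}y \iff y\subword xw$. The heart of the argument is then an induction on the structure of $\sigma$. Writing $\sigma = \sigma'\theta$ with $\theta$ a single action, I would use the compositional definitions $\wri(\sigma'\theta)=\wri(\sigma')\wri(\theta)$ and $\rea(\sigma'\theta)=\rea(\sigma')\rea(\theta)$ together with the commutation law \cref{eq-mono-and-comm}, namely $(x\ominus u)\cdot x' = (xx')\ominus u$, to push the final action through the $\ominus$ expression.

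For the direction ($\Leftarrow$): assuming $x\step{\sigma}$ and $y\subword\bigl(x\cdot\wri(\sigma)\bigr)\ominus\rea(\sigma)$, I would exhibit an actual run by taking $x'=x$ itself (no preliminary loss) and simulating $\sigma$ so as to reach the maximal element, then losing messages down to $y$ using the monotonicity built into \cref{eq-def-sem-actions}. For the direction ($\Rightarrow$): given $x'\subword x$ with $x'\step{\sigma}y$, I would first argue that $x\step{\sigma}$ is also possible, since enabling a read only requires a superword (again by monotonicity), and then show $y\subword\bigl(x\cdot\wri(\sigma)\bigr)\ominus\rea(\sigma)$ by induction, using \Cref{lem-pre-mono}-style monotonicity of $\ominus$ in its first argument (the first clause of \cref{eq-mono-and-comm}) to replace $x'$ by the larger $x$ without decreasing the reachable maximum.

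The main obstacle I anticipate is the interleaving of reads and writes within $\sigma$: the quantities $\wri(\sigma)$ and $\rea(\sigma)$ collapse all writes and all reads into two words, but in an actual run a letter written early can be read later, so it is not \emph{a priori} obvious that comparing the single accumulated write-word against the single accumulated read-word correctly captures the net effect. The commutation identity $(x\ominus u)\cdot x' = (xx')\ominus u$ is exactly what lets me commute a deferred write past a pending read, and getting the induction to apply this identity in the right order---peeling actions off the right end of $\sigma$ while the $\ominus$ consumes reads from the left of the channel---is the delicate bookkeeping step. I would be careful that $\ominus$ is only a \emph{partial} function, so each inductive step must first reestablish that the relevant $\ominus$ expression is defined, which by the stated characterisation amounts to maintaining $\rea(\sigma)\subword x\cdot\wri(\sigma)$ throughout.
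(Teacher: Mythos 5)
Your plan is correct in outline and is, at heart, the paper's own argument: both proceed by induction on $\sigma$, using monotonicity of $\ominus$, its commutation with concatenation, and the fact that the right-hand side describes the downward closure of the maximal reachable content. The real difference is the direction of the induction. The paper peels off the \emph{first} action (writing $\sigma=\:!w\cdot\sigma'$ or $\sigma=\:?w\cdot\sigma'$) and applies the induction hypothesis to the residual content ($xw$, resp.\ $x\ominus w$), so each case is a direct rewriting of the invariant. You peel off the \emph{last} action, apply the hypothesis to the prefix $\sigma'$ from the unchanged $x$, and then push the final action through the reachable set $\{y' : y'\subword (x\cdot\wri(\sigma'))\ominus\rea(\sigma')\}$. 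Your route works too, but it is slightly heavier: to handle a final read $?w$ you must use \emph{both} directions of the induction hypothesis at once, since ``$x\step{\sigma'\:?w}$ is enabled'' holds iff $w\subword (x\cdot\wri(\sigma'))\ominus\rea(\sigma')$, and that equivalence needs the maximum to be actually \emph{attained} by a run, not merely to be an upper bound. The paper's left-to-right unfolding never needs this, because enabledness propagates through the rewriting of the starting content.

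Two concrete caveats on your toolkit. First, the commutation law $(x\ominus u)\cdot x' = (xx')\ominus u$ of \cref{eq-mono-and-comm} only lets you push a final \emph{write} through the $\ominus$ expression. For a final read $?w$ you need the different identity $z\ominus(u\cdot w) = (z\ominus u)\ominus w$ (composition of reads in the \emph{second} argument of $\ominus$), which is true, easy to prove from the definition, but is not the law you cite --- and it is exactly where your ``delicate bookkeeping'' lives. (To be fair, the paper's own read case uses the same identity silently under its citation of \cref{eq-mono-and-comm}.) Second, your closing remark risks conflating definedness of the $\ominus$ expression with enabledness of $\sigma$: the condition $\rea(\sigma)\subword x\cdot\wri(\sigma)$ is strictly weaker than $x\step{\sigma}$. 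For $\sigma=\:?a\:!a$ and $x=\epsilon$ the expression $(x\cdot\wri(\sigma))\ominus\rea(\sigma)=a\ominus a=\epsilon$ is perfectly well defined, yet no run exists because the write occurs after the read; dropping the run-based conjunct ``$x\step{\sigma}$'' in favour of definedness would make the stated equivalence false. So enabledness must be carried through your induction as a separate invariant, exactly as in the lemma's statement.
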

\begin{proof}
By induction on the length of $\sigma$. The existential quantification
on some $x'\subword x$ accounts for the case where $\sigma=\epsilon$
is the empty sequence.

For the inductive step, we consider two cases:
\begin{enumerate}
\item
$\sigma=\:!w\cdot\sigma'$:
For the ``$\GaD$'' direction,
  $x\step{\sigma}y$ implies $x'\step{\sigma'}y$ for some $x'\subword x
  w$, which implies
\begin{xalignat*}{2}
y &\subword \bigl(x' w\cdot\wri(\sigma')\bigr) \ominus \rea(\sigma')
&&\text{by ind.\ hyp.,}
\\
 &= \bigl(x'\cdot\wri(\sigma)\bigr) \ominus \rea(\sigma)
&&\text{since $\sigma=\:!w\cdot\sigma'$,}
\\
 &\subword \bigl(x\cdot\wri(\sigma)\bigr) \ominus \rea(\sigma)
&&\text{by monotonicity.}
\end{xalignat*}
For the ``$\DaG$'' direction, we know that
$y \subword \bigl(x.\wri(\sigma)\bigr)\ominus\rea(\sigma)
 = \bigl(x w.\wri(\sigma')\bigr)\ominus\rea(\sigma')$, so the
ind.\ hyp.\ tells us that $x'\step{\sigma'}y$ for some $x'\subword
xw$. We deduce $x\step{!w}x'\step{\sigma'}y$.

\item
$\sigma=\:?w\cdot\sigma'$:
For the ``$\GaD$'' direction, $x'\step{\sigma}y$ implies $x'\step{?w}x''\step{\sigma'}y$ for
some $x''\subword x'\ominus w$.
We have
\begin{xalignat*}{2}
y&\subword \bigl(x''.\wri(\sigma')\bigr)\ominus\rea(\sigma')
&&\text{by ind.\ hyp.,}
\\
&\subword \bigl([x\ominus w]\cdot\wri(\sigma')\bigr)\ominus\rea(\sigma')
&&\text{by monotonicity,}
\\
&=\bigl(x\cdot\wri(\sigma')\bigr)\ominus \bigl(w\cdot\rea(\sigma')\bigr)
&&\text{by \eqref{eq-mono-and-comm},}
\\
&=(x\cdot\wri(\sigma))\ominus \rea(\sigma)
&&\text{since $\sigma = \:?w\cdot\sigma'$.}
\end{xalignat*}
For the ``$\DaG$'' direction, we know that
$x\step{\sigma}$ hence in particular $x\ominus w$ is defined.
We also know that
$y\subword\bigl(x.\wri(\sigma)\bigr)\ominus\rea(\sigma)
=\bigl(x.\wri(\sigma')\bigr)\ominus(w\cdot\rea(\sigma'))
=\bigl((x\ominus w).\wri(\sigma')\bigr)\ominus\rea(\sigma')$, so by
ind.\ hyp.\ there is some $x'\subword x\ominus w$ with
 $x'\step{\sigma'}y$ for some $x'\subword x w$. We deduce
$x\step{?w}x'\step{\sigma'}y$.
\end{enumerate}
\end{proof}

\subsection{Proof of \Cref{lem-unbounded-cns}}
\label{app-unbounded-cns}

Write $u$, $v$ for $\rea(\sigma)$, $\wri(\sigma)$.

\medskip

\noindent
$(\textit{ii}\implies\textit{iii})$: we only have to prove that
$\sigma_q$ is increasing since \Cref{lem-infinite-runs} entails
$x\supword \pr[\sigma^\omega](\epsilon)$ already.

By assumption, there is a sequence $x_1,x_2,\ldots$ of channel
contents of increasing length, and some numbers
$n_1,n_2,\ldots$ in $\Nat$ such that
$x\step{\sigma^{n_i}}x_{i}$. W.l.o.g.\ we
can assume $n_1<n_2<\cdots$.

With \Cref{lem-step-vs-ominus} we deduce $x_i\subword(x \:
v^{n_i})\ominus u^{n_i}$, hence $u^{n_i} \: x_i\subword x \: v^{n_i}$,
for all $i=1,2,\ldots$
If $u=\epsilon$, $\sigma$ is trivially increasing, so assume $|u|>0$
and write $m=|x|$:
we get $u^{n_i-m}x_i \subword v^{n_i}$ for all $i$ such that $n_i\geq m$.
Now take $i$ such that
$|x_i|\geq(m+1)|v|$ (and such that $n_i>m$): we get
$u^{n_i-m} \subword v^{n_i-m-1}$.
We now applies Lemma~6.2
from~\cite{abdulla-forward-lcs}: ``if there is some $k\geq 1$ such
that $w_1^k\subword w_2^{k-1}$ (for two words $w_1,w_2$), then in
particular one can choose $k=|w_2|$''. This yields $u^{|v|}\subword
v^{|v|-1}$, i.e., $\sigma$ is increasing.
\\

\noindent
$(\textit{iii}\implies\textit{i})$: we assume that $\sigma$ is
increasing, i.e., $u^{|v|}\subword v^{|v|-1}$, and that $x\supword
\pr[\sigma_q^\omega](\epsilon)$.  The second assumption entails that
$x\step{\sigma^n}$ for all $n$.  The first assumption entails
$v^k \subword x \: v^{k|v|}\ominus u^{k|v|}$, hence
$x\step{\sigma^{k|v|}}v^k$ by \Cref{lem-step-vs-ominus}, for all $k\in\Nat$.\\

\noindent
$(\textit{i}\implies\textit{ii})$:
is an application of K\H{o}nig's Lemma, not specific to LCMs, see
e.g.~\cite[\textsection 6]{phs-rp2010}.

\section{$\NP$-hardness for flat LCMs and flat FIFO machines}
\label{app-NP-hardness}

LCMs are derived from FIFO automata~\cite{vauquelin80,brand83} and
our $\NP$-hardness results apply to both models. \emph{FIFO automata}, sometimes
called \emph{queue automata}, or \emph{communicating finite state machines}, are
reliable channel machines where messages are never lost. Their
operational semantics is based on a reliable notion of steps, formally
given by $x\relstep{!\:w}y \;\equivdef\; y= x w$ and $x\relstep{?\:w}y
\;\equivdef\; w y= x$, to be compared with
\Cref{eq-def-sem-actions}. This is extended to $x\relstep{\sigma}y$,
$c\relstep{*}c'$, etc., as for LCMs.

\subsection{Proof of \Cref{thm-hard-acyclic}: $\NP$-hardness for acyclic machines}
\label{app-nphard-cycle-free}

We first show hardness for reachability and reduce from $\SAT$. Let
$\phi=C_1\land\cdots\land C_m$ be a 3CNF with Boolean variables among
$V=\{v_1,\ldots,v_n\}$. With $\phi$ we associate a machine $S_\phi$ as
illustrated below in \cref{fig-reduction-SAT2LCM}.

Let us explain informally how $S_\phi$ operates. Starting from
$I^{\text{b}}$ it first reaches $I^{\text{e}}$ while writing in the
channel a word of the form $w\:\ttdol$ with $w\in\{0,1\}^n$.  This
word encodes a valuation of the Boolean variables and carries an end
marker $\ttdol$.  Then $S_\phi$ crosses from $C_1^{\text{b}}$ to
$C_1^{\text{e}}$: this requires reading the valuation on the channel
and checking that it satisfies $C_1$. For this $S_\phi$ has to choose
the line corresponding to one of the three literals in $C_1$, in fact
choose one literal made true by the valuation. During this check, the
valuation is written back on the channel.  Then $S_\phi$ checks that
the remaining clauses, $C_2$ to $C_m$, are satisfied by the valuation,
each time reading the valuation and writing it back on the channel.
Finally, the last leg from $V^{\text{b}}$ to $V^{\text{e}}$ checks
that no message has been lost during all this run.
\begin{figure}[htbp]
\centering
\scalebox{0.77}{
  \begin{tikzpicture}[->,>=stealth',shorten >=1pt,node distance=6em,thick,auto,bend angle=30]
\tikzstyle{every state}=[minimum size=2em,initial where=above,initial text={}]

\node at (0,10)	   [state,initial] (c0) {\footnotesize $I^{\text{b}}$};
\node at (1.5,10)  [state]	   (c1) {\footnotesize $1$};
\node at (3,10)	   [state]	   (c2) {\footnotesize $2$};
\node at (4.5,10)  [state]	   (c3) {\footnotesize $3$};
\node at (6,10)	   [state]	   (c4) {\footnotesize $4$};
\node at (7.5,10)  [state,dashed]  (c5) {};
\node at (9,10)	   [state,dashed]  (c6) {};
\node at (10.5,10) [state]	   (c7) {$n$};
\node at (12,10)   [state]	   (c8) {\footnotesize $I^{\text{e}}$};

\node at ($(c5)!0.5!(c6)$) [minimum size=0em,inner sep=0.4em] (cdots) {\Large $\cdots$};

\path (c0) edge [bend left=20]	       node	   {$!0$}  (c1);
\path (c0) edge [bend right=20]	       node [swap] {$!1$}  (c1);
\path (c1) edge [bend left=20]	       node	   {$!0$}  (c2);
\path (c1) edge [bend right=20]	       node [swap] {$!1$}  (c2);
\path (c2) edge [bend left=20]	       node	   {$!0$}  (c3);
\path (c2) edge [bend right=20]	       node [swap] {$!1$}  (c3);
\path (c3) edge [bend left=20]	       node	   {$!0$}  (c4);
\path (c3) edge [bend right=20]	       node [swap] {$!1$}  (c4);
\path (c4) edge [bend left=20]	       node	   {$!0$}  (c5);
\path (c4) edge [bend right=20]	       node [swap] {$!1$}  (c5);
\path (c6) edge [bend left=20,dashed]  node	   {$!0$}  (c7);
\path (c6) edge [bend right=20,dashed] node [swap] {$!1$}  (c7);
\path (c7) edge			       node	   {$!\ttdol$} (c8);

\node at (0,8)	  [state]	 (d0) {\footnotesize $C^{\text{b}}_1$};
\node at (1.5,8)  [state]	 (d1) {};
\node at (3,8)	  [state]	 (d2) {};
\node at (4.5,8)  [state]	 (d3) {};
\node at (6,8)	  [state]	 (d4) {};
\node at (7.5,8)  [state,dashed] (d5) {};
\node at (9,8)	  [state,dashed] (d6) {};
\node at (10.5,8) [state]	 (d7) {};
\node at (12,8)	  [state]	 (d8) {};

\node at ($(d5)!0.5!(d6)$) [minimum size=0em,inner sep=0.4em] (ddots) {\Large $\cdots$};
\path (d0) edge [color=red]	       node	   {$?1\:!1$}	(d1);
\path (d1) edge [bend left=20]	       node	   {$?0\:!0$}	(d2);
\path (d1) edge [bend right=20]	       node [swap] {$?1\:!1$}	(d2);
\path (d2) edge [bend left=20]	       node	   {$?0\:!0$}	(d3);
\path (d2) edge [bend right=20]	       node [swap] {$?1\:!1$}	(d3);
\path (d3) edge [bend left=20]	       node	   {$?0\:!0$}	(d4);
\path (d3) edge [bend right=20]	       node [swap] {$?1\:!1$}	(d4);
\path (d4) edge [bend left=20]	       node	   {$?0\:!0$}	(d5);
\path (d4) edge [bend right=20]	       node [swap] {$?1\:!1$}	(d5);
\path (d6) edge [bend left=20,dashed]  node	   {$?0\:!0$}	(d7);
\path (d6) edge [bend right=20,dashed] node [swap] {$?1\:!1$}	(d7);
\path (d7) edge			       node	   {$?\ttdol\:!\ttdol$} (d8);

\node at (0,6)	  [state]	 (x0) {};
\node at (1.5,6)  [state]	 (x1) {};
\node at (3,6)	  [state]	 (x2) {};
\node at (4.5,6)  [state]	 (x3) {};
\node at (6,6)	  [state]	 (x4) {};
\node at (7.5,6)  [state,dashed] (x5) {};
\node at (9,6)	  [state,dashed] (x6) {};
\node at (10.5,6) [state]	 (x7) {};
\node at (12,6)	  [state]	 (x8) {};

\node at ($(x5)!0.5!(x6)$) [minimum size=0em,inner sep=0.4em] (xdots) {\Large $\cdots$};

\path (x0) edge [bend left=20]	       node	   {$?0\:!0$}	(x1);
\path (x0) edge [bend right=20]	       node [swap] {$?1\:!1$}	(x1);
\path (x1) edge [color=red]	       node	   {$?0\:!0$}	(x2);
\path (x2) edge [bend left=20]	       node	   {$?0\:!0$}	(x3);
\path (x2) edge [bend right=20]	       node [swap] {$?1\:!1$}	(x3);
\path (x3) edge [bend left=20]	       node	   {$?0\:!0$}	(x4);
\path (x3) edge [bend right=20]	       node [swap] {$?1\:!1$}	(x4);
\path (x4) edge [bend left=20]	       node	   {$?0\:!0$}	(x5);
\path (x4) edge [bend right=20]	       node [swap] {$?1\:!1$}	(x5);
\path (x6) edge [bend left=20,dashed]  node	   {$?0\:!0$}	(x7);
\path (x6) edge [bend right=20,dashed] node [swap] {$?1\:!1$}	(x7);
\path (x7) edge			       node	   {$?\ttdol\:!\ttdol$} (x8);

\node at (0,4)	  [state]	 (y0) {};
\node at (1.5,4)  [state]	 (y1) {};
\node at (3,4)	  [state]	 (y2) {};
\node at (4.5,4)  [state]	 (y3) {};
\node at (6,4)	  [state]	 (y4) {};
\node at (7.5,4)  [state,dashed] (y5) {};
\node at (9,4)	  [state,dashed] (y6) {};
\node at (10.5,4) [state]	 (y7) {};
\node at (12,4)	  [state]	 (y8) {\footnotesize $C^{\text{e}}_1$};

\node at ($(y5)!0.5!(y6)$) [minimum size=0em,inner sep=0.4em] (ydots) {\Large $\cdots$};

\path (y0) edge [bend left=20]	       node	   {$?0\:!0$}	(y1);
\path (y0) edge [bend right=20]	       node [swap] {$?1\:!1$}	(y1);
\path (y1) edge [bend left=20]	       node	   {$?0\:!0$}	(y2);
\path (y1) edge [bend right=20]	       node [swap] {$?1\:!1$}	(y2);
\path (y2) edge [bend left=20]	       node	   {$?0\:!0$}	(y3);
\path (y2) edge [bend right=20]	       node [swap] {$?1\:!1$}	(y3);
\path (y3) edge [color=red]	       node	   {$?1\:!1$}	(y4);
\path (y4) edge [bend left=20]	       node	   {$?0\:!0$}	(y5);
\path (y4) edge [bend right=20]	       node [swap] {$?1\:!1$}	(y5);
\path (y6) edge [bend left=20,dashed]  node	   {$?0\:!0$}	(y7);
\path (y6) edge [bend right=20,dashed] node [swap] {$?1\:!1$}	(y7);
\path (y7) edge			       node	   {$?\ttdol\:!\ttdol$} (y8);

\path (d0) edge (x0);
\path (x0) edge (y0);
\path (d8) edge (x8);
\path (x8) edge (y8);

\node at (0,2)	  [state,dashed] (m0) {\footnotesize $C^{\text{b}}_2$};
\node at (1.5,2)  []		 (m1) {};
\node at (12,1.5) []		 (m8) {};
\node at (9.5,0)  []		 (n6) {};
\node at (0,0.5)  []		 (n0) {};
\node at (10.5,0) [state,dashed] (n7) {};
\node at (12,0)	  [state]	 (n8) {\footnotesize $C^{\text{e}}_m$};
\path (m0) edge [dashed]		   (m1);
\path (m0) edge [dashed]		   (n0);
\path (n6) edge [dashed]		   (n7);
\path (m8) edge [dashed]		   (n8);
\path (n7) edge	 node  {$?\ttdol\:!\ttdol$} (n8);

\node at ($(m0)!0.5!(n8)$) [minimum size=0em,inner sep=0.4em] (edots) {\Large $\cdots$};

\node at (0,-2)	   [state]	  (e0) {\footnotesize $V^{\text{b}}$};
\node at (1.5,-2)  [state]	  (e1) {};
\node at (3,-2)	   [state]	  (e2) {};
\node at (4.5,-2)  [state]	  (e3) {};
\node at (6,-2)	   [state]	  (e4) {};
\node at (7.5,-2)  [state,dashed] (e5) {};
\node at (9,-2)	   [state,dashed] (e6) {};
\node at (10.5,-2) [state]	  (e7) {};
\node at (12,-2)   [state]	  (e8) {\footnotesize $V^{\text{e}}$};

\node at ($(e5)!0.5!(e6)$) [minimum size=0em,inner sep=0.4em] (edots) {\Large $\cdots$};

\path (e0) edge [bend left=20]	       node	   {$?0$}  (e1);
\path (e0) edge [bend right=20]	       node [swap] {$?1$}  (e1);
\path (e1) edge [bend left=20]	       node	   {$?0$}  (e2);
\path (e1) edge [bend right=20]	       node [swap] {$?1$}  (e2);
\path (e2) edge [bend left=20]	       node	   {$?0$}  (e3);
\path (e2) edge [bend right=20]	       node [swap] {$?1$}  (e3);
\path (e3) edge [bend left=20]	       node	   {$?0$}  (e4);
\path (e3) edge [bend right=20]	       node [swap] {$?1$}  (e4);
\path (e4) edge [bend left=20]	       node	   {$?0$}  (e5);
\path (e4) edge [bend right=20]	       node [swap] {$?1$}  (e5);
\path (e6) edge [bend left=20,dashed]  node	   {$?0$}  (e7);
\path (e6) edge [bend right=20,dashed] node [swap] {$?1$}  (e7);
\path (e7) edge			       node	   {$?\ttdol$} (e8);

\node at (12.5,10) [label=right:{Write some valuation $v$}]				    (lc) {} ;
\node at (12.5,6)  [label=right:{Check $v\models C_1$ ($\equiv\:v_1\lor\neg v_2\lor v_4$)}] (lx) {} ;
\node at (12.5,1)  [label=right:{Check $v\models C_2\land \cdots\land C_m$}]		    (lm) {};
\node  at (12.5,-2)  [label=right:{Check no losses occurred}] {};

\draw (c8) .. controls (d8) and (c0) .. (d0);
\draw (y8) .. controls (m8) and (y0) .. (m0);
\draw (n8) .. controls (e8) and (n0) .. (e0);

  \end{tikzpicture}
}
\caption{LCM $S_\phi$ for satisfiability of $\phi = (v_1\lor \neg v_2\lor v_4)\land C_2\cdots \land C_m$.}
\label{fig-reduction-SAT2LCM}
\end{figure}
 
It is now clear that $(I^{\text{b}},\epsilon) \step{*}
(V^{\text{e}},\epsilon)$ in $S_\phi$ if, and only if, $\phi$ is
satisfiable. The reasoning holds for lossy LCMs and for reliable FIFO
automata.  We have thus reduced $\SAT$ to the reachability problem for
both types of acyclic machines.

\begin{remark}
The construction of $S_\phi$ can be simplified at the cost of making
the reduction perhaps less obviously correct: one can either omit the
end-marker symbol $\ttdol$ since in the end the machine checks that no
message was lost (thus a binary alphabet suffices), or one can stop
the machine at $C_m^{\text{e}}$, getting rid of the $V^{\text{b}}$ to
$V{\text{e}}$ part, since the markers ensure that the valuation read
while checking a clause $C_i$ is indeed the full valuation written at
the previous stage.
\qed
\end{remark}

For hardness of nontermination and unboundedness
we adapt the previous reduction by adding a single cycle
$V^{\text{e}}\step{!\ttdol}V^{\text{e}}$ on the last
control location. Starting from $(I^{\text{b}},\epsilon)$, the modified
$S_\phi$ has an infinite run iff it has an unbounded run iff $\phi$ is
satisfiable.
\\

The above reductions adapt to flat VASSes and lossy VASSes, i.e.,
channel machines with unary alphabet, provided that we allow $2n$
channels (or counters) for a valuation on $n$ Boolean variables.

\subsection{Proof of \Cref{thm-hard-single-path}: $\NP$-hardness for single-path machines}
\label{app-nphard-single-path}

We first show hardness for reachability.  For this we reduce from
$\SAT$.  So let us consider a 3CNF formula $\phi$ with Boolean
variables among $V=\{v_1,\ldots,v_n\}$. Let us say $\phi=(v_2\lor\neg
v_3\lor\neg v_n)\land C_2\land \cdots\land C_m$, with $m$ clauses.

With $\phi$ we associate $S_\phi$, the single-path flat LCM described in
\Cref{fig-reduction-singlepath}.
This LCM has $O(m n^2)$ control locations\footnote{Our reduction
insists on using only one channel. With multiple channels the same
idea would use $O(n+m)$ control locations.}, and is organised as a
series of distinct operations on the channel contents.
\begin{figure}[htbp]
\centering
$\!\!\!\!\!\!\!\!\!\!\!$\scalebox{0.75}{
  \begin{tikzpicture}[->,>=stealth',shorten >=1pt,node distance=6em,thick,auto,bend angle=30]
\tikzstyle{every state}=[minimum size=1.3em,inner sep=0em]

\node at (2,10)	    [state,initial] (c0)      {\tiny $0$};
\node at (5.2,10)   [state]	    (c1a)     {\tiny $0,1$};
\node at (7.15,10)  [state]	    (c1b)     {};
\node at (8.15,10)  [state]	    (c1c)     {};
\node at (9.15,10)  [state]	    (c2a)     {\tiny $0,2$};
\node at (11.1,10)  [state]	    (c2b)     {};
\node at (12.1,10)  [state]	    (c2c)     {};
\node at (13.1,10)		    (c3ghost) {};
\node at (14.05,10)		    (cnghost) {};
\node at (15.05,10) [state]	    (cna)     {\tiny $0,n$};
\node at (17,10)    [state]	    (cnb)     {};
\node at (18,10)    [state]	    (cnc)     {};

\node at ($(c3ghost)!0.5!(cnghost)$) [minimum size=0em,inner sep=0.4em] (Cdots) {\Large $\cdots$};

\path (c0)	edge	      node  {$!\ttv_10\ttv_20\cdots\ttv_n0$} (c1a);
\path (c1a)	edge	      node  {$?\ttv_1 \: !\ttv_1$}	     (c1b);
\path (c1b)	edge	      node  {}				     (c1c);
\path (c1c)	edge	      node  {}				     (c2a);
\path (c2a)	edge	      node  {$?\ttv_2 \: !\ttv_2$}	     (c2b);
\path (c2b)	edge	      node  {}				     (c2c);
\path (c2c)	edge [dashed] node  {}				     (c3ghost);
\path (cnghost) edge [dashed] node  {}				     (cna);
\path (cna)	edge	      node  {$?\ttv_n \: !\ttv_n$}	     (cnb);
\path (cnb)	edge	      node  {}				     (cnc);

\path (c1b) edge [loop above]		node  {$?0 \: !0$} (c1b);
\path (c1c) edge [loop above,color=red] node  {$?0 \: !1$} (c1c);
\path (c2b) edge [loop above]		node  {$?0 \: !0$} (c2b);
\path (c2c) edge [loop above,color=red] node  {$?0 \: !1$} (c2c);
\path (cnb) edge [loop above]		node  {$?0 \: !0$} (cnb);
\path (cnc) edge [loop above,color=red] node  {$?0 \: !1$} (cnc);

\node at (0,8)	  [state]	 (d11) {\tiny $1,1$};
\node at (1.8,8)  [state]	 (d12) {};
\node at (2.8,8)  [state]	 (d13) {};
\node at (3.8,8)  [state]	 (d21) {\tiny $1,2$};
\node at (5.6,8)  [state]	 (d22) {};
\node at (6.6,8)  [state]	 (d23) {};
\node at (7.6,8)  [state]	 (d31) {\tiny $1,3$};
\node at (9.4,8)  [state]	 (d32) {};
\node at (10.4,8) [state]	 (d33) {};
\node at (11.4,8) [state]	 (d41) {\tiny $1,4$};
\node at (13.2,8) [state,dashed] (d42) {};
\node at (15.2,8) [state]	 (dn1) {\tiny $1,n$};
\node at (17,8)	  [state]	 (dn2) {};
\node at (18,8)	  [state]	 (dn3) {};

\node at ($(d42)!0.5!(dn1)$) [minimum size=0em,inner sep=0.4em] (Ddots) {\Large $\cdots$};

\path (d11) edge  node	{$?\ttv_1 \: !\ttv_1$} (d12);
\path (d12) edge  node	{}		       (d13);
\path (d13) edge  node	{}		       (d21);
\path (d21) edge  node	{$?\ttv_2 \: !\ttv_2$} (d22);
\path (d22) edge  node	{}		       (d23);
\path (d23) edge  node	{}		       (d31);
\path (d31) edge  node	{$?\ttv_3 \: !\ttv_3$} (d32);
\path (d32) edge  node	{}		       (d33);
\path (d33) edge  node	{}		       (d41);
\path (d41) edge  node	{$?\ttv_4 \: !\ttv_4$} (d42);
\path (dn1) edge  node	{$?\ttv_n \: !\ttv_n$} (dn2);
\path (dn2) edge  node	{}		       (dn3);

\path (d12) edge [loop above]		node			  {$?0 \: !0$}	   (d12);
\path (d13) edge [loop above]		node			  {$?1 \: !1$}	   (d13);
\path (d22) edge [loop above]		node			  {$?0 \: !0$}	   (d22);
\path (d23) edge [loop above,color=red] node [shift={(1ex,0ex)}]  {$?1 \: !1\ttx$} (d23);
\path (d32) edge [loop above,color=red] node [shift={(-1ex,0ex)}] {$?0 \: !0\ttx$} (d32);
\path (d33) edge [loop above]		node			  {$?1 \: !1$}	   (d33);
\path (dn2) edge [loop above,color=red] node [shift={(-1ex,0ex)}] {$?0 \: !0\ttx$} (dn2);
\path (dn3) edge [loop above]		node			  {$?1 \: !1$}	   (dn3);

\node at (0,6)	  [state]	       (e0)	  {};
\node at (0.9,6)  [state]	       (e11)	  {\tiny $2,1$};
\node at (1.8,6)  [state]	       (e12)	  {};
\node at (2.7,6)  [state]	       (e13)	  {};
\node at (3.6,6)  [state]	       (e14)	  {};
\node at (4.5,6)  [state]	       (e15)	  {};
\node at (5.4,6)  [state]	       (e16)	  {};
\node at (6.9,6)  [state]	       (e21)	  {\tiny $2,2$};
\node at (7.8,6)  [state]	       (e22)	  {};
\node at (8.7,6)  [state]	       (e23)	  {};
\node at (9.6,6)  [state]	       (e24)	  {};
\node at (10.5,6) [state]	       (e25)	  {};
\node at (11.4,6) [state]	       (e26)	  {};
\node at (12.3,6) [minimum size=0.1em] (e27ghost) {};
\node at (13.5,6) [state]	       (en1)	  {\tiny $2,n$};
\node at (14.4,6) [state]	       (en2)	  {};
\node at (15.3,6) [state]	       (en3)	  {};
\node at (16.2,6) [state]	       (en4)	  {};
\node at (17.1,6) [state]	       (en5)	  {};
\node at (18,6)	  [state]	       (en6)	  {};

\node at ($(e27ghost)!0.35!(en1)$) [minimum size=0em,inner sep=0.4em] (Edots) {\Large $\cdots$};

\path (e0) edge (e11);
\path (e11) edge (e12);
\path (e12) edge (e13);
\path (e13) edge (e14);
\path (e14) edge (e15);
\path (e15) edge (e16);
\path (e16) edge (e21);
\path (e21) edge (e22);
\path (e22) edge (e23);
\path (e23) edge (e24);
\path (e24) edge (e25);
\path (e25) edge (e26);
\path (e26) edge [dashed] (e27ghost);
\path (en1) edge (en2);
\path (en2) edge (en3);
\path (en3) edge (en4);
\path (en4) edge (en5);
\path (en5) edge (en6);

\path (e0)  edge [loop above,color=red] node  {$?\ttx \: !\ttx$}	     (e0);
\path (e11) edge [loop below]		node  {$?\ttv_1 \: !\ttv_1$}	     (e11);
\path (e12) edge [loop above,color=red] node  {$?\ttv_1\ttx \: !\ttx\ttv_1$} (e12);
\path (e13) edge [loop below]		node  {$?0 \: !0$}		     (e13);
\path (e14) edge [loop above,color=red] node  {$?0\ttx \: !\ttx0$}	     (e14);
\path (e15) edge [loop below]		node  {$?1 \: !1$}		     (e15);
\path (e16) edge [loop above,color=red] node  {$?1\ttx \: !\ttx 1$}	     (e16);
\path (e21) edge [loop below]		node  {$?\ttv_2 \: !\ttv_2$}	     (e21);
\path (e22) edge [loop above,color=red] node  {$?\ttv_2\ttx \: !\ttx\ttv_2$} (e22);
\path (e23) edge [loop below]		node  {$?0 \: !0$}		     (e23);
\path (e24) edge [loop above,color=red] node  {$?0\ttx \: !\ttx0$}	     (e24);
\path (e25) edge [loop below]		node  {$?1 \: !1$}		     (e25);
\path (e26) edge [loop above,color=red] node  {$?1\ttx \: !\ttx 1$}	     (e26);
\path (en1) edge [loop below]		node  {$?\ttv_n \: !\ttv_n$}	     (en1);
\path (en2) edge [loop above,color=red] node  {$?\ttv_n\ttx \: !\ttx\ttv_n$} (en2);
\path (en3) edge [loop below]		node  {$?0 \: !0$}		     (en3);
\path (en4) edge [loop above,color=red] node  {$?0\ttx \: !\ttx0$}	     (en4);
\path (en5) edge [loop below]		node  {$?1 \: !1$}		     (en5);
\path (en6) edge [loop above,color=red] node  {$?1\ttx \: !\ttx 1$}	     (en6);

\node at (0,2)	  [state]	 (f0)	    {};
\node at (1.5,2)  [state]	 (f11)	    {\tiny $3,1$};
\node at (3.5,2)  [state]	 (f12)	    {};
\node at (5,2)	  [state]	 (f13)	    {};
\node at (6.5,2)  [state]	 (f21)	    {\tiny $3,2$};
\node at (8.5,2)  [state]	 (f22)	    {};
\node at (10,2)	  [state]	 (f23)	    {};
\node at (11.5,2) [state,dashed] (f31ghost) {};
\node at (14.5,2) [state]	 (fn1)	    {\tiny $3,n$};
\node at (16.5,2) [state]	 (fn2)	    {};
\node at (18,2)	  [state]	 (fn3)	    {};

\node at ($(f31ghost)!0.5!(fn1)$) [minimum size=0em,inner sep=0.4em] (Edots) {\Large $\cdots$};

\path (f0) edge [color=red] node  {$?\ttx$} (f11);
\path (f11) edge node {$?\ttv_1 \: !\ttv_1$} (f12);
\path (f12) edge (f13);
\path (f13) edge (f21);
\path (f21) edge node {$?\ttv_2 \: !\ttv_2$} (f22);
\path (f22) edge (f23);
\path (f23) edge [dashed] (f31ghost);
\path (fn1) edge node {$?\ttv_n \: !\ttv_n$} (fn2);
\path (fn2) edge (fn3);

\path (f12) edge [loop above] node  {$?0 \: !0$} (f12);
\path (f13) edge [loop above] node  {$?1 \: !1$} (f13);
\path (f22) edge [loop above] node  {$?0 \: !0$} (f22);
\path (f23) edge [loop above] node  {$?1 \: !1$} (f23);
\path (fn2) edge [loop above] node  {$?0 \: !0$} (fn2);
\path (fn3) edge [loop above] node  {$?1 \: !1$} (fn3);

\node at (-1,10) [minimum size=0em] (lab0)    {L$_0$};
\node at (-1,8)	 [minimum size=0em] (lab1)    {L$_1$};
\node at (-1,6)	 [minimum size=0em] (lab21)   {L$_{2,1}$};
\node at (-1,2)	 [minimum size=0em] (lab3)    {L$_3$};
\node at (-1,0)	 [minimum size=0em] (labrest) {$\vdots$};
\node at (18,0)	 [state]	    (qf)      {$\ttf$};

\node at ($(lab21)!0.5!(lab3)$) [minimum size=0em] (labdots){$\vdots$};
\node [right of=labdots,node distance=2em,label=right:{$\cdots\;\;\;$ Repeat line above $2n-1$ times $\;\;\;\cdots$}]  (moreL3) {};
\node [right of=labrest,node distance=2em,label=right:{$\cdots\;\;\;$ Repeat lines L$_1$ to L$_3$ ($2n+2$ lines each time) for remaining clauses $C_2,\ldots,C_m$ $\;\;\;\cdots$}] (moreL2-4){};

\draw (cnc) .. controls (18,9.5) .. (7,9.5) .. controls (0,9.5) .. (d11);
\draw (dn3) .. controls (18,7.5) .. (7,7.5) .. controls (-1.2,7.5) .. (e0);
\draw (en6) .. controls (18,4.5) .. (7,4.5) .. controls (-1,4.5) and (-1,4) .. (0,4);
\draw [dashed] (10,3.5) .. controls (10,3.5) .. (6,3.5) .. controls (-0.5,3.5) and (-0.5,2.5) .. (f0);
\draw [dashed] (fn3) .. controls (18,1) ..  (9,1) .. controls (-0.5,1) and (-0.5,0.5) .. (-0.1,0.1);
\path (16,0) edge [dashed] (qf);

  \end{tikzpicture}
}
\caption{Single-path LCM for satisfiability of $\phi = (v_2\lor \neg v_3\lor \neg v_n)\land C_2\cdots \land C_m$.}
\label{fig-reduction-singlepath}
\end{figure}
 The operations
are grouped in lines and we describe them informally.
\begin{description}

\item[L$_0$, choosing a valuation nondeterministically:] $S_\phi$
  first write $\ttv_10\ttv_20\ldots\ttv_n0$ on the channel. This is
  our encoding for the valuation that is $0$ for all variables. Then
  $S_\phi$ reads the valuation and write it back, possibly changing any $0$
  value with a $1$ (this happens at the red-coloured actions), and
  thus picking an arbitrary valuation nondeterministically. Here we
  see how the $\ttv_1,\ldots,\ttv_n$ markers are used to check
  positions inside the valuation.

\item[L$_1$, marking where clause $C_1$ is validated:] $S_\phi$ now
  checks whether the valuation stored on the channel makes $C_1$ true. In
  this example, we assume that $C_1$ is $v_2\lor\neg v_3\lor\neg
  v_n$. Again $S_\phi$ reads the valuation and writes it
  back. However, if it reads $\ttv_2 1$ or $\ttv_3 0$ or $\ttv_n 0$, it
  writes it back followed by a special checkmark symbol $\ttx$ that
  ``means $C_1$ has been validated'' (see red actions).  Note that as
  many as 3 occurrences of $\ttx$ can be inserted in the encoding of
  the valuation.

\item[L$_{2,1}$, pushing $\ttx$ to the head of the valuation
encoding:] $S_\phi$ now pushes any checkmark symbol to the left. This
  is done along the $L_{2,1}$ line. While the valuation is read and
  written back as usual (black actions), any symbol preceding a $\ttx$
  can swap position with it (red actions).

\item[L$_{2,2}$, \ldots, L$_{2,2n}$, more pushing $\ttx$ to the left:]
  this behaviour is repeated $2n$ times in total, so that any $\ttx$
  can be pushed completely to the left of the valuation. In case of
  multiple occurrences of $\ttx$, we just need one of them to reach
  the head of the valuation so we assume that the other ones will just
  be lost.

\item[L$_3$, checking that clause $C_1$ has been validated:] Now
  $S_\phi$ knows where to expect $\ttx$. The machine can only proceed
  if indeed a $\ttx$ is present in the channel, in front of the
  valuation, and thus if the valuation on the channel satisfies
  $C_1$. The rest of the line reads and writes back the valuation,
  clearing it of any remaining $\ttx$'s.

\item[Same treatment for the remaining clauses $C_2,\ldots,C_m$:]
$S_\phi$ now continues with similar locations and rules checking that
  the remaining clauses are validated.
\end{description}

Note that, once the valuation has been picked nondeterministically (in
L$_1$), it cannot be modified. Also note that the machine will block
if one of the $\ttv_i$ markers is lost before the last clause has been
validated. If one of the $0/1$ values of the valuation is lost, this
value cannot be used any more for checkmarking a validated
clause. Such message losses do not lead to any incorrect behaviour,
they can only hinder the validation of a clause.

Finally, starting from $(0,\epsilon)$, $S_\phi$ can reach its final
location $\ttf$ iff $\phi$ is satisfiable.
\\

Now the reduction extends to show prove $\NP$-hardness of
unboundedness for single-path LCMs with exactly the same adaptation as
in the proof for acyclic LCMs. For hardness of nontermination
a little more work is needed since every cycle where $S_\phi$ reads the
valuation and writes it back could become a nonterminating cycle if all
but one letter are lost. One possible trick to overcome this is to have two
copies of the alphabet, say of two different colours, and to ensure
that in all its phases the machine reads in one colour and writes back
in the other, so that the valuation is always read and written in
alternating colours. Once this is implemented, the system cannot have
infinite runs as is. Adding a single loop on $\ttf$, the
final control location, as we did
for acyclic LCMs, now provides a correct reduction
from $\SAT$ to nontermination for single-path LCMs.
\\

The idea behind this reduction can easily be adapted so that it
applies to single-path VASSes and lossy VASSes, or equivalently, to
channel machines with a  unary alphabet. One uses
$2n$ channels (or counters) for storing the valuation and $m$ distinct
counters for marking the clauses that have been validated.

Restricting to a binary alphabet on a single channel is equally easy
for reliable FIFO automata, but more difficult when message losses have to
be taken care of. Therefore we won't attempt it in this preliminary version.

\section{Multiple channels}
\label{app-multiple-channels}

The analysis we conducted in
\Cref{sec-backward-reachability} carries over
without any difficulty to systems with multiple channels.
\Cref{lem-up-sigma-um} and \Cref{thm-yk-pk-lk} remain valid since,
once $\sigma$ and $k$ have been fixed, computing
$\pr[\sigma^k](\tuple{x_1,\ldots,x_c})$ for a system with $c$ channels
can be done independently for each of the $c$ channels: one only needs
to distribute the actions on $\sigma$ to their corresponding channel,
so that $\rea(\sigma)$ now is some tuple $\tuple{u_1,\ldots,u_c}$. In
particular the bound in \Cref{thm-L-sigma-x} becomes
\[
L\bigl(\sigma,\tuple{x_1,\ldots,x_c}\bigr) \leq
\max\nolimits_{i=1}^c |x_i| \cdot (|u_i|+1)\:,
\;\;\text{ where }
 \rea(\sigma)=\tuple{u_1,\ldots,u_c}
\:.
\]

\end{document}